\documentclass[a4paper,twocolumn]{article}

\usepackage{amsmath,amssymb}

\usepackage{anyfontsize}

\usepackage[utf8]{inputenc}
\usepackage[T1]{fontenc}

\usepackage[sfdefault]{biolinum}
\usepackage[sansmath]{libertinust1math}
\usepackage{stmaryrd}

\allowdisplaybreaks

\usepackage[a4paper, left=.5in,right=.5in,top=1in,bottom=1in]{geometry}

\usepackage{cite}
\usepackage{algorithmic}
\usepackage{algorithm}
\usepackage{graphicx}
\usepackage{braket}
\usepackage{url}
\usepackage{booktabs}
\usepackage[justification=centering]{caption}
\usepackage{subcaption}
\usepackage{xcolor} 
\usepackage{diagbox}
\usepackage{multirow}
\usepackage{makecell}
\usepackage{rotating}

\usepackage{hyperref}
\usepackage{cleveref}
\hypersetup{
  colorlinks=true,
  linkcolor=blue,
  filecolor=magenta,      
  urlcolor=cyan
}

\usepackage{flushend}

\usepackage[english]{babel}

\newcommand{\ceil}[1]{\lceil{#1}\rceil}
\newcommand{\floor}[1]{\lfloor{#1}\rfloor}

\usepackage{amsthm}

\newtheorem{theorem}{Theorem}[section]
\newtheorem{corollary}{Corollary}[theorem]
\newtheorem{lemma}[theorem]{Lemma}

\theoremstyle{definition}
\newtheorem{problem}{Problem}[theorem]

\title{Reducing the Depth of Linear Reversible Quantum
  Circuits\footnote{The published version can be found in \textit{IEEE
      Transactions on Quantum Engineering}, vol. 2, pp. 1-22, 2021,
    Art no. 3102422, 
\href{https://dx.doi.org/10.1109/TQE.2021.3091648}{doi:10.1109/TQE.2021.3091648}.}}
\author{
Timothée Goubault de Brugière$^{1,3}$,
Marc Baboulin$^{1}$, Benoît Valiron$^{2}$,\\
Simon Martiel$^{3}$ and Cyril Allouche$^{3}$
}
\date{%
  $^{1}${\textit{Laboratoire de Recherche en Informatique, Université Paris-Saclay}, 
Orsay, France}\\
$^2${\textit{Laboratoire de Recherche en Informatique, CentraleSupélec}, 
Orsay, France}\\
$^3${\textit{Atos Quantum Lab}, Les Clayes-sous-Bois, France}}

\begin{document}

\maketitle

\begin{abstract}
In quantum computing the decoherence time of the qubits determines the computation time available and this time is very limited when using current hardware. In this paper we minimize the execution time (the depth) for a class of circuits referred to as linear reversible circuits, which has many applications in quantum computing (e.g., stabilizer circuits, ``CNOT+T'' circuits, etc.). We propose a practical formulation of a divide and conquer algorithm that produces quantum circuits that are twice as shallow as those produced by existing algorithms. We improve the theoretical upper bound of the depth in the worst case for some range of qubits. We also propose greedy algorithms based on cost minimization to find more optimal circuits for small or simple operators. Overall, we manage to consistently reduce the total depth of a class of reversible functions, with up to $92\%$ savings in an ancilla-free case and up to $99\%$ when ancillary qubits are available. 
\end{abstract}

\section{Introduction}

Quantum computing is getting closer to the moment when it will be able to solve problems
insoluble using current computers. The manipulation of qubits is increasingly controlled, quantum gates
are performed with better fidelity and works for achieving quantum supremacy have been proposed \cite{arute2019quantum,zhong2020quantum}, even though the significance of such works remains highly debated \cite{ibm_blog, aaronson_blog}.

In addition to the noise inherent in manipulating qubits, there is another phenomenon to control: quantum decoherence. The qubits must remain isolated from the outside world during the execution of the algorithm or else they may interact unintentionally
with external elements which would distort the results. It is still difficult to isolate these qubits for a long time.
If a hardware improvement is possible, it is also possible to compress the set of instructions so that their execution
takes less time. These instructions are usually represented by a quantum circuit and, assuming that two non-overlapping gates can be executed in parallel, the execution time of the circuit is strongly related to its depth. Thus, the proper execution of complex algorithms can be significantly facilitated by optimizing the depth of quantum circuits.

In a fault-tolerant computational model, the T gate is the most expensive gate to implement \cite{bravyi2005universal}. As a consequence, during these last years a lot of efforts have been made to minimize this resource, whether it is the T-count \cite{2058-9565-4-1-015004,amy2019t,kissinger2020reducing,zhang2019optimizing} or the T-depth \cite{DBLP:journals/tcad/AmyMM14}. However, these optimizations often come with an increase of other resources, especially the number of CNOTs or the total depth of the circuit. Such an additional cost is not negligible and can affect the final outputs of a quantum algorithm, see \cite{maslov2016optimal} for more details about why the cost of a quantum circuit should not be reduced to the T-cost. It is therefore important to minimize the secondary resources as well, if possible while keeping the T-cost unchanged.

Work in this direction has been carried out recently. It has mainly focused on the number of CNOTs but in a NISQ (Noisy Intermediate-Scale Quantum) setting. Overall, a significant decrease of the CNOT count is reported but with an increase of the T-depth \cite{amy2018controlled}. With NISQ computers the T-depth is less important to optimize because the T-gate is not implemented fault-tolerantly, but for fault-tolerant computations we have to find a better compromise between the T-depth and the CNOT cost. Our goal in this paper is to achieve this better compromise by improving the depth of quantum circuits while keeping the T-depth as low as possible.

For this, we are interested in the optimization of a subclass of circuits called ``linear reversible circuits''. 
These circuits can be rewritten with only CNOT gates and have direct applications in other more 
complex circuit structures such as stabilizer circuits or CNOT+T circuits, two classes of circuits that have shown crucial utility in the design of efficient
quantum compilers \cite{2058-9565-4-1-015004,DBLP:journals/tcad/AmyMM14} and error
correcting codes \cite{gottesman1997stabilizer}. Hence, the synthesis of CNOT circuits occurs naturally in general quantum compilers and can be used as a first approach for optimizing general circuits.

In this article we present two kinds of algorithms for the synthesis of linear reversible circuits. Our contributions are the following : 
\begin{itemize}
\item We present DaCSynth, a practical implementation of a divide and conquer framework that divides
  the synthesis into parallelizable sub-problems that can be solved
  with several strategies.
\item We give strict upper bounds on the depth of the circuits. First
  we prove that, in all generality, the depth is upper bounded by
  $2n +2\ceil{\log_2(n)}$ where $n$ is the number of qubits. Then we present a specific strategy that
  gives the upper bound
  $\frac{4}{3}n +8\ceil{\log_2(n)}$. This is an
  improvement over the best algorithms in the literature for medium
  sized registers (between a few hundreds and several thousands qubits, see Table~\ref{summary_bounds} for more details).
\item We present greedy methods based on cost minimization techniques. They are complementary with DaCSynth in the sense that
they are best suited for small problem sizes or best-case scenario while DaCSynth is better for large problems or worst-case operators.
\item We propose an extension to the case where encoded ancillary qubits are used.
\item We also give benchmarks of our method to support our theoretical
  results and compare them to state-of-the-art algorithms. In a worst case, DaCSynth provides circuits of depth smaller than $n$ where $n$ is the
  number of qubits. This improves the state-of-the-art algorithms by a factor of 2. For small or best-case operators, the greedy methods provide almost optimal results.
\item Finally, we apply our algorithms to the optimization of a class of reversible functions, with and without ancillary qubits. Starting from a circuit with optimized T-depth, we re-synthesize every chunk of purely CNOT circuits. We manage to consistently reduce the total depth of the circuits while keeping the T-count and the T-depth unchanged. Overall, we reduce the depth in average by $47$\% ($58$\% with ancillary qubits) and up to $92$\% ($99$\% with ancillary qubits).
\end{itemize}

The plan of this paper is the following: in Section \ref{sec::background} we present some background about the synthesis of linear reversible circuits. In Section \ref{sec::algo} we describe a new divide and conquer algorithm and give some strict upper bounds on the depth of the circuits synthesized by our method. In Section \ref{sec::greedy} we describe the greedy algorithms based on cost minimization techniques. We take into account encoded ancillary qubits in Section~\ref{sec::extension}. Benchmarks are given in Section~\ref{sec::bench}. We discuss some future work in Section~\ref{sec::discussion} and we conclude in Section~\ref{sec::conclusion}.

\begin{table*}[tb]
  \centering
\begin{tabular}{c|cccccccc} \toprule
    Method & Gaussian  & \cite{DBLP:journals/cjtcs/KutinMS07} & \cite{DBLP:conf/soda/JiangSTW0Z20} & Our algorithm \\ 
    & elimination & & & \\ \midrule
    Upper bound   & $4n$ & $2n$ & $\mathcal{O}\left(\frac{n}{\log_2(n)}\right)$  & $ \frac{4}{3}n+8\log_2(n)$ \\
    Best result for $n$ such that & - & $ n<75$ & $  n > 1,345,000$ & $  75<n<1,345,000$ \\
\end{tabular}
\caption{Synthesis algorithms and theoretical upper bounds with the approximate ranges of validity for each method.}
\label{summary_bounds}
\end{table*}

\section{Background and state of the art} \label{sec::background}
\hfill \vspace*{-0.5cm}
\subsection{Notion of linear reversible function}
Let $\mathbb{F}_2$ be the Galois field of two elements. A
Boolean function $f : \mathbb{F}_2^n \to \mathbb{F}_2$ is
said to be linear if
\[
  f(x_1 \oplus x_2) = f(x_1) \oplus f(x_2)
\]
for any $x_1, x_2 \in \mathbb{F}_2^n$ where $\oplus$ is the bitwise
XOR operation. Let $e_k$ be the $k$-th canonical vector of
$\mathbb{F}_2^n$. By linearity we can write for any
$x = \sum_k \alpha_k e_k$ (with $\alpha_k \in \{0,1\}$)
\[
  f(x) = f\left(\sum_{k} \alpha_k e_k\right) = \sum_k \alpha_k
  f(e_k)
\]
and the function $f$ can be represented with a
column vector $\vec\alpha = [f(e_1), ..., f(e_n)]^T$ such
that
\(
  f(x) = \alpha \cdot x,
\)
where $\cdot$ stands for the scalar product on $F_2^n$ and
  $(-)^T$ is the matrix-transpose operation. This easily 
  extends to the $n$-input $m$-outputs functions $f : \mathbb{F}_2^n \to \mathbb{F}_2^m$ 
  where $f$ is defined by an $m \times n$ Boolean matrix $A$ such that 
\( f(x) = Ax. \)

In the case of reversible Boolean functions, $n=m$ and we have a
one-to-one correspondence between the inputs and the outputs. We
then consider $n$-inputs $n$-outputs functions $f$ for which the
equation
\( y = f(x) = Ax \) must have a unique solution for any
$y \in \mathbb{F}_2^n$. In other words the matrix $A$ must be
invertible in $\mathbb{F}_2$ and there is a one-to-one correspondence
between the linear reversible functions
of arity $n$ and the invertible Boolean matrices of size
$n$. This can be used to count the number of different linear
reversible functions of $n$ inputs (see, e.g., \cite{patel2008optimal}).
The application of two successive operators $A$ and $B$ is equivalent to the application of the operator product $BA$.

\subsection{Synthesis of linear reversible Boolean\texorpdfstring{\\*}{} functions}

We are interested in synthesizing general linear reversible Boo\-lean
functions into a reversible circuit, i.e., a series of elementary
reversible gates that can be executed on a suitable hardware. For
instance in quantum computing the Controlled-Not gate (CNOT) is used
in universal gate sets for superconducting and photonic qubits and
performs the following 2-qubit operation:
\[
  \text{CNOT}(x_1, x_2) = (x_1, x_1 \oplus x_2).
\]
Clearly the CNOT gate is a linear reversible gate. It can be shown to be
universal for linear reversible circuit synthesis: any
linear reversible function of arity at least 2 can be implemented by a reversible
circuit containing only CNOT gates. In this
paper we aim at producing CNOT-based reversible circuits for any
linear reversible functions.

In terms of matrices, a CNOT gate controlled by the line $j$ acting on
line $i \neq j$ can be written $E_{ij} = I + e_{ij}$ where $I$ is the
identity matrix and $e_{ij}$ the elementary matrix with all entries
equal 0 but the component $(i,j)$ of value $1$.  

Generally the synthesis of an operator is done by reducing it to the identity operator.
In our case we want to compute a sequence of 
$N$ elementary matrices such that 
\[ \prod_{k=1}^N E_{i_k,j_k}A = I. \]
Finally, using the fact that $E_{ij}^{-1} = E_{ij}$, we get
\[ A = \prod_{k=N}^1 E_{i_k,j_k}\]
and a circuit implementing $A$ is given by concatenating 
the CNOT gates with control $j_k$ and target $i_k$. 

This can be generalized to the case where we allow both left and right multiplication by elementary matrices and the possibility to permute the rows and columns of $A$ before and after the reduction to the identity operator. In other words, we look for two sequences of elementary matrices (of size $N_1$ and $N_2$) and three permutation matrices $P, P_1, P_2$ such that 
\[ \prod_{k=1}^{N_1} E_{i_k,j_k}P_1AP_2\prod_{k=1}^{N_2} E_{i_k,j_k} = P. \]
Even with such generalization, it is still possible to rearrange the product to write
\[ A = P' \times \prod_{k=1}^{N} E_{i_k,j_k}\]
where $N=N_1+N_2$ and $P'$ is a permutation matrix. We read this as a CNOT circuit followed by a qubit permutation. 

Left-multiplying the operator $A$ by $E_{ij}$ performs an elementary row operation:
\[
  r_i \leftarrow r_i \oplus r_j,
\]
writing $r_k$ for the $k$-th row of
$A$. Similarly, right-multiplying the operator $A$ by $E_{ij}$ performs an elementary column operation:
\[
  c_j \leftarrow c_i \oplus c_j,
\]
writing $c_k$ for the $k$-th row of
$A$.

Thus, synthesizing a linear reversible function into a CNOT-based
reversible circuit is equivalent to transforming an invertible
Boolean matrix $A$ to the identity by applying elementary row and column operations. 
For the rest of the paper we will consequently
privilege this more abstract point of view because it gives more
freedom and often appears clearer for the design of algorithms. We
note by $\text{Row}(i,j)$ the elementary row operation
$r_j \leftarrow r_i \oplus r_j$ and $\text{Col}(i,j)$ the elementary column operation
$c_j \leftarrow c_i \oplus c_j$.

In order to evaluate the quality of a synthesis of a linear reversible
circuit a couple of metrics can be considered. The size of the circuit
given by its number of CNOT gates is a first one: this gives the
total number of instructions the hardware has to perform to execute
the circuit. Due to the presence of noise when executing every logical
gate, it is of interest to have the shortest circuit possible. In this paper
we focus on the second metric which is the depth of the circuit, i.e., the number of time steps
the hardware needs to execute the circuit if we suppose that
non-overlapping gates are executed simultaneously. The depth is
closely related to the execution time of the circuit. In quantum
computing the time available to perform computations is limited due to the short decoherence time of the qubits. Therefore it is crucial to 
be able to produce shallow circuits for complex algorithms. 

\subsection{State of the art}

In this paper we focus on improving the depth of linear reversible circuits with a full qubit connectivity, meaning that any CNOT gate between any pair of qubits can be done --- equivalently this means that any row operation is available. Recently an algorithm that produces asymptotically optimal circuits in $O(n/\log_2(n))$ was proposed\cite{DBLP:conf/soda/JiangSTW0Z20}. The theoretical depth is given approximately by the formula 
\[ d = \alpha \frac{n}{\log_2(n)} + \beta \sqrt{n}\log_2(n).\]
A detailed description of the algorithm is given in Appendix~\ref{appendix::jiang} where we estimate $\alpha$ and $\beta$ both to $20$. Thus for practical values of $n$ this algorithm does not provide shallow circuits. 

To our knowledge, for practical register sizes, four algorithms were designed. Three of them provide similar results: the standard Gaussian elimination algorithm, the skeleton circuits in \cite{maslov2007linear} and a practical algorithm proposed in \cite{DBLP:conf/soda/JiangSTW0Z20}. All give circuits for which the depth is upper bounded by $4n$. In \cite{DBLP:journals/cjtcs/KutinMS07}, Kutin \textit{et al.} proposed an algorithm for computing linear reversible circuits for the Linear Nearest Neighbor architecture (LNN) in a depth of at most $5n$. In Appendix~\ref{appendix::kutin} we describe this algorithm and we show that it can actually be extended straightforwardly to an algorithm for a full qubit connectivity and the depth of the output circuits is upper bounded by $2n$. To our knowledge this algorithm then is the best algorithm when the number of qubits does not exceed a few thousands.

\subsection{Our contributions}

We exploit the promising idea developed in
\cite{DBLP:conf/soda/JiangSTW0Z20} to use a divide-and-conquer method
in order to produce shallow circuits for reasonable sizes of registers. First we show that synthesizing an operator via a divide-and-conquer algorithm is equivalent to zeroing binary matrices with a given set of elementary operations. This provides a general framework, DaCSynth, giving another view of the problem from which new strategies can be applied. 
The algorithm in \cite{DBLP:conf/soda/JiangSTW0Z20} can be regarded as
one particular strategy for this framework. Although not the
goal of this paper, this means that we can
recover the asymptotic optimal behavior of their algorithm.

We propose two strategies to solve this new problem: the first
one --- essentially theoretical --- is a block algorithm and gives improved upper bounds on the total
depth in the worst case. The second algorithm is a greedy one and aims
at producing the shallowest circuits possible such that they can be executed on a quantum
hardware in a near future. Overall, our first algorithm produces circuits whose depth is bounded by $ \frac{4}{3}n+8\ceil{\log_2(n)}$ where $n$ is the number of qubits, improving the result in \cite{DBLP:journals/cjtcs/KutinMS07} and \cite{DBLP:conf/soda/JiangSTW0Z20} for intermediate sized problems. A summary of the theoretical results of the different algorithms is given in Table~\ref{summary_bounds}. The benchmarks show that our second algorithm improves the actual depth by a factor of 2 compared to the extension of Kutin \textit{et al.}'s algorithm and synthesizes circuits of depth $n$ in the worst case.

We also study the use of purely greedy algorithms. The global idea is to use cost minimization techniques with different cost functions to find "quickly" a shallow circuit. Greedy methods generally give good results for small problem sizes or for simple operators, but at the cost of no theoretical guarantee. In our benchmarks we will observe similar characteristics: greedy methods are the best up to a certain point where their performance degrades because they do not exploit the specific structures of the problem.

Next, we extend our framework in the case where ancillary qubits are encoded outputs of the function, i.e., we want to synthesize an operator $A_{\text{out}} \in F_2^{m \times n}$ with $m > n$ with an input operator $A_{\text{in}} \in F_2^{m \times n}$. 
We propose a simple block algorithm and we show that the depth increases logarithmically with the number of ancillas.

Finally we integrate our algorithms ---DaCSynth and the greedy ones--- into the quantum compiler Tpar \cite{DBLP:journals/tcad/AmyMM14} and test our method on a set of well known reversible functions.
This gives an overview of the total depth of quantum circuits implementing important arithmetic functions like adders, multipliers, etc.

\section{The algorithm DaCSynth} \label{sec::algo}

Given an operator $A \in F_2^{n \times n}$ to synthesize, our proposed algorithm DaCSynth is a divide and conquer algorithm and consists in the following steps : 
\begin{enumerate}
	\item First compute a permutation matrix $P$ such that $PA
          = \begin{pmatrix} A_1 & A_2 \\ A_3 & A_4 \end{pmatrix}$ and
          $A_1 \in F_2^{\ceil{n/2} \times \ceil{n/2}}$ is
          invertible,
	\item Apply row operations on $A$ to zero the block $A_3$ such that the resulting matrix is $A' = \begin{pmatrix} A'_1 & A'_2 \\ 0 & A'_4 \end{pmatrix}$, 
	\item Apply row operations on $A'$ to zero the block $A'_2$ such that the resulting matrix is ${A''} = \begin{pmatrix} {A''}_1 & 0 \\ 0 & {A''}_4 \end{pmatrix}$, 
	\item Call recursively the algorithm on ${A''}_1$ and ${A''}_4$. When $n=1$ return an empty set of row operations.
\end{enumerate}

Step 1 is straightforward: consider the rows of the submatrix $A[:,1:\ceil{n/2}]$ (using Matlab notation). Start from an empty set and at each step add a row to the set. If the rank of the set is increased, keep the row otherwise remove it. If the resulting set is not of rank $\ceil{n/2}$ this would mean that the first $\ceil{n/2}$ columns of $A$ are not linearly independent which is impossible by invertibility of $A$. In addition, we assume that the qubits are fully connected so we can avoid to apply $P$ by doing a post processing on the circuit that would transfer the permutation operation directly at the end of the total circuit. This can be done without any overhead in the number of gates. Hence the core of the algorithm lies in steps 2 and 3. We now give the details for processing step 2. This can be easily transposed to do step 3 as well. 

\begin{theorem}
Given $A = \begin{pmatrix} A_1 & A_2 \\ A_3 & A_4 \end{pmatrix} \in F_2^{n \times n}$ with $A_1 \in F_2^{\ceil{n/2} \times \ceil{n/2}}$ invertible, zeroing $A_3$ by applying row operations on $A$ is equivalent to zeroing the matrix $B = A_3A_1^{-1}$ by applying any row and column operations on $B$ or flipping any entry of $B$.
\end{theorem}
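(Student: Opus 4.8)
\emph{Approach and reduction.} The plan is to move from $A$ to the ``working matrix'' $\begin{pmatrix} I \\ B\end{pmatrix}$ by a change of basis that commutes with every row operation, and then to read off a dictionary between row operations on $A$ and the three moves (row operation, column operation, entry flip) allowed on $B$. Concretely, right-multiplying $A$ by $\begin{pmatrix} A_1^{-1} & 0 \\ 0 & I\end{pmatrix}$ is a sequence of column operations $\text{Col}(\cdot,\cdot)$ among the first $\ceil{n/2}$ columns; it sends those columns to $\begin{pmatrix} I \\ A_3A_1^{-1}\end{pmatrix}=\begin{pmatrix} I \\ B\end{pmatrix}$, leaves the last $\floor{n/2}$ columns equal to $\begin{pmatrix} A_2 \\ A_4\end{pmatrix}$, and does not change whether the lower-left block vanishes (as $A_1^{-1}$ is invertible). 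Since column operations commute with row operations, zeroing $A_3$ by row operations on $A$ is equivalent to bringing $\begin{pmatrix} I \\ B\end{pmatrix}$ to the shape $\begin{pmatrix} M \\ 0\end{pmatrix}$ by row operations; here $M$ is automatically invertible because row operations preserve rank, and $A_2,A_4$ are irrelevant because row operations act on the two column blocks independently. Throughout I track $B:=(\text{lower-left block})(\text{upper-left block})^{-1}$, which is well defined as long as the upper-left block stays invertible.

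\emph{The dictionary.} Next I would split each elementary row operation $\text{Row}(i,j)$ according to whether $i,j$ index the top $\ceil{n/2}$ rows or the bottom $\floor{n/2}$ rows and compute its effect on the working matrix using $E_{ij}=I+e_{ij}$. If $i,j$ are both in the top, the operation left-multiplies the upper block by an elementary matrix, so $B$ is right-multiplied by its inverse, i.e. undergoes a column operation $\text{Col}(\cdot,\cdot)$, and the upper-left block stays invertible. If $i,j$ are both in the bottom, the operation left-multiplies the lower block by an elementary matrix, so $B$ undergoes a row operation $\text{Row}(\cdot,\cdot)$. If $i$ is in the top and $j$ in the bottom, row $i$ of the working matrix is the canonical vector $e_i$, so the operation adds $e_i$ to a row of $B$, i.e. flips a single entry of $B$. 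Conversely, each row operation, column operation, or entry flip on $B$ is realized by exactly one row operation on $A$ of the matching type, all of which preserve invertibility of the upper-left block; matching the rounds of a parallelized strategy then turns a depth-$d$ strategy on one side into a depth-$d$ strategy on the other, giving the claimed equivalence.

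\emph{The main obstacle.} The one case omitted is $\text{Row}(i,j)$ with $i$ in the bottom and $j$ in the top, which adds a lower row to an upper row: it moves the upper-left block out of its normal form and, when it keeps that block invertible, acts on $B$ as a rank-one update $B\mapsto B+(Be_l)(e_k^{T}B)$, which is not one of the three permitted moves. Dealing with this is where the real work lies. The plan is to show such operations are never advantageous: a product of row operations that zeros the lower-left block can always be chosen block-lower-triangular, of the form $\begin{pmatrix} P & 0 \\ R & Q\end{pmatrix}$ with $P,Q$ invertible, and the invertible block-lower-triangular matrices are exactly what the three families above generate, so an optimal row-operation strategy may be assumed to use no bottom-to-top operation at no cost in depth. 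I expect this ``no loss of generality'' step to be the only nontrivial part; the dictionary itself is a routine block-matrix computation. (For the use the paper makes of the lemma --- recursing the divide-and-conquer on the matrices $B$ --- only the direction ``a depth-$d$ strategy on $B$ gives a depth-$d$ strategy zeroing $A_3$'', together with the trivial fact that every $B$ is zeroable, is needed, and both are immediate from the dictionary.)
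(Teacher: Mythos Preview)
Your dictionary is exactly the paper's proof: the paper computes the effect on $B=A_3A_1^{-1}$ of (i) a row operation inside $A_3$, (ii) a row operation inside $A_1$, and (iii) a row operation adding a row of $A_1$ into a row of $A_3$, obtaining respectively a row operation, a column operation, and an entry flip on $B$; it then observes that these three families preserve invertibility of $A_1$ and that $B=0$ forces $A_3=0$. Your change-of-basis framing (passing to $\begin{pmatrix} I\\ B\end{pmatrix}$) is a cosmetic variant of the same computation.

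Where you go further than the paper is in worrying about the fourth case, a bottom-to-top operation $\text{Row}(i,j)$ with $i$ in $A_3$ and $j$ in $A_1$. The paper's proof simply does not treat this case: its ``equivalence'' is implicitly the statement that the three listed moves on $B$ correspond bijectively to the three listed types of row operations on $A$, and that this correspondence is depth-preserving and terminates correctly. Your observation that the rank-one update $B\mapsto B+(Be_l)(e_k^{T}B)$ falls outside the three permitted moves is correct, and your proposed fix (any product of row operations zeroing the lower-left block can be taken block-lower-triangular without increasing depth) is a reasonable way to make the strong reading of ``equivalent'' rigorous. But as you yourself note in the last paragraph, the paper only ever uses the direction ``depth-$d$ strategy on $B$ $\Rightarrow$ depth-$d$ strategy on $A_3$'', for which the dictionary alone suffices; the paper is content with that and does not attempt the harder direction.
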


\begin{proof}
First, note that by hypothesis $A_1$ is invertible so the matrix $B$ does indeed exist. 
\begin{itemize}
\item Applying an elementary row operation $\text{Row}(i,j)$ on $A_3$ gives the matrix $E_{ji}A_3$ and $B$ is updated by $E_{ji}B$. Thus a row operation on $A_3$ is equivalent to a row operation on $B$. 
\item Applying an elementary row operation $\text{Row}(i,j)$ on $A_1$ gives the matrix $E_{ji}A_1$ and $B$ is updated by $BE_{ji}$. Thus a row operation on $A_1$ is equivalent to a column operation on $B$.
\item $B$ is a $\floor{n/2} \times \ceil{n/2}$ matrix. The k-th row of $B$ gives the decomposition of the k-th row of $A_3$ in the basis given by the rows of $A_1$. Thus any row operation $\text{Row}(k_1, \ceil{n/2} + k_2)$ on $A$ will flip the entry $(k_2,k_1)$ of $B$. 
\end{itemize}
With these three types of operations available on $B$, the invertibility of $A_1$ is preserved. Thus when $B$ is zero necessarily $A_3$ is also zero. 
\end{proof}

Obviously flipping all the $1$-entries of $B$ is enough to reduce $A_3$
to the null matrix, but we are concerned with the shallowest way of
doing this. In the following we show how to compute the optimal depth
of the circuit zeroing $B$ using only the flipping operation.

\begin{theorem}
With the same notations, let $k$ be the maximum number of $1$-entries in one row or one column of $B$. Then if we use only the flipping operation we need a circuit of depth $k$ to zero $B$.
\label{theorem_matching}
\end{theorem}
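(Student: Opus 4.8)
The plan is to model a flip-only circuit as a bipartite edge-coloring problem. First I would make the scheduling constraint explicit: by the correspondence established in the previous theorem, flipping the entry $(k_2,k_1)$ of $B$ is realized by the CNOT with control qubit $k_1$ (among the first $\ceil{n/2}$ lines) and target qubit $\ceil{n/2}+k_2$ (among the last $\floor{n/2}$ lines). Two such flips can sit in the same time step only if their CNOTs act on disjoint qubits; since every control lies in the first block and every target in the second, the only possible collisions are a shared control (same column index $k_1$) or a shared target (same row index $k_2$). Hence the entries flipped within one time step must pairwise differ in both their row and their column, i.e. they form a matching in the bipartite graph $G_B$ whose left vertices are the rows of $B$, whose right vertices are the columns of $B$, and whose edges are the positions of $B$ carrying a $1$.

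For the lower bound, I would fix a row or a column of $B$ with $k$ ones — say a row $r$, the column case being identical. Any flip-only circuit that zeros $B$ must flip each of those $k$ entries an odd, hence positive, number of times, so the total number of flip operations touching row $r$ is at least $k$; since each time step contributes at most one such flip, the depth is at least $k$. This argument is insensitive to whether the circuit also flips some $0$-entries (turning them into $1$s to be flipped back later), which is precisely why such detours cannot help and confirms the remark preceding the statement.

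For the upper bound, I would flip every $1$-entry of $B$ exactly once and schedule these flips by edge-coloring $G_B$. The maximum degree of $G_B$ is exactly $k$, the largest number of $1$s in a single row or column. By König's edge-coloring theorem (every bipartite graph has chromatic index equal to its maximum degree), the edge set of $G_B$ decomposes into $k$ matchings $M_1,\dots,M_k$. Taking the flips of $M_t$ as the operations of time step $t$ gives a valid depth-$k$ circuit, since each $M_t$ is a matching and its flips are therefore mutually parallelizable; it flips every $1$-entry exactly once and so reduces $B$ to the zero matrix. Together with the lower bound, the optimal flip-only depth is exactly $k$. (If one prefers an explicit construction to invoking König, one can repeatedly extract a matching of $G_B$ saturating all current vertices of maximum degree — such a matching always exists in a bipartite graph — and delete it, lowering the maximum degree by one each round, for a total of $k$ rounds.)

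The only delicate step is the first paragraph: getting the parallelism constraint exactly right by unwinding the gate-level meaning of the flip operation, in particular noticing that two flips in the same column clash on their common control qubit and two flips in the same row clash on their common target qubit, while a control of one flip can never coincide with a target of another because the two index ranges are disjoint. Once the task is rephrased as partitioning the $1$-entries of $B$ into row- and column-disjoint classes, the rest is standard bipartite graph theory.
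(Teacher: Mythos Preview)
Your proof is correct and follows essentially the same approach as the paper: both model the $1$-entries of $B$ as edges of a bipartite graph (the paper takes $V_1=$ rows of $A_1$ and $V_2=$ rows of $A_3$ with $B$ as adjacency matrix, which is the same graph as your $G_B$) and invoke König's theorem to decompose the edges into $k$ matchings. Your write-up is in fact more careful than the paper's, since you spell out the parallelism constraint at the CNOT level and give an explicit lower-bound argument, whereas the paper leaves both of these implicit.
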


\begin{proof}
We exploit a theoretical result about bipartite graph already used in \cite{DBLP:conf/soda/JiangSTW0Z20}. Consider the bipartite graph $G = (V_1, V_2, E)$ where each vertex of $V_1$ is a row of $A_1$, each vertex of $V_2$ is a row of $A_3$ and $B$ is the adjacency matrix of $G$. Any matching in $G$ represents a series of row operations that can be executed in parallel and that will zero some entries in $B$. If there is at most $k$ non-zero entries in each row and column of $B$ this means that the degree of $G$ is $k$ as well. Any bipartite graph of degree $d$ can be decomposed into exactly $d$ matchings \cite{kapoor2000edge}. Hence a circuit of depth $k$ is needed to transform $B$ into the null matrix. 
\end{proof}

We are now able to give a strict upper bound on the worst case result of the algorithm DaCSynth. 

\begin{corollary}
The depth of the circuits given by the algorithm DaCSynth is upper bounded by $2n + 2\ceil{\log_2(n)}$ with $n$ the number of qubits.
\end{corollary}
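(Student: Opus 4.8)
The plan is to turn the recursive structure of DaCSynth into a recurrence for the worst‑case depth $D(n)$ on an $n\times n$ operator and then to solve it. Recall that step~1 only produces a permutation, which under the full‑connectivity assumption is pushed to the very end of the circuit at no depth cost; steps~2 and~3 perform the actual row operations; and step~4 recurses on two smaller blocks. So the first goal is to establish
\[
D(n)\;\le\;2\ceil{n/2}+\max\bigl(D(\ceil{n/2}),\,D(\floor{n/2})\bigr),\qquad D(1)=0,
\]
and the second is to verify that $D(n)\le 2n+2\ceil{\log_2(n)}$ satisfies it.

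For the $2\ceil{n/2}$ term I would argue as follows. By the first theorem above, step~2 amounts to zeroing $B=A_3A_1^{-1}$, and by Theorem~\ref{theorem_matching} this costs depth equal to the largest number of $1$‑entries in any row or column of $B$; since $B$ has shape $\floor{n/2}\times\ceil{n/2}$, that number is at most $\ceil{n/2}$. The row operations used in step~2 only add top rows into bottom rows, so they leave $A_1$ (hence its invertibility) untouched and turn the matrix block lower‑triangular; its lower‑right block is therefore invertible as well (the determinant of a block‑triangular matrix is the product of the diagonal blocks' determinants), so the transposed version of the first theorem applies to step~3, which likewise reduces to zeroing a $\ceil{n/2}\times\floor{n/2}$ matrix and costs at most $\ceil{n/2}$. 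Hence steps~2 and~3 together have depth at most $2\ceil{n/2}$. Finally, step~4 calls the algorithm on ${A''}_1$ (size $\ceil{n/2}$) and ${A''}_4$ (size $\floor{n/2}$), which are supported on disjoint sets of qubits and can therefore be run in parallel, contributing depth $\max(D(\ceil{n/2}),D(\floor{n/2}))$. This parallelism of the two branches is the point that keeps the logarithmic overhead additive rather than multiplicative.

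It then remains to solve the recurrence. I would proceed by strong induction on $n$, using that $f(n)=2n+2\ceil{\log_2(n)}$ is nondecreasing and that $\ceil{\log_2\ceil{n/2}}=\ceil{\log_2(n)}-1$ for $n\ge 2$; with these facts the inductive step collapses to the elementary inequality $2\ceil{n/2}\le n+1$. Equivalently one can unroll directly: after $j$ levels the block size is $\ceil{n/2^{\,j}}$, the recursion bottoms out after $\ceil{\log_2(n)}$ levels, and summing the per‑level costs gives $2\sum_{j\ge 1}\ceil{n/2^{\,j}}\le 2\sum_{j\ge 1}\bigl(n/2^{\,j}+1\bigr)\le 2n+2\ceil{\log_2(n)}$ by the geometric series.

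The only mildly delicate points are the bookkeeping with the ceilings — the identities $\ceil{\ceil{n/2}/2}=\ceil{n/4}$ and the one for $\ceil{\log_2\ceil{n/2}}$ — and checking that step~3 is well defined, i.e. that the block used there as a change of basis is genuinely invertible; once the recurrence has been set up with the parallel recursive calls in mind, both are easy and the rest of the argument is routine.
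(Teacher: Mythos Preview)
Your proof is correct and follows essentially the same route as the paper: bound steps~2 and~3 by $\ceil{n/2}$ each via Theorem~\ref{theorem_matching}, set up the recurrence $D(n)\le D(\ceil{n/2})+2\ceil{n/2}$, and unroll it using $\ceil{\ceil{n/2}/2}=\ceil{n/4}$ together with $\ceil{n/2^j}\le n/2^j+1$. You are in fact a bit more careful than the paper's own write-up --- you make explicit that the two recursive calls run in parallel on disjoint qubit sets (which is why the recurrence has a single $D(\ceil{n/2})$ rather than a sum) and you verify the invertibility of the lower-right block needed for step~3 --- but the argument is the same.
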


\begin{proof}
A first straightforward formula for the depth of the circuit output by the algorithm DaCSynth is 
\[ d(n) = d(\ceil{n/2}) + 2 \times d^*(\ceil{n/2})\]

where $d^*$ is the depth of the circuits computing parts 2 and 3. Using the result of the previous theorem we have 
\[ d^*(n) \leq n \]
So overall the depth of our circuit is upper bounded by 
\[ d(n) \leq d(\ceil{n/2}) + 2\ceil{n/2} \]

As $d(1) = 0$ and by exploiting the fact that $\ceil{\ceil{n/2}/2} = \ceil{n/4}$ we have
\[ d(n) \leq 2 \times \left(\sum_{k=1}^{\ceil{\log_2(n)}} \ceil{n/2^k} \right) \leq 2 \times \left(\sum_{k=1}^{\ceil{\log_2(n)}} n/2^k + 1 \right)\]
After simplification we have 
\[ d(n) \leq 2n + 2\ceil{\log_2(n)} \qedhere\]
\end{proof}

\subsection{\textbf{A block algorithm for steps 2 and 3}}

In order to improve the upper bound of our framework, we propose a block method for performing steps 2 and 3. Given an $n \times n$ matrix $B$ to zero and an integer $k < n$ such that $n = bk + r$, we divide $B$ into a matrix of $\ceil{\frac{n}{k}} \times \ceil{\frac{n}{k}}$ blocks: 
\begin{itemize}
  \item $\floor{\frac{n}{k}}^2$ are of size $k$,
  \item $\floor{\frac{n}{k}}$ are of size $k \times r$,
  \item $\floor{\frac{n}{k}}$ are of size $r\times k$ 
  \item and the lower right one is of size $r \times r$. 
\end{itemize}

If $B$ is of size $n \times (n+1)$ or $(n+1) \times n$ (which can happen if $A$ is of odd size) then some blocks on the edge will be of size $k \times (r+1)$ or $(r+1) \times k$. In any case as $r < k$ then $r+1 \leq k$ and the critical point is that all of these rectangular blocks are smaller than the $k \times k$ blocks. 

Now we consider each nonzero block as a $1$-entry in a $\ceil{\frac{n}{k}} \times \ceil{\frac{n}{k}}$ binary matrix that can be mapped to a bipartite graph $G$ as above. Then it is clear that a matching in $G$ corresponds to a subset of blocks on which we can apply row and column operations in parallel. 

Considering one such matching, we assume that we can reduce the maximum number of $1$-entries in each row and column of one block to an integer $p$ in depth at most $D$. Then all the blocks are matrices with at most $p$ nonzero entries per row and column and we can flip all of these non zero entries in $p$ sequences of row operations as they belong to different rows and columns in $B$. After that all the blocks of the matching are zero and we can repeat the process with another matching without modifying the nullified blocks. $G$ can be decomposed into at most $\ceil{\frac{n}{k}}$ matchings, each of them requires a depth of at most $D+p$ to zero all the blocks so the total depth for performing step 2 (or step 3) is $(D+p) \times \ceil{\frac{n}{k}}$. Again using the formula $\ceil{\frac{\ceil{n/m}}{k}} = \ceil{\frac{n}{mk}}$ an upper bound for the total depth is given by 
\[ d(n) \leq 2(D+p)\times \left(\sum_{j=1}^{\ceil{\log_2(n)}} \ceil{n/(k2^j)} \right). \]

After calculation we get 
\begin{equation} 
d(n) \leq \frac{2(D+p)}{k} n + 2(D+p)\ceil{\log_2(n)}.
\label{depth_equation}
\end{equation}

Note that with $k = 1$ then $D=0, p=1$ and we recover the result of Theorem~\ref{theorem_matching}. We are now ready to prove our main result. 

\begin{corollary}
The depth of the circuits given by the algorithm DaCSynth is upper bounded by $ \frac{4}{3}n + 8\ceil{\log_2(n)}$ with $n$ the number of qubits.
\end{corollary}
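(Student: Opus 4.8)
The plan is to specialise the general depth estimate~\eqref{depth_equation}, $d(n) \le \frac{2(D+p)}{k}n + 2(D+p)\ceil{\log_2(n)}$, to a well chosen triple $(k,D,p)$. Matching the target bound term by term forces $2(D+p)=8$, i.e. $D+p=4$, and then $\frac{2(D+p)}{k}=\frac{4}{3}$ forces $k=6$. So the whole statement reduces to producing a strategy for steps 2 and 3 which, with block size $k=6$, brings every block to at most $p$ nonzero entries per row and per column in depth at most $D$, for some split with $D+p=4$.

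I would take $(D,p)=(3,1)$ and prove the following block lemma: any binary matrix of size at most $6\times 6$ can be transformed, using only row and column operations, into a matrix with at most one $1$-entry per row and per column (a partial permutation matrix) in depth at most $3$. The leftover partial permutation is then erased by a single layer of flips ($p=1$), its $1$-entries lying in pairwise distinct rows and columns, exactly as in Theorem~\ref{theorem_matching}. That such a normal form is reachable at all is classical: row and column reduction takes any matrix to its rank normal form $\mathrm{diag}(I_r,0)$ (up to a permutation), which is a partial permutation matrix, and only row/column operations on $B$ are used, so the invertibility of $A_1$ is preserved as required by Theorem~\ref{theorem_matching}. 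The real content is the depth bound $3$: naive Gaussian elimination does not meet it, since clearing a pivot row or column means fanning one row out onto several others, which by itself already costs logarithmically many layers. One must instead exhibit an explicit schedule of three mixed row/column layers, or establish the lemma by a finite verification over $6\times 6$ binary matrices. (A split $(D,p)=(2,2)$ --- reducing each $6\times 6$ block to at most two $1$-entries per row and per column in depth $2$ --- would do equally well; in both cases $D+p=4$.) The rectangular edge blocks, of size $6\times r$, $r\times 6$, $6\times(r+1)$ or $(r+1)\times 6$ with $r<6$, are strictly easier, having fewer than $6$ rows or fewer than $6$ columns, so they obey the same bound.

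Granting the block lemma, the proof closes by substitution: plugging $k=6$, $D=3$, $p=1$ into~\eqref{depth_equation} yields $d(n)\le \frac{2\cdot 4}{6}\,n + 2\cdot 4\,\ceil{\log_2(n)} = \frac{4}{3}n + 8\ceil{\log_2(n)}$.

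I expect the only real difficulty to be the depth-$3$ claim of the block lemma. It is a finite combinatorial statement, but a delicate one: it must beat naive elimination, and it is essentially tight, since a counting argument shows that a generic invertible $6\times 6$ matrix cannot be written as the product of a row circuit and a column circuit whose depths sum to much less than $3$ --- which is precisely why the block size cannot be increased past $6$ while keeping $D+p=4$. Everything downstream (unrolling the recursion, the ceiling identities, and the final arithmetic) is already absorbed into~\eqref{depth_equation}.
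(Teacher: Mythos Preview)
Your proposal is correct and matches the paper's proof essentially step for step: the paper also takes $k=6$, $D=3$, $p=1$ and substitutes into~\eqref{depth_equation}, establishing the block lemma (every binary matrix of size at most $6\times 6$ reduces to a partial permutation in depth at most $3$ via row/column operations) by exhaustive breadth-first computer search over equivalence classes modulo row and column permutations. One small caveat: your closing remark that the block size ``cannot be increased past $6$ while keeping $D+p=4$'' is unsupported and likely false --- the paper explicitly leaves open whether $k=7,8,9$ might yield a still better ratio, the stopping point $k=6$ being a computational limit of the search, not a proven obstruction.
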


\begin{proof}
To improve our first result we need to find more efficient syntheses of our blocks. We performed a brute-force search for square matrices of size $k=1,2,3,4,5,6$. The search consisted in a breadth-first search: starting from the partial permutations, row/columns operations were applied in a growing depth manner. We explore the set of binary matrices and compute the minimum depth required to reduce them to a partial permutation. In order to reduce the size of the search we only considered matrices up to row and column permutations. For this purpose we used standard techniques involving graph isomorphism to compute a canonical representative for each class \cite{DBLP:journals/amco/Freibert13}. The results are given in Table~\ref{results_depth}. We recall that row and column operations can be performed in parallel. It is clear that for smaller rectangular cases the worst case depth cannot be larger. So by considering blocks of size $6$ the depth in the worst case is $D = 3$ and $p=1$. Replacing in Eq.~\ref{depth_equation} gives the result.
\end{proof}

We want to insist on the fact that the current upper bound is to be improved. In fact any improvements in the zeroing of larger blocks can significantly improve the theoretical upper bound and its range validity given in Table~\ref{summary_bounds}. For instance computing the worst case depth for $k=7,8,9$, if possible, may lead to a better upper bound. Moreover, what happens if we stop the row and column operations once the maximum number of $1$-entries in each row and column is below an integer $p > 1$\,? If $D$ decreases faster than $p$ increases this would represent another improvement.

As we already mentioned, the
synthesis algorithm proposed in \cite{DBLP:conf/soda/JiangSTW0Z20} can in fact be
seen as a special case where $A_1 = I$ and their strategy is also a
block algorithm with blocks of size $\log_2(n)/2 \times \log_2(n)/2$ and $n/\log_2(n) \times \log_2(n)/2$. 
Yet, translated in our framework, they only use operations on columns and the flipping entries operation.

\begin{table*}
\centering
\begin{tabular}{c|cccccccc} \toprule
    \backslashbox{Depth\kern-2em}{\kern-2emNumber of qubits} & 1 & 2 & 3 & 4 & 5 & 6 \\ \midrule
    0   & 2 & 3 & 4  & 5    & 6     & 7    \\
    1   &   & 4 & 17 & 69   & 199   & 630    \\
    2   &   &   & 15 & 243  & 5052  & 194390    \\
    3   &   &   &    &      & 367   & 56583   \\
\end{tabular}
\caption{Number of binary matrices reachable for different number of qubits and circuit depth (up to row/column permutations).}
\label{results_depth}
\end{table*}

\subsection{A greedy algorithm for steps 2 and 3} \label{sec::dac_greedy}

In practice we use a greedy algorithm to perform steps 2 and 3. We
recall that we work on a matrix $B$ that we want to zero with the
following three available operations: (1) row operations, (2) column
operations, (3) flipping one entry.
Note that row and column operations can be performed in parallel as this corresponds to CNOT circuits on two disjoint subsets of qubits.

At each step we compute a sequence of row and column operations on $B$ that
minimizes the number of ones in $B$ and that can be done in
parallel. If we only consider row or column operations then the optimal 
sequence can be computed in a polynomial time. To do so we create a directed graph $G_{\text{row}}$/$G_{\text{col}}$
whose nodes are the rows/columns of $B$ and the edges $(i \to j)$ are weighted by the gain in 
the number of ones if we apply the row operation $i \to j$. The optimal sequence of row/column operations
is given by the maximum weight matching in such graph which can be computed in polynomial time using the Blossom algorithm \cite{edmonds1965paths}.

\medskip
However, when considering both row and column operations, things are not that simple. A row operation on $B$ modifies 
$G_{\text{col}}$ and a column operation modifies $G_{\text{row}}$ so we cannot solve independently (or one after the other) the two problems in 
order to have an optimal sequence. 
The maximum weight matching problem on $G = (V,E)$ can be reformulated
as a linear programming problem

\begin{problem}
Maximize
\[
\sum_{e \in E}  x_ew(e) 
\]
such that for all vertices $u \in V$, 
\[ \sum_{e \in \{ (u,v), (v,u) | v \in \delta(u) \} } x_e \leq 1\]
and for all edges $e \in E$, $x_e \in \{0,1\}$  
\end{problem}

where $\delta(u)$ stands for the set of nodes adjacent to $u$.
Taking into account both row and column operations adds quadratic terms in the cost function and that complicates the search for an optimal solution. 
Namely this new problem on the two graphs $G_{\text{row}} =
(V_{\text{row}}, E_{\text{row}})$ and $G_{\text{col}} =
(V_{\text{col}}, E_{\text{col}})$ can be reformulated as 
\begin{problem}\label{np_hard_problem}
Maximize
\begin{multline*}
\sum_{e_{\text{row}} \in E_{\text{row}}}          x_{e_{\text{row}}} w(e_{\text{row}}) 
+ \sum_{e_{\text{col}} \in E_{\text{col}}}          x_{e_{\text{col}}}w(e_{\text{col}}) \\
+ \sum_{e_{\text{col}}, e_{\text{row}}}
x_{e_{\text{row}}}x_{e_{\text{col}}} q(e_{\text{row}}, e_{\text{col}})
\end{multline*}
such that for all vertices $u\in V_{\text{row}}$,
\[\sum_{e_{\text{row}} \in \{ (u,v), (v,u) | v \in \delta(u) \} }
  x_{e_{\text{row}}} \leq 1,\]
for all vertices $u\in V_{\text{col}}$,
\[
\sum_{e_{\text{col}} \in \{ (u,v), (v,u) | v \in \delta(u) \} }
x_{e_{\text{col}}} \leq 1
\]
and for all edges $e \in E_{\text{row}} \cup E_{\text{col}}$, $x_e \in
\{0,1\}$.
\end{problem}

where the $q$'s are the quadratic terms. Each quadratic term corresponds to a specific entry in $B$ so we have $q(e_{\text{row}}, e_{\text{col}}) \in \{-1, 0, 1\}$. Problem~\ref{np_hard_problem} is a particular instance of the \emph{quadratic matching} problem where, given a graph $G$, one must find a matching that optimizes an objective function containing linear terms on the edges and quadratic terms on the pairs of edges. In Problem~\ref{np_hard_problem}, the graph $G$ is given by the disjoint union of the two graphs $G_{\text{row}}$ and $G_{\text{col}}$, and the quadratic terms between two edges of $G_{\text{row}}$ or two edges of $G_{\text{col}}$ are 0. The quadratic matching problem is known to be NP-hard \cite{klein2014}, is Problem~\ref{np_hard_problem} also NP-hard? We leave this question as a future work.

We still tried to solve exactly Problem~\ref{np_hard_problem} with an integer programming solver. Yet, given the quadratic terms the number of variables and constraints evolves as $n^4$ where $n$ is the number of qubits and the method cannot find a solution even for $n=10$.
To get a non optimal solution in a reasonable amount of time, we compute a sequence of row and column operations greedily.
We first choose the best row or column operation that minimizes the number of
ones in $B$ and we keep in memory the operation applied. Then we determine the next best row or column operation among the operations that can be performed in parallel with the previously stored operation and we repeat the process. Finally if some rows and columns are left untouched we may complete the sequence of operations by flipping some entries. The best sequence of flipping operations is computed as described in the proof of Theorem~\ref{theorem_matching} using the Blossom algorithm. If no row or column operation can reduce the number of $1$ in $B$ then only the flipping operation is used.

\section{Purely greedy algorithms} \label{sec::greedy}

During steps 2 and 3 of the DaCSynth algorithm in Section~\ref{sec::dac_greedy}, we used a greedy process to zero a boolean matrix with as few operations as possible. We now explore the use of similar techniques directly on the linear boolean reversible operator to synthesize. It has been proven to be efficient for size optimization \cite{de2021gaussian}. The method consists in a cost minimization technique, we need:
\begin{itemize}
  \item a cost function to minimize,
  \item a strategy to explore the set of linear reversible operators.
\end{itemize}
Similarly to \cite{de2021gaussian}, we consider the following four cost functions to guide our search: 
\begin{itemize}
  \item $h_{\text{sum}}(A) = \sum_{i,j} A_{i,j},$
  \item $H_{\text{sum}}(A) = h_{\text{sum}}(A) + h_{\text{sum}}(A^{-1}),$
  \item $h_{\text{prod}}(A) = \sum_{i} \log_2(\sum_j A_{i,j}),$
  \item $H_{\text{prod}}(A) = h_{\text{prod}}(A) + h_{\text{prod}}(A^{-1}).$
\end{itemize}

These four cost functions reach their minimum when $A$ is a permutation matrix, motivating their use in a cost minimization process. If the cost function $h_{\text sum}$ seems the first natural choice, the cost function $h_{\text prod}$ has interesting features because it gives priority to "almost done" rows. Namely, if one row has only a few nonzero entries, the minimization process with $h_{\text prod}$ will treat this row in priority and then it will not modify it anymore. This enables to avoid a problem which one meets with the cost function $h_{\text sum}$ where one ends up with a very sparse matrix but where the rows and columns have few nonzero common entries. This type of matrix represents a local minimum from which it can be difficult to escape. With this new cost function, as we put an additional priority on the rows with few remaining nonzero entries, we avoid this pitfall. Adding the cost of the inverse matrix also helps to escape from local minima.

In order to choose which row and column operations to apply, we proceed similarly to the DaCSynth algorithm: we keep track of previously applied row and column operations to determine which supplementary operations can be done without increasing the depth. At each iteration we choose among the remaining operations that actually decrease the cost function the one that minimizes the cost function. If there are several possible operations, a random one is chosen. If no row or column operations can decrease the cost function, we reset simultaneously the set of row and column operations available.
Every time the set of applied row or column operations is reset we increase a counter by 1. The algorithm stops whether the current operator is a permutation matrix or when the counter exceeds a certain threshold. 

We know from previous experiments \cite{de2021gaussian} that such purely greedy algorithms behave extremely well on small operators (typically $n < 40$) or operators that need small/shallow circuits to be implemented. After a certain operator size or ``complexity'', the cost minimization process falls into local minima from which it is impossible to escape without a prohibitive overhead in the number of CNOTs or in the depth. One proposal to mitigate this behavior is to rely on an LU decomposition. It is well-known that any operator $A$ can be written 
\( A = PLU\)
where P is a permutation matrix and $L, U$ are triangular operators. We considered the case where we use our greedy algorithm on those triangular operators (the concatenation of the circuits obtained give a circuit for $A$ up to the permutation $P$) with the hope that the expected bad scalability is mitigated at the price of worse results when the purely greedy methods perform well. Several triplets $(P, L, U)$ are possible for one operator $A$. It was shown in \cite{de2021gaussian} that it is possible to adopt specific strategies to compute $(P,L,U)$. One of them consists in choosing iteratively the columns of $L$ and rows of $U$ to be the sparsest possible, this will be our approach and we will refer to this strategy as "LU sparse" in the benchmarks.

\section{Extension with ancillary qubits} \label{sec::extension}

The quantum compiler Tpar efficiently reduces the T-depth by computing subsets of T gates that can be applied in parallel \cite{DBLP:journals/tcad/AmyMM14}. Each T gate is associated
to a parity, i.e, a linear combination of the input qubits. With a subset of parities that are linearly independent, they can be computed at the same time 
and the T gates are applied in parallel to each qubit carrying one of these parities.

With ancillary qubits the parallelization can be even more efficient because the ancillary qubits can carry any parity, i.e, it can be a linear combination of the parities carried by non ancillary qubits. In terms of CNOT circuits synthesis we need to synthesize a larger linear reversible operator. Namely we have to synthesize Boolean matrices of size $p \times n$ where $p-n$ is the number of additional qubits that will carry a parity. We extend our framework to treat this particular case. Our goal is, given an input operator $A_{\text{in}} \in F_2^{p \times n}$ and an output operator $A_{\text{out}} \in F_2^{p \times n}$, to synthesize an operator $B \in F_2^{p \times p}$ such that $BA_{\text{in}} = A_{\text{out}}$. The main difference with standard linear reversible circuit synthesis is that $B$ is not unique so we need to find a suitable $B$ and to synthesize it.

We propose a simple block algorithm and we prove that the total depth for the synthesis is equal, up to additive logarithmic terms, to the depth 
of the synthesis on an operator $A \in F_2^{p \times n}, n \leq p \leq 2n$. This result shows that the total depth barely increases with the number of ancillas after a certain threshold. 

\subsection*{A block extension algorithm}

Let $A \in F_2^{p \times n}$. We assume that the first $n$ rows of $A$ form an invertible matrix. If not, we can always find a permutation matrix $P$ such that $PA$ is as desired, see Section~\ref{sec::algo}. Given $p = kn + r$, we partition the operator $A$ into $k$ blocks of $n$ rows and one block of $r$ rows. 
As assumed the first block is of full rank and we merge it with the block of $r$ rows. 

The core of this extension algorithm lies in the idea that it is cheap to make each block invertible. Actually it can be done with a circuit of depth $\ceil{\log_2(k)}$ by using the following lemma: 
\begin{lemma}
Given two matrices $A, B \in F_2^{n\times n}$ with $A$ of full rank. There exists a partial permutation $P$ such that 
$B + PA$ is of full rank.
\end{lemma}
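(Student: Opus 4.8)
## Proof Plan

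The plan is to find the partial permutation $P$ by a greedy column-by-column (equivalently, row-by-row) construction, thinking of $P$ as selecting, for some subset $S$ of indices, which row of $A$ to drop into each selected row slot of $B$. Concretely, write $PA$ as the matrix whose $i$-th row is either $0$ (if $i \notin S$) or the $\sigma(i)$-th row of $A$ for some injection $\sigma \colon S \to \{1,\dots,n\}$. Then $B + PA$ has $i$-th row equal to $b_i$ for $i \notin S$ and $b_i + a_{\sigma(i)}$ for $i \in S$, where $b_i, a_j$ denote the rows of $B$ and $A$. The goal is to choose $S$ and $\sigma$ so that these $n$ rows are linearly independent.

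First I would build up the row space incrementally. Maintain a subspace $W \subseteq F_2^n$ spanned by the rows chosen so far; initially $W = \{0\}$. Process the row slots $i = 1, \dots, n$ one at a time. At slot $i$, if $b_i \notin W$, leave slot $i$ alone (do not put it in $S$) and update $W \leftarrow W + \langle b_i \rangle$. If $b_i \in W$, we must "repair" this row by adding some still-unused row $a_j$ of $A$: we need $b_i + a_j \notin W$, i.e. $a_j \notin b_i + W = W$. The key step is to argue such an unused $a_j$ always exists. Since $A$ has full rank, its $n$ rows span $F_2^n$, so they cannot all lie in the proper subspace $W$ (which has dimension at most $i-1 < n$ at this point). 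Among the rows of $A$ not lying in $W$, we must check at least one has not yet been consumed by an earlier slot; this is the heart of the argument and I address it below. Having picked such $a_j$, put $i \in S$, set $\sigma(i) = j$, mark $a_j$ used, and update $W \leftarrow W + \langle b_i + a_j\rangle$. After all $n$ slots, $W$ has dimension $n$, so $B + PA$ is of full rank, and $P$ (the partial permutation routing $a_{\sigma(i)}$ into slot $i$) is the desired matrix.

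The main obstacle is the availability/counting step: showing that when $b_i \in W$ we can always find an \emph{unused} row $a_j$ with $a_j \notin W$. I would control this with a dimension count. At slot $i$, let $t$ be the number of slots among $1,\dots,i-1$ that were repaired (so $|S \cap \{1,\dots,i-1\}| = t$ and exactly $t$ rows of $A$ are used). Each processed slot contributed exactly one new basis vector to $W$, so $\dim W = i - 1$ just before slot $i$. The rows of $A$ lying in $W$ number at most $2^{\dim W} = 2^{i-1}$, but a cleaner bound: since the rows of $A$ are a basis of $F_2^n$, the number of them contained in a subspace of dimension $d$ is at most $d$ (a subspace of dimension $d$ contains at most $d$ linearly independent vectors, hence at most $d$ of the basis vectors of $A$). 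So at most $i-1$ rows of $A$ lie in $W$, leaving at least $n - (i-1)$ rows of $A$ outside $W$; since only $t \le i - 1$ rows are used and $i \le n$, we have $n - (i-1) > t$ is not immediate, so I would instead argue directly: among the $n - t$ unused rows of $A$, at most $i - 1 - t$ of them can lie in $W$ (the rows in $W$ number $\le i-1$, of which $t$ are used, so $\le i-1-t$ unused ones are in $W$), hence at least $(n-t) - (i-1-t) = n - i + 1 \ge 1$ unused rows lie outside $W$. This closes the argument. I would present this bookkeeping as the one careful computation in the proof, with everything else being the straightforward incremental span construction.
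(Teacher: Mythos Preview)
Your proof is correct and takes a genuinely different route from the paper. The paper argues via a rank factorization: write $B = CD$ with $C \in F_2^{n\times k}$, $D \in F_2^{k\times n}$ both of rank $k = \operatorname{rank} B$, extend the columns of $C$ to a basis of $F_2^n$ by adjoining $n-k$ standard basis vectors and extend the rows of $D$ to a basis by adjoining $n-k$ rows of $A$; the product of the two extended square factors is then invertible and equals $B + PA$ for the partial permutation $P$ recording which standard vectors and which rows of $A$ were chosen. Your greedy row-by-row construction is more elementary---it avoids the rank factorization entirely---and is directly algorithmic, at the price of a slightly more delicate counting argument. One small point worth making explicit in a full write-up: your claim ``of which $t$ are used'' relies on every previously consumed $a_{\sigma(i')}$ lying in the current $W$; this holds because at the repair step $i'$ you had $b_{i'} \in W_{i'-1}$ and adjoined $b_{i'}+a_{\sigma(i')}$, so $a_{\sigma(i')} = b_{i'} + (b_{i'}+a_{\sigma(i')}) \in W_{i'} \subseteq W$, and with that observation your inequality $(n-t)-(i-1-t)=n-i+1\ge 1$ goes through cleanly.
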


\begin{proof}
Suppose $B$ is of rank $k, k<n$. We can write $B = CD$ where $C \in F_2^{n \times k}, D \in F_2^{k \times n}$ are of rank $k$. 
One can always add a set of $n-k$ canonical vectors to the columns of $C$ 
to create a basis of $F_2^n$. We can complete as well the rows of $D$ into a basis of $F_2^n$ by adding row vectors of $A$. We
get two new extended matrices $C', D'$ such that $B' = C'D'$ is now invertible. We can always add zero columns in $C'$ and 
the remaining rows of $A$ in $D'$ and reorder the columns of $C'$ and $D'$ to get $C' = [C | P]$ and $D' = [D | A]$ with $P$ a partial
permutation matrix. Rewriting $B' = [C|D] \times [P | A] = CD + PA = B + PA$ proves the result.
\end{proof}

To make all blocks invertible, we first make sure that the second block is of full rank by adding the appropriate rows of the first block. Using the lemma above
this operation can be done with a circuit of depth $1$.
Then we make sure that the third and fourth block are of full rank by adding the appropriate rows of the first and second block etc. Repeating this procedure, it is clear
that we only need $\ceil{\log_2(k)}$ iterations to treat all the blocks. 

What we want is an algorithm that synthesizes a circuit outputting an operator $A_{\text{out}} \in F_2^{p \times n}$ given an input operator $A_{\text{in}} \in F_2^{p \times n}$. Our proposal is to synthesize two operators $B_1, B_2$ such that the block partition of 
\[ B_1 A_{\text{in}} = \begin{pmatrix} K_1 \\ K_2 \\ \vdots \\ K_k \end{pmatrix}
 \text{\qquad and \quad}
 B_2 A_{\text{out}} = \begin{pmatrix} H_1 \\ H_2 \\ \vdots \\ H_k \end{pmatrix} \]
contain only invertible blocks. Then we can apply independently a linear reversible operator on each block to do the transition $B_1 A_{\text{in}} \to B_2 A_{\text{out}}$. Namely, for any $i > 1$, we apply $D_i = H_iK_i^{-1}$ to the $i$-th block. For $i=1$ we need an operator $D_1$ to do the transition $K_1 \to H_1 \in F_2^{(n+r) \times (n+r)}$. We did not particularly optimize this part, but as we know that the first $n$ rows of $K_1$ form an invertible matrix, we consider an operator $D_1$ of the form 
\[ D_1 = \begin{pmatrix} H_1[1:n,:]K_1[1:n,:]^{-1} & 0 \\ G & I_{n-r}\end{pmatrix} \]
where each row of $G$ contains the decomposition of each vector $K_1[i,:] \oplus H_1[i,:], i > n$ in the basis $K_1[1:n,:]$.

Overall we apply a block diagonal operator $D = \oplus_{i=1}^k D_i$ with $D_1 \in F_2^{(n+r) \times (n+r)}$ and $D_i \in F_2^{n \times n}$ for $i > 1$ such that
\[ D B_1 A_{\text{in}} = B_2A_{\text{out}} \]
and finally 
\[A_{\text{out}} = B_2^{-1} D B_1 A_{\text{in}}. \]

The total depth of our circuit is the sum of the depth of the circuits implementing $B_1, B_2$ and $D$. We know that the depth for implementing $B_1, B_2$ does not exceed $\ceil{\log_2(\floor{p/n})}$ and most of the total depth lies in the synthesis of $D$. The synthesis of $D$ requires a call to our framework for the square blocks of size $n$ and one call for the first block of size $n+r$. 
All syntheses are performed simultaneously so the total depth for step 2 is given by the maximum depth required for the synthesis
of one of the blocks. 

The total depth $d(n,p)$ is given by 
\[ d(n,p) = 2\log_2(\floor{p/n}) + d^*(n+r) \leq 4n + 2\log_2(\floor{p/n})\]
where $d^*(n+r)$ is the depth required to synthesize the block of size $(n+r)\times (n+r)$, which 
represents informally the maximum depth required when synthesizing all the blocks in parallel.
The upper bound is not the tightest possible. The result we want to emphasize is that the depth 
only depends logarithmically on the number of ancillas. 

\section{Benchmarks} \label{sec::bench}
This section presents our experimental results. We have the following algorithms to benchmark: 
\begin{itemize}
  \item DaCSynth from Section~\ref{sec::algo}, 
  \item cost minimization techniques from Section~\ref{sec::greedy},
  \item the extension of DaCSynth for the use of ancillary qubits described in Section~\ref{sec::extension}.
\end{itemize}

The state-of-the-art algorithms are the following: 
\begin{itemize}
  \item the Gaussian elimination algorithm,
  \item the algorithm from \cite{DBLP:journals/cjtcs/KutinMS07} adapted to a full qubit connectivity, as described in Appendix~\ref{cnot::soa_depth}.
\end{itemize}

Two kinds of data-sets are used to benchmark our algorithms: 
\begin{itemize}
  \item First, a set of random operators. The test on random operators gives an overview of the average performance of the algorithms. We generate random operators by creating random CNOT circuits. Our routine takes two inputs: the number of qubit $n$ and the depth $d$ desired for the random circuit. Each CNOT is randomly placed by selecting a random control and a random target and the simulation of the circuit gives a random operator. Empirically we noticed that when $d$ is sufficiently large --- $d=2n$ is enough --- then the operators generated have strong probability to represent the worst case scenarii. Alternatively, when only worst-case operators are of interest, it is faster to generate random circuits with a sufficiently large number of CNOT gates ($n^2$ gates is enough) instead of creating a circuit with a large depth.
  \item Secondly, a set of reversible functions, given as circuits, taken from Matthew Amy's github repository \cite{meamy}. This experiment shows how our algorithms (DaCSynth and the greedy procedures) can optimize useful quantum algorithms in the literature like the Galois Field multipliers, integer addition, Hamming coding functions, the hidden weighted bit functions, etc.
\end{itemize}

To evaluate the performance of our algorithms for the random set, two types of experiments are conducted:
\begin{enumerate}
  \item a worst-case asymptotic experiment, namely for increasing problem sizes $n$ we generate circuits of depth $2n$ and we compute the average depth for each problem size. This experiment reveals the asymptotic behavior of the algorithms and gives insights about strict upper bounds on their performance.
  \item a close-to-optimal experiment, namely for one specific problem size we generate operators with different circuit depth to show how close to optimal our algorithms are if the optimal circuits are expected to be shallower than the worst case.
\end{enumerate}

To produce the benchmarks we need an explicit way to compute the depth. This task is less trivial than computing the number of gates in the circuits. The most common way to perform this computation is to create a Directed Acyclic Graph representation of the circuit: the vertices of the graph are the gates and the edges represent their inputs/outputs. The depth of the circuit is then given by the longest path in the graph which can be computed by doing a topological sorting of the vertices for example. Another way is to divide the circuit into slices of parallel gates. When a new gate is added to the circuit one has to pull it to its maximum to the left of the circuit by commuting it with the existing slices. If the gate cannot commute with the first slice it encounters, a new slice is created. The number of slices is then equal to the depth of the circuit.
An interesting feature of this procedure is that we recover the skeleton circuit outlined in \cite{maslov2007linear} by computing the depth on a circuit returned by a standard Gaussian elimination algorithm. 

\paragraph{}
All our algorithms are implemented in Julia \cite{bezanson2017julia} and executed
on the ATOS QLM (Quantum Learning Machine) whose processor is an Intel
Xeon(R) E7-8890 v4 at 2.4 GHz.

\subsection{Benchmarks on random circuits}

We did the following experiments : 
\begin{itemize}
	\item we evaluated the worst case performance of the different algorithms for a range of qubits. For $n=1...100$, we tested the algorithm on 20 random circuits with high depth $=2n$ to reach with high probability the worst cases.
	\item We also evaluated the capacity of the different algorithms to find shallow circuits for a specific problem size. For $n = 60$, we tested our algorithms on random circuits of various depth from $1$ to $ \approx 80$ with $20$ circuits for each depth.
\end{itemize}

\subsubsection{Evaluation of DaCSynth}

\begin{figure}[tbp]
\includegraphics[scale=0.42]{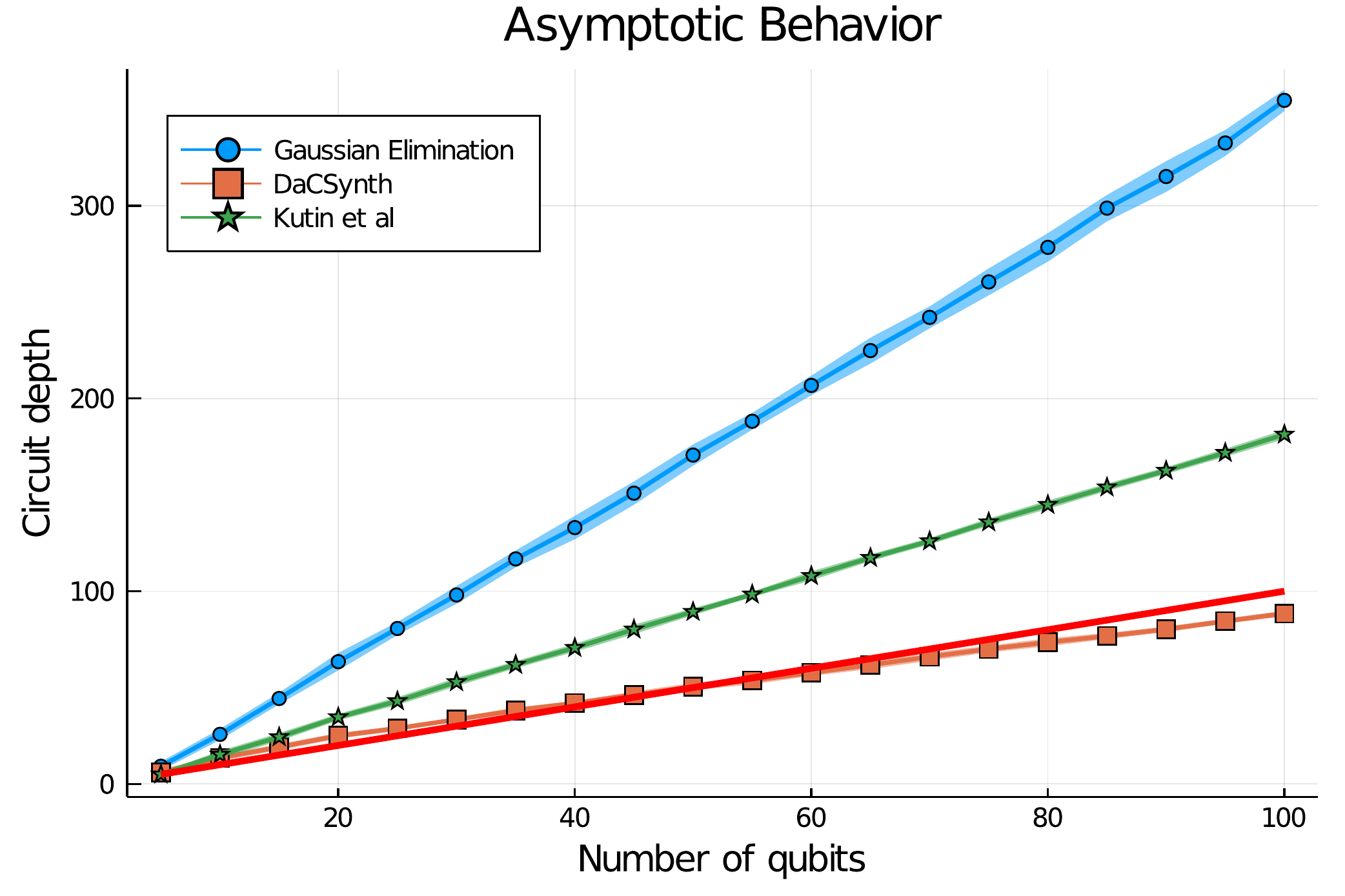}
\caption{Average performance of DaCSynth vs Gaussian elimination algorithm and \cite{DBLP:journals/cjtcs/KutinMS07}.}
\label{asymptotic_bench_dac}
\end{figure} 

For clarity we do not show all the methods at once. We first show the worst case performance of DaCSynth against the Gaussian elimination algorithm and Kutin \textit{et al.}'s algorithm in Figure~\ref{asymptotic_bench_dac}. In this case, both three algorithms have a linear complexity and we almost recover the theoretical worst case complexities: $\approx 4n$ for the Gaussian elimination algorithm, $\approx 2n$ for Kutin \textit{et al}'s algorithm. The depth complexity of DaCSynth is close to $n$ when $n < 50$ but tends to $0.85n$ when $n > 50$. For larger values of $n$ not shown in this graph ($100 < n < 1000$) the depth complexity seems to remain around $0.85n$ so we cannot really say if this complexity actually hides a complexity in $n/\log_2(n)$ or not. Our current implementation cannot deal with larger number of qubits, it would be interesting to implement a more efficient version of DaCSynth to see how DaCSynth behaves and also to do a proper comparison with the algorithm from \cite{DBLP:conf/soda/JiangSTW0Z20}.

DaCSynth outperforms the state of the art by at least a factor of $2$ and this outperformance is also visible in the close-to-optimal experiment given in Figure~\ref{60qubits_dac}. DaCSynth still is able to re-synthesize a circuit with small depth, although it cannot give optimal results. Overall, the behavior of those three algorithms (DaCSynth, the Gaussian Elimination and the extension of Kutin \textit{et al.}'s algorithm) is similar for any problem size. Consequently for the rest of the benchmarks we now consider DaCSynth as the state-of-the-art method.

\begin{figure}[tbp]
\includegraphics[scale=0.42]{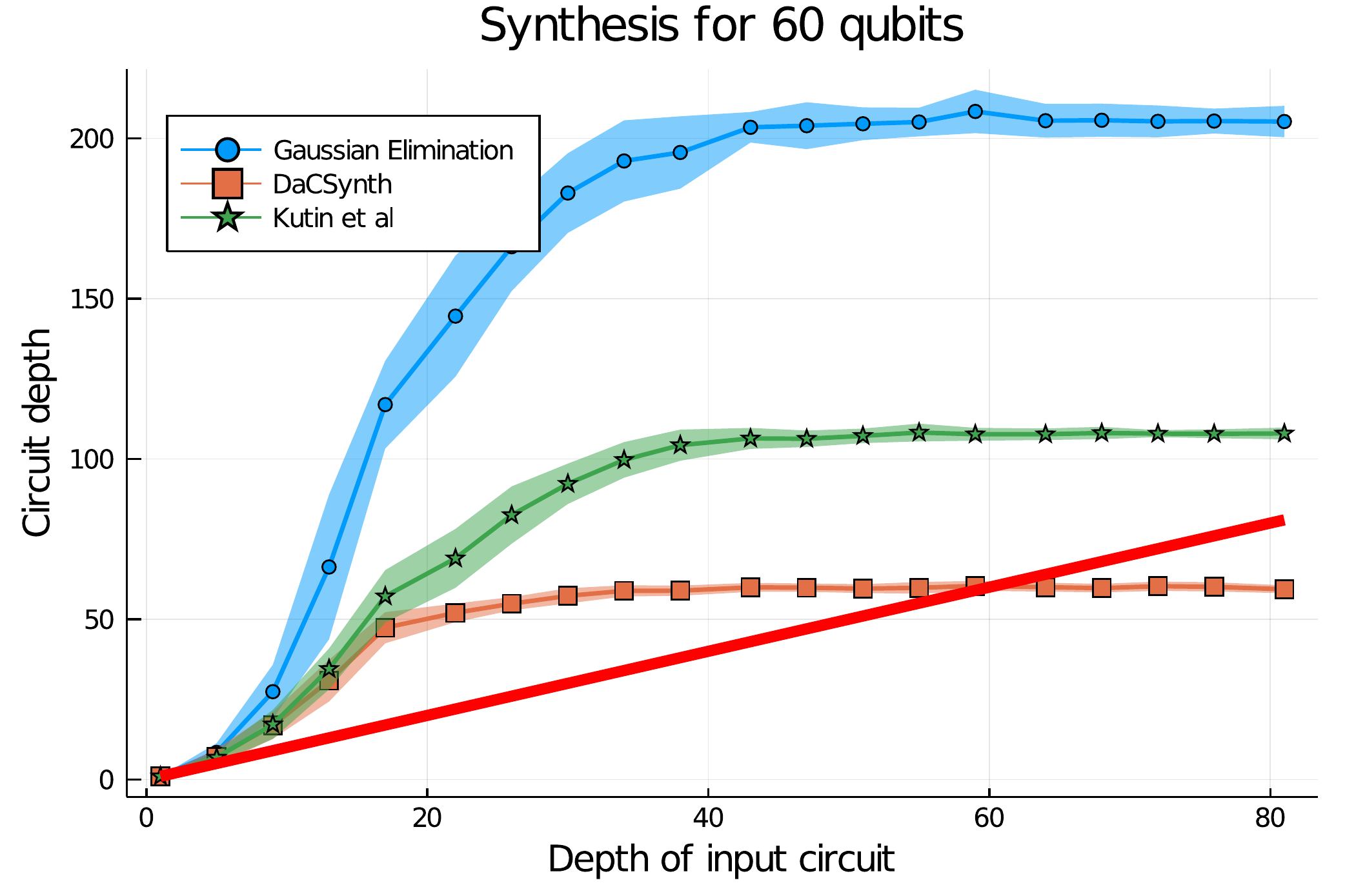}
\caption{Performance of DaCSynth vs Gaussian elimination algorithm and \cite{DBLP:journals/cjtcs/KutinMS07} on 60 qubits for different input circuits depths.}
\label{60qubits_dac}
\end{figure}
\begin{figure}[htbp]
\includegraphics[scale=0.42]{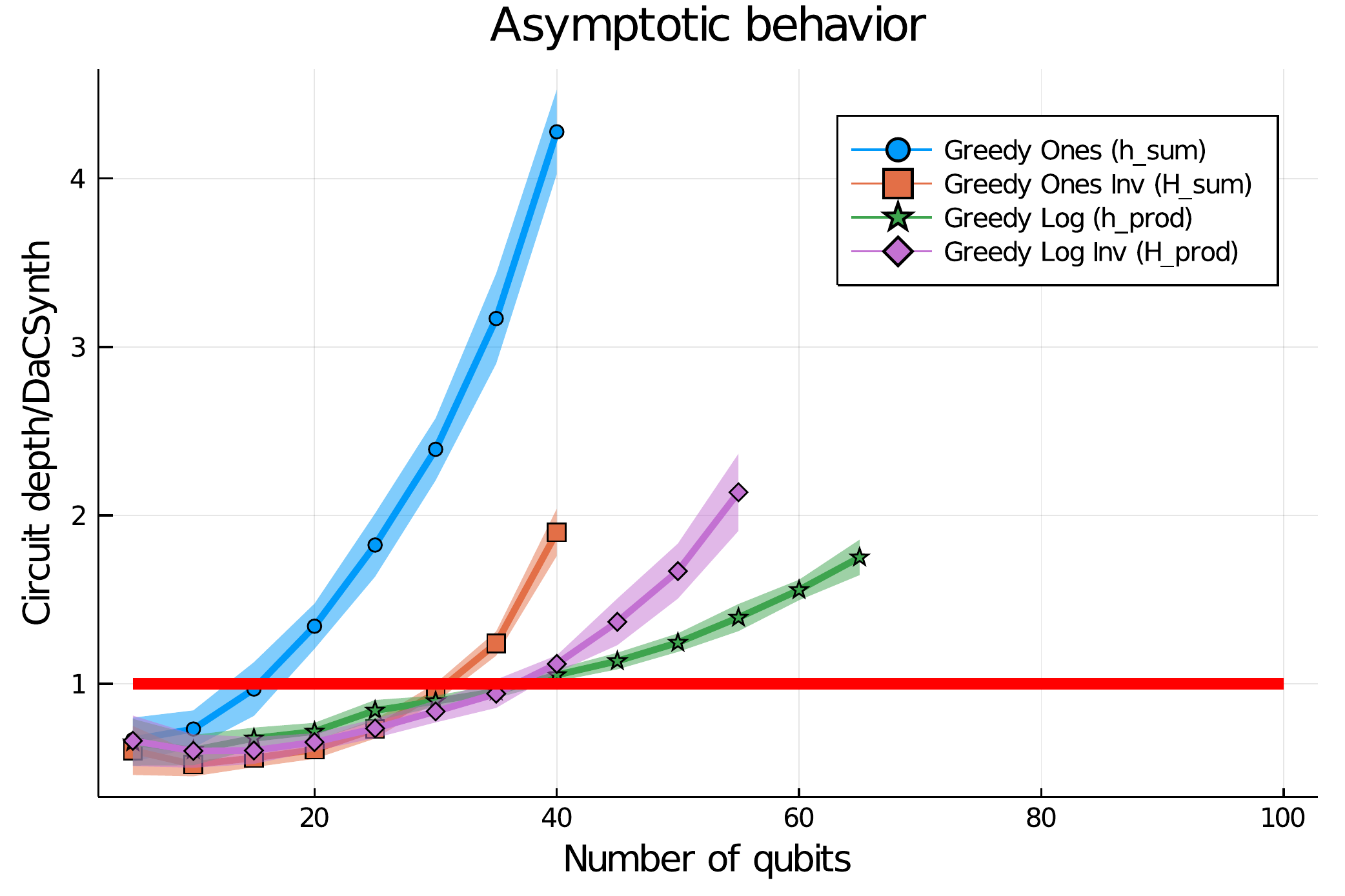}
\caption{Average performance of cost minimization techniques vs DaCSynth.}
\label{asymptotic_bench_greedy}
\end{figure} 

\begin{figure*}
\begin{subfigure}{0.48\textwidth}
  \includegraphics[scale=0.35]{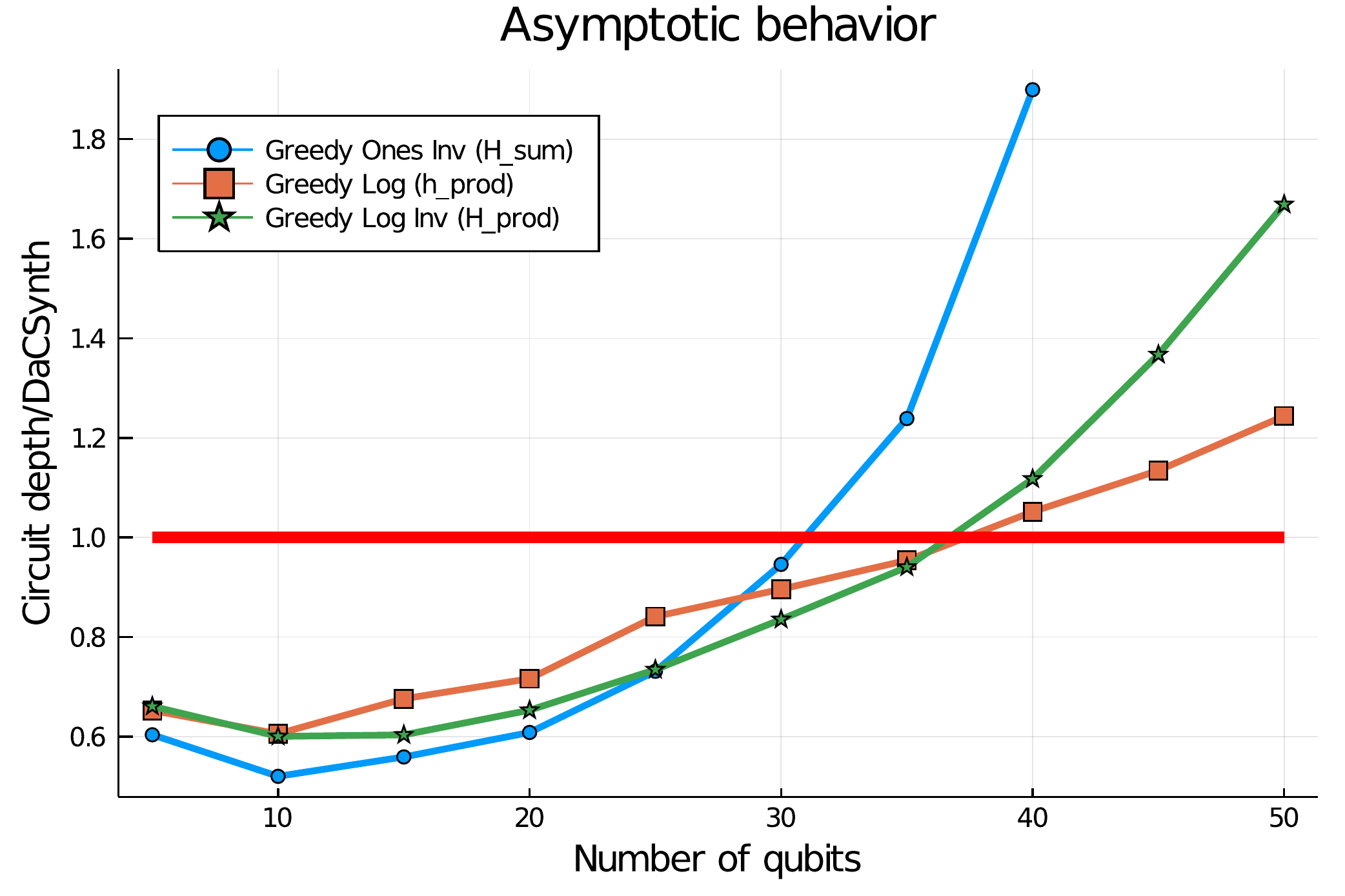}
  \caption{Average performance of cost minimization techniques (\{$h_{\text prod},H_{\text sum},H_{\text prod}$\}) vs DaCSynth (without standard deviation).\\ \hfill \\}
  \label{asymptotic_bench_greedy_zoom}
\end{subfigure}
\hfill
\begin{subfigure}{0.48\textwidth}
  \includegraphics[scale=0.35]{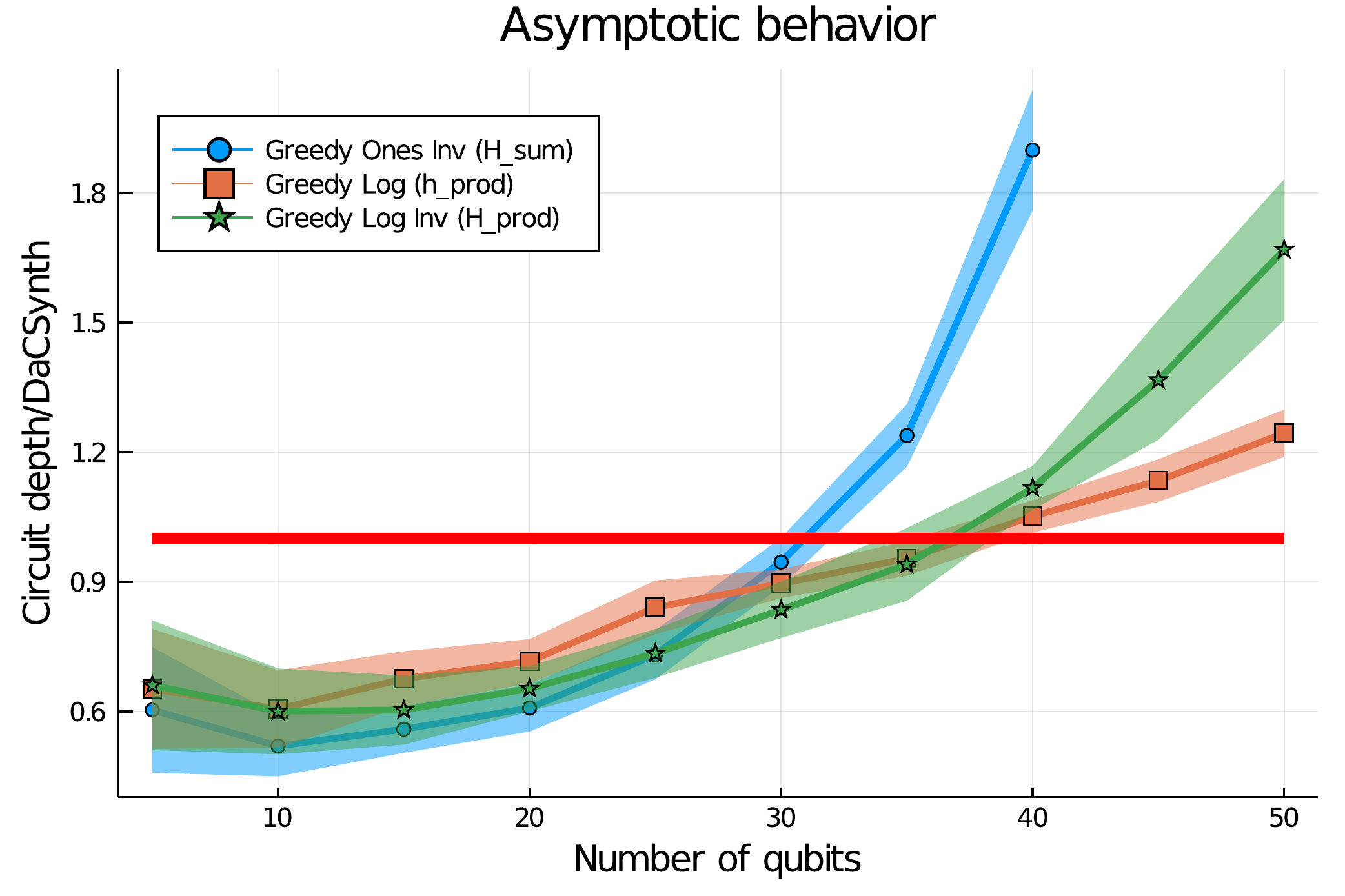}
\subcaption{Average performance of cost minimization techniques (\{$h_{\text prod},H_{\text sum},H_{\text prod}$\}) vs DaCSynth (with standard deviation).}
\label{asymptotic_bench_greedy_zoom_std}
\end{subfigure}
\caption{Average performance of cost minimization techniques (\{$h_{\text prod},H_{\text sum},H_{\text prod}$\}) vs DaCSynth.}
\label{depth_average_greedy}
\end{figure*}
\begin{figure*}
\begin{subfigure}{0.48\textwidth}
  \includegraphics[scale=0.40]{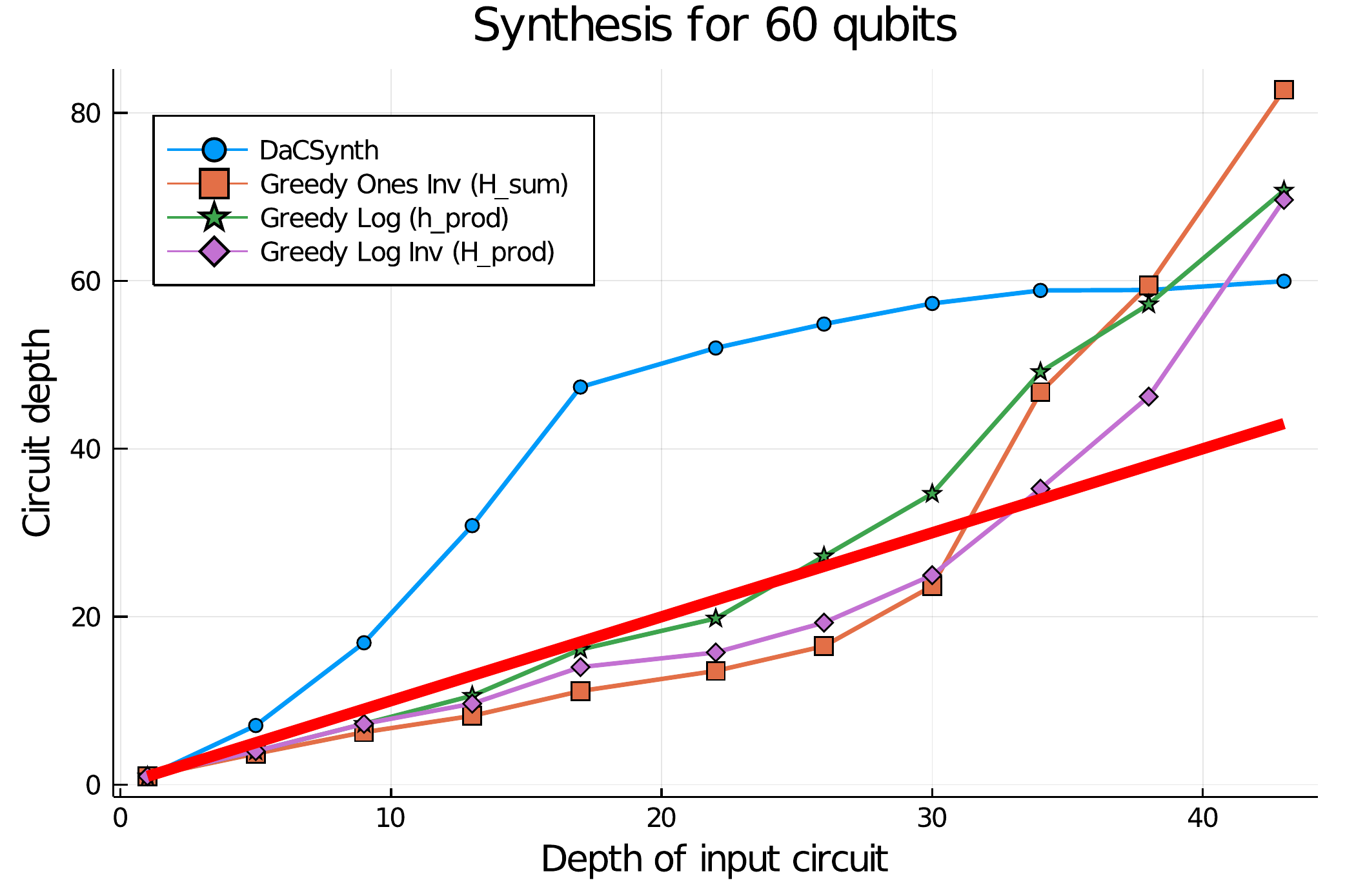}
  \caption{Performance of cost minimization techniques (\{$h_{\text prod},H_{\text sum},H_{\text prod}$\}) and DaCSynth on 60 qubits for different input circuits depths (without standard deviation).}
  \label{60qubits}
\end{subfigure}
\hfill
\begin{subfigure}{0.48\textwidth}
  \includegraphics[scale=0.40]{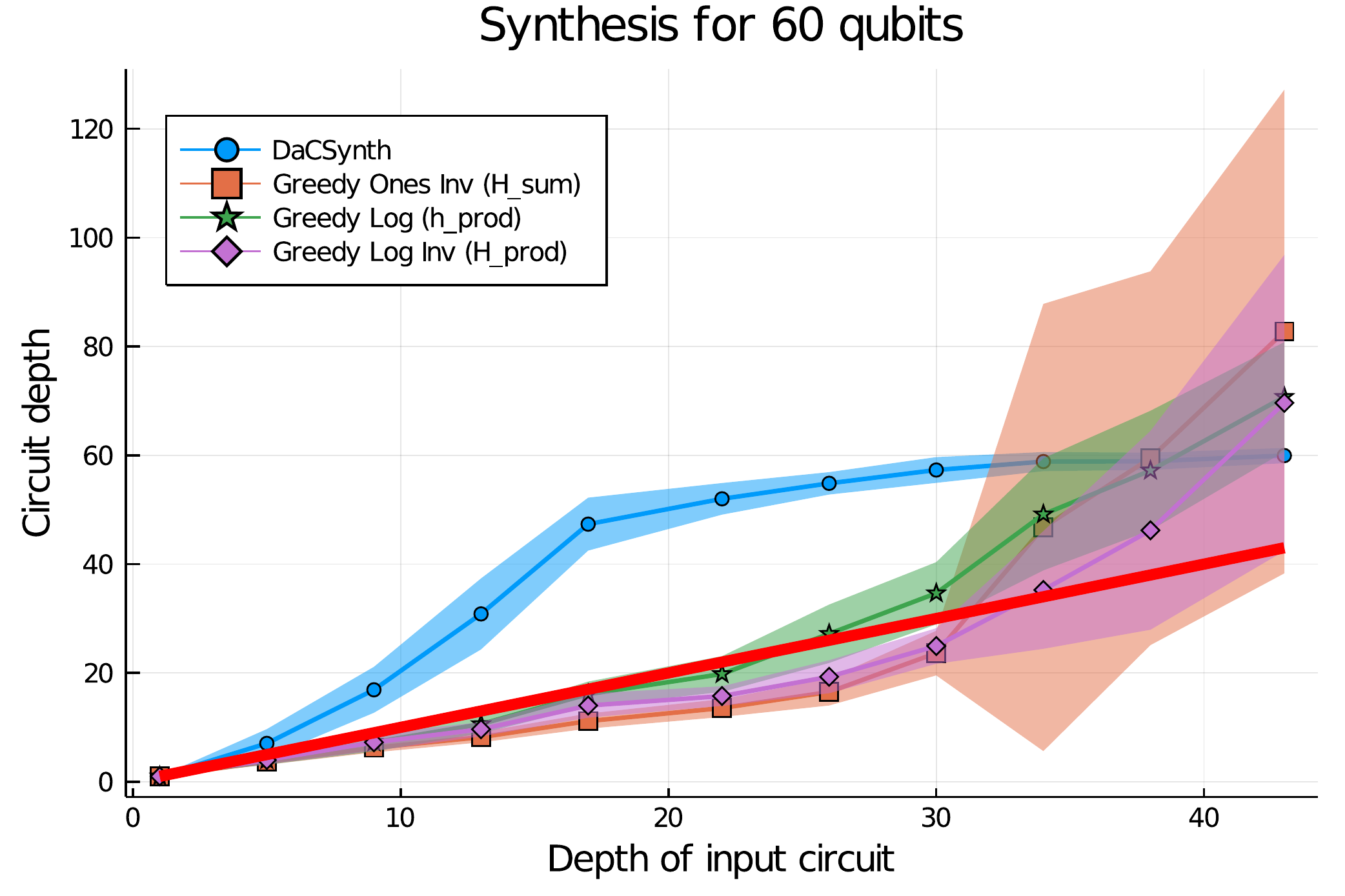}
\subcaption{Performance of cost minimization techniques (\{$h_{\text prod},H_{\text sum},H_{\text prod}$\}) vs DaCSynth on 60 qubits for different input circuits depths (with standard deviation).}
\label{60qubits_std}
\end{subfigure}
\caption{Performance of cost minimization techniques (\{$h_{\text prod},H_{\text sum},H_{\text prod}$\}) and DaCSynth on 60 qubits.}
\label{60qubits_perf}
\end{figure*}
\begin{figure*}
\begin{subfigure}{0.48\textwidth}
  \includegraphics[scale=0.40]{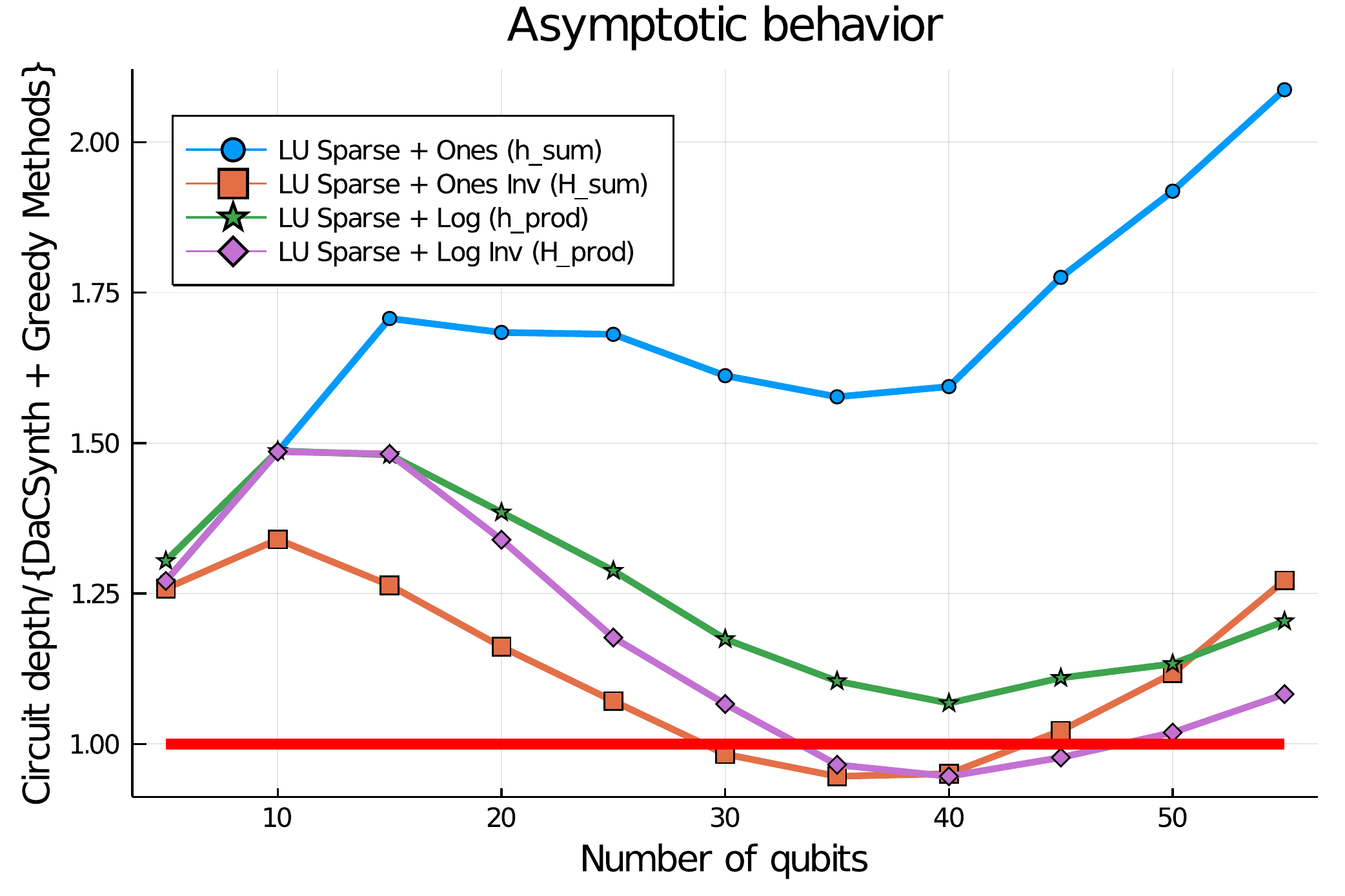}
  \caption{Average performance of \{ LU decomposition + greedy methods \} vs \{ DaCSynth and purely greedy methods \} (without standard deviation).}
  \label{asymptotic_bench_lu_zoom}
\end{subfigure}
\hfill
\begin{subfigure}{0.48\textwidth}
  \includegraphics[scale=0.40]{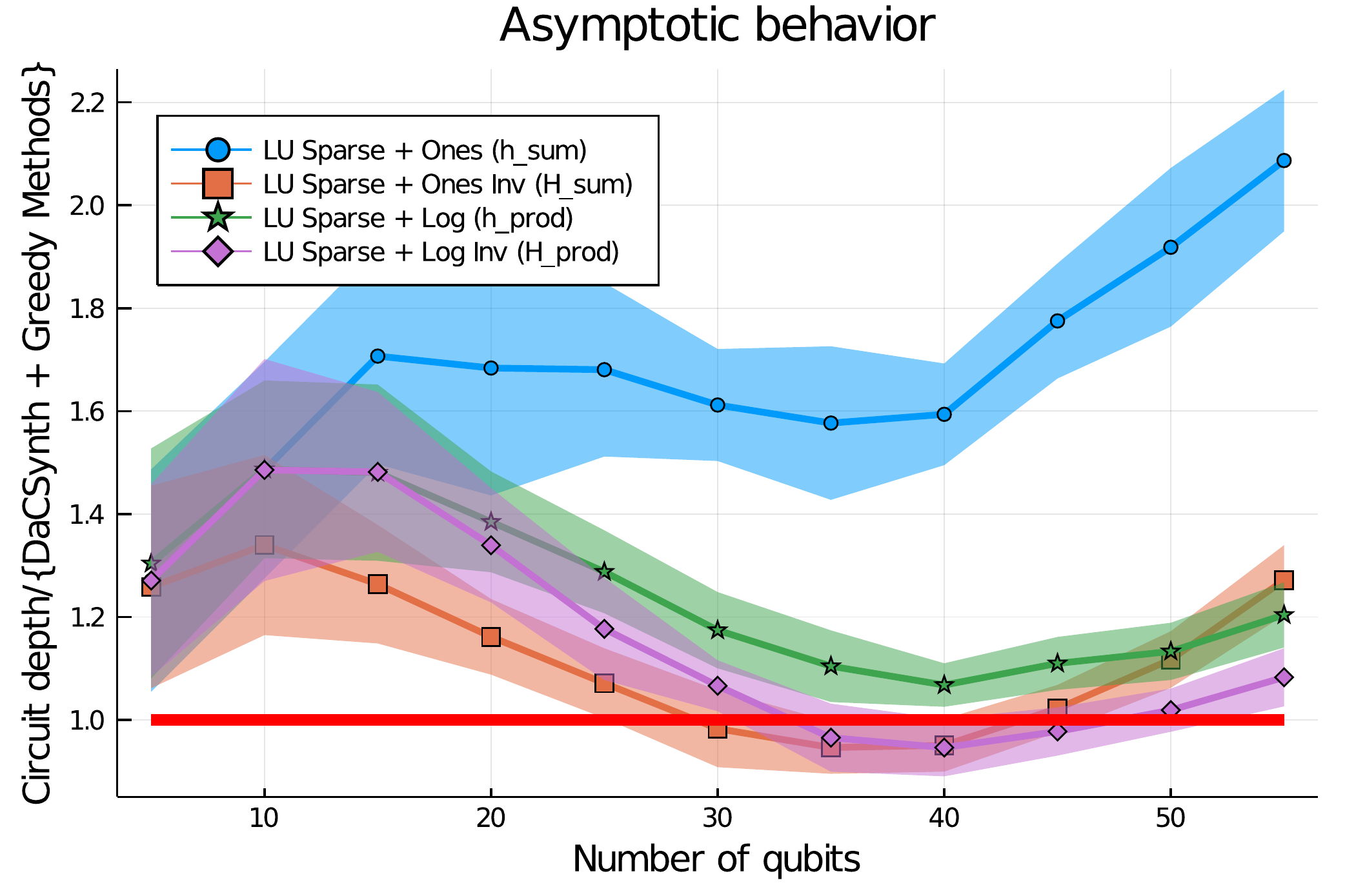}
\caption{Average performance of \{ LU decomposition + greedy methods \} vs \{ DaCSynth and purely greedy methods \} (with standard deviation).}
\label{asymptotic_bench_lu_zoom_std}
\end{subfigure}
\caption{Average performance of \{ LU decomposition + cost minimization techniques \} vs\\ \{ DaCSynth and purely greedy methods \}.}
\label{depth_average_greedy_lu}
\end{figure*}

\subsubsection{Evaluation of the purely greedy algorithms}

We now evaluate the performance of our greedy algorithms against DaCSynth. The results of the worst-case experiment are given in Figure~\ref{asymptotic_bench_greedy} and Figure~\ref{depth_average_greedy}. For clarity we plot the ratio between the depth of the circuits returned by the greedy algorithms and the depth of the circuits returned by DaCSynth. So when the ratio is smaller than 1 this means that the greedy method outperforms DaCSynth. 
The results with the four cost functions are given in Figure
\ref{asymptotic_bench_greedy}. For small $n$ the greedy methods always
outperform DaCSynth but inevitably as $n$ grows the performance of the
greedy methods deteriorates, and this exponentially fast. In fact,
when $n$ is sufficiently large the cost minimization process can no
longer converge to a solution and we stop the experiments when it is
clear that the cost minimization process cannot outperform
DaCSynth. The bad scalability of the greedy methods is particularly
visible with the cost function $h_{\text{sum}}$. Clearly this cost
function is always outperformed by the three others but the scale of
the figure prevents proper discernment of the performance of the other
three cost functions. A zoom in the range $n=0..50$ without the cost
function $h_{\text{sum}}$ is proposed in
Figure~\ref{depth_average_greedy}. In a way, we get results similar to
those obtained for a size optimization approach in
\cite{de2021gaussian}. The $H_{\text{sum}}$ cost function provides the
best results for $n< 25$ but its performances deteriorate faster than
with the log based cost functions. However, contrary to the size
optimization case where we highlighted a slight range of qubits where
the cost function $h_{\text{prod}}$ could perform better than
$H_{\text{sum}}$, here this is the cost function $H_{\text{prod}}$
that seems to give the best results for the approximate range
$n=25..40$. However, when we plot the results with the standard
deviation in Figure~\ref{asymptotic_bench_greedy_zoom_std}, we see
that this advantage of one cost function over the others is relative
in view of the high variance in the results. We recover similar
results with the close-to-optimal experiment, see
Figure~\ref{60qubits_perf}. The two cost functions $H_{\text{sum}}$
and $H_{\text{prod}}$ provides what seems to be optimal results when
the input operator is generated with a shallow circuit. In fact both
cost functions consistently enable to resynthesize the operator with a
shallower circuit than the one given as input. There is a threshold
when the input circuits are of depth $30$ or larger and it becomes
harder for the cost minimization technique to converge promptly to a
solution. The performance of our greedy methods deteriorate with a
high variance in the results, especially with the $H_{\text{sum}}$
cost function, see Figure~\ref{60qubits_std}. Finally when we get
closer and closer to a worst case DaCSynth eventually provides the
best results, in accordance with the worst case experiment done above.

\medskip
We also did some experiments with the greedy algorithms combined with the LU decomposition. We only show the worst-case experiment results in Figure~\ref{depth_average_greedy_lu}. The ratio between the method benchmarked and the best result among the purely greedy methods and DaCSynth is plotted. So if the value of the curve is below $1$ this means that the algorithm outperforms all the other ones we have studied so far. Overall there is a range of qubits (between 30 and 45) where the association of the LU decomposition and the greedy methods slightly provides the best results. Again this advantage must be nuanced because the variance of the results is quite high (see Figure~\ref{asymptotic_bench_lu_zoom_std}). Less significant results were obtained in the close-to-optimal experiment so we do not report them.

\medskip
Overall, for practical synthesis problems, for instance for the library of reversible functions we consider in Section~\ref{sec::bench::reversible}, it is preferable to try all three cost functions, with and without the LU decomposition, and keep the best result. Globally all the execution times required for the greedy methods are in the same range of magnitude, so we only increase the total execution time by a factor given by the number of methods we want to try. Further analysis needs to be done to determine when each method should be used.

\subsection{Benchmarks on reversible functions} \label{sec::bench::reversible}

We apply our method to the synthesis of well known reversible functions. Our goal is to show that we can mitigate the CNOT cost of the circuits while keeping the T-cost as low as possible. Our strategy is simple: we scan a circuit where the T-cost has been optimized and we re-synthesize each CNOT sub-circuit appearing. This way we keep the T-count and the T-depth as low as possible. We already showed in \cite{de2021gaussian} that the CNOT count can be significantly reduced, but also the depth to our surprise. Here we focus primarily on depth optimization.

We choose the Tpar algorithm \cite{DBLP:journals/tcad/AmyMM14} for the pre-processing part: this is the best algorithm to our knowledge for the T-depth optimization. The Tpar algorithm also works with ancillary qubits and requires the synthesis of linear reversible circuits with encoded ancillas where we can use the extension of our framework. 
Since Tpar, other algorithms optimizing the T-count have been proposed \cite{amy2019t,kissinger2020reducing,zhang2019optimizing,2058-9565-4-1-015004} and can be plugged before the Tpar algorithm in the pre-processing part. Even though such algorithms provide better T-counts, the T-par algorithm alone remains competitive for the T-count and we believe using only Tpar does not alter the global message of this section which is that the depth of reversible circuits can be significantly reduced without increasing other metrics of importance like the T-depth. So, for simplicity, we do not consider those newer T-count optimizers.

The library of reversible functions we used is taken from Matthew Amy's github repository \cite{meamy}. Still from from Matthew Amy's github repository we used his C++ implementation of the Tpar algorithm. Although a more recent implementation in Haskell exists, by the time we write this paper it does not take into account ancillary qubits. 

\subsubsection{Ancilla-free results}

Without ancillary qubits, the results are given in Table~\ref{table_results}. For each reversible function we provide the statistics (T-count, T-depth, CNOT count, Total depth) of the original circuit and the circuit optimized solely with the Tpar algorithm. As the T-count and T-depth remains unchanged after our post-optimization process, we only show the new CNOT count and total depth after running our framework for size optimization from \cite{de2021gaussian} and our framework for depth optimization described in this article. For those two metrics (CNOT count and total depth) the savings compared to the Tpar algorithm are also given. 

We also compare ourselves against an heuristic optimization from Nam, Ross, Su, Childs and Maslov \cite{Continuous}, which is now one of the state-of-the-art methods for quantum circuits optimization. Although its primary objective is gate count optimization, the heuristic also improves the total depth of the circuits. Two versions of their algorithm are proposed, corresponding to "light" and "heavy" optimization procedures. The code is not open source but the circuits are available in Neil Ross's github repository \cite{njross}. So, when available, we chose the results of the heavy optimization and reported them in Table~\ref{table_results}.

Compared to Tpar, for almost every function we manage to reduce significantly the total depth: $-47\%$ in average, with a maximum of $-92\%$. The Galois field multipliers are notably the functions that benefit the most from our optimization with at least $65\%$ of reduction. As a bonus, we also have better CNOT counts with $40\%$ of gain in average.

The heuristic from \cite{Continuous} gives the best CNOT counts of the four methods by far. This significant decrease in the CNOT count has also a beneficial impact on the total depth but these gains come at the expense of another metric of importance: the T-depth. More exactly, the heuristic does not benefit from the optimizations done by Tpar, which results in a difference in the T-depth of more than $100\%$ in almost every circuit, with up to $760\%$ for the GF($2^{32})$ multiplier.
Still, we manage to produce circuits with equivalent or even better depths than \cite{Continuous} for $70\%$ of the circuits. Again, our best gains are for the Galois field multipliers with $35\%-40\%$ in average except for GF($2^{32})$ with a gain of $12\%$. There are five circuits for which our method gives significant worse results with an increase of more than $10\%$ in the depth.

Compared to the size optimization framework proposed in \cite{de2021gaussian}, we manage to produce better results for almost every function --- the only exception being the Barenco version of the Toffoli gate on 3 qubits. Again the Galois field multipliers are significantly optimized: for the multiplier in GF($2^{32}$) we reduced the depth from $2130$ to $888$, representing a relative gain of $58\%$. Yet this comes at the price of an increased CNOT count.

\paragraph{}
Overall, these results show that it is possible to significantly optimize the depths of useful circuits while keeping other metrics of importance like the T-depth optimized. We improve a precedent framework that achieves a similar goal but with a focus on size optimization. Compared to a state-of-the-art method that does not optimize the T-depth, we also manage to provide circuits of similar or even better depths, proving that the optimization of the T-depth and the total depth are not incompatible. As future work it would be interesting to design an algorithm that solely focuses on total depth optimization (without particular consideration on the T-depth) and see how our framework compares against.

To evaluate which algorithm was actually useful for providing the best results, we also give in Table~\ref{frequency} the frequency of best performance for each algorithm. More precisely for each method we give the number of times it returned the best result and the number of times it was the only method to return the best result. In \cite{de2021gaussian} we observed that the greedy methods for optimizing the size gave most of the times the best results, emphasizing the idea that overall most of the sub-circuits to re-synthesize correspond to simple operators. We observe a similar pattern here: we could have done the same experiment with solely the greedy algorithms with the three cost functions $H_{\text{sum}}, h_{\text{prod}}$ and $H_{\text{prod}}$. Even though most of the time they all provide the best results, for some specific operators each of these three cost functions were able to uniquely return the shallowest circuits. So we cannot remove one of them. Furthermore the fact that DaCSynth almost never returned the best result is a sign that we never had to synthesize worst case operators. Even for circuits acting on a large number of qubits (the Galois field multiplier GF($2^{32}$) on 96 qubits for instance) the DaCSynth algorithm was not able to back off.

\subsubsection{Results with ancillary qubits}

We now repeat the experiment with the use of ancillary qubits. For each function, we let Tpar compute the number of ancillary qubits necessary to reduce the T-depth to its minimum. Then again we re-synthesize each chunk of purely CNOT circuits. This time due to the use of ancillary qubits we need to synthesize some operators $A \in F_2^{p \times n}$ where $n$ is the number of qubits and $p = n + \# \text{ancillas}$. To do the synthesis of one CNOT circuit we have two options: 
\begin{itemize}
  \item either computing the actual CNOT operator implemented by the CNOT circuit given by the Tpar algorithm, this is the "direct" method,
  \item or we can use our block algorithm described in Section~\ref{sec::extension}, this is the "block" method.
\end{itemize}

Note that even if $p < 2n$ the two methods are not equivalent because in the "block" method we compute the global operator differently (see our explanation in Section~\ref{sec::extension}). 

So for each CNOT sub-circuit we did the synthesis twice, one with each method, and we kept the best result. We had to be careful about the size of the operators to synthesize. For some reversible functions the total number of qubits exceeds several hundreds and some of our methods cannot compute a solution in a reasonable amount of time. So for those extreme cases we had to remove the greedy methods and even the DaCSynth algorithm for the Galois field multiplier in GF($2^{32}$) for which the total number of qubits reaches more than $2500$. Note that these restrictions only concern the "direct" method as the "block" method manages to synthesize only operators on at most $2n$ qubits. We consider it to be an asset of the "block" method over the "direct" synthesis.

The results are given in Table~\ref{table_results_ancilla}. The statistics of the circuit output by the Tpar algorithm are showed and we give the CNOT count and total depth after applying our optimizations. Again the savings compared to the Tpar circuits are given. Overall we have even better savings than in the ancilla-free case. In average we reduce the depth by almost $60\%$. The Galois field multipliers again give the best savings with up to $99\%$ for the multiplier GF($2^{32}$). As a bonus the total number of CNOT gates has also significantly decreased.

The performance frequency of each algorithm is given in Table~\ref{frequency_ancilla}. Again only the greedy methods with the cost functions $H_{\text{sum}}, h_{\text{prod}}$ and $H_{\text{prod}}$ were useful. We also add the number of times the "block" and "direct" methods were the only method to return the best result. When the number of ancillary qubits is small, the direct method has to be privileged. This is probably due to our process to compute the global operator in the "block" method that is not efficient, this should be the subject of a future work. When the number of ancillas increase then the block method is more efficient, notably because the greedy methods do not scale well (both in terms of computational time and circuit size) when the problem size is too large. 

\section{Discussion and future work} \label{sec::discussion}

We see two main areas for improvements: 
\begin{itemize}
  \item First, from a practical point of view, it seems possible to improve the synthesis algorithm when the number of encoded ancillas is not more than $n$. The number of ancillas is not large enough to deploy a block strategy and we proposed a naive way to compute one operator that can "do the job" and that we can directly synthesize. However the benchmarks have revealed that the operator already given by the Tpar algorithm is most of the time a better one. 
  \item Secondly, it seems clear when looking at the benchmarks that the upper bound of DaCSynth is to be improved. It is easy to have some results when we only consider the maximum number of ones in each row of $B = A_3A_1^{-1} \in F_2^{n \times n}$, but combining this with a restriction on the columns makes the proofs harder. For instance it is easy to prove that one can flip no more than $n/2$ entries on each row of $B$ to get a matrix $B \oplus C$ with only two different row vectors. Such matrix can be zeroed in a logarithmic number of steps and this would give an upper bound of approximately $n + \text{logarithmic terms}$ which is closer to our benchmarks. Yet we are unable to get any property on the maximum number of 1 in the columns of $C$ to conclude.
\end{itemize}

We now sketch some ideas of optimizations that could be candidates for improving our framework.
In Section~\ref{sec::extension} we showed that the implementation of an operator $A \in F_2^{p \times n}, p>n$, is in fact as costly as the synthesis of a square operator of size $k \leq 2n$ up to logarithmic terms. 
We want to highlight that this result can lead to new strategies for our initial divide-and-conquer framework and improve the theoretical upper bounds on the depth. 
Consider the matrix $B = A_3A_1^{-1} \in F_2^{n \times n}$ to zero during step 2 of the framework. If $B$ is of rank $k < n$ then we write 
\[ B = DF \]
where $D \in F_2^{n \times k}, F \in F_2^{k \times n}$ are both of rank $k$. By using the block extension algorithm we know that there is a sequence $(E_{i_D j_D})_{i_D j_D}$, resp. $(E_{i_F j_F})_{i_F j_F})$, of row, resp. column, operations of depth $\mathcal{O}(k)$ such that 
\[ \prod_{i_D, j_D} E_{i_D,j_D} D = \begin{pmatrix} I_k \\
    0 \end{pmatrix}\text{\quad and\quad}
 F\prod_{i_F, j_F} E_{i_F,j_F}  = \begin{pmatrix} I_k & 0 \end{pmatrix}.\]
Equivalently 
\[ \prod_{i_D, j_D} E_{i_D,j_D} B \prod_{i_F, j_F} E_{i_F,j_F} = \begin{pmatrix} I_k & 0 \\ 0 & 0 \end{pmatrix}\]
and $B$ can be zeroed with a sequence of operations of depth $\mathcal{O}(k)$. So instead of trying to minimize the number of ones in $B$, as we do, one might be interested in diminishing the rank. The problem can be formulated as: given an integer $r < n$ and $B \in F_2^{n \times n}$ of rank $k > r$, what is the sequence of operations (row operations, column operations, entry flips) of minimum depth that transform $B$ into a matrix of rank $r$?

This problem is related to other problems in the literature. The matrix rigidity of a matrix $A$ is defined as the minimum Hamming distance between $A$ and a matrix of rank $r$. In other words the matrix rigidity of $A$ is the number of entries of A that must be modified in order for the rank to drop to $r$. 
In the literature the concept of matrix rigidity was used to prove lower bounds on the complexity of classical linear circuits \cite{DBLP:conf/mfcs/Valiant77,lokam2009complexity}. Most of the work we found 
on the subject was thus dedicated to finding explicit rigid matrices, which is quite the opposite of our approach. Moreover the distance for the matrix rigidity is defined as the number of flips in $A$ whereas we are concerned in the depth of a sequence of operations. 

The problem of matrix rigidity can be extended with the more general problem of low rank approximations where we try to find, for given target rank $r$,
the solution to 
\[ \underset{\text{rank}(R) = r}{min} \| A - R \|\]
where $\|\cdot \|$ is an appropriate norm. Using the $L_1$ norm and we have the problem of finding the matrix rigidity of $A$. But again 
none of the norms usually considered take into account the depth required to implement $R$. Lastly such problems only consider one of the three
operations that are available to us, namely the entry flips. Although row and column operations alone cannot reduce the rank of a matrix, they can help
in creating a new matrix that needs less entries to flip to have a reduced rank.

Finally can we extend DaCSynth to take into account connectivity constraints? We believe it will be complicated because it is not natural in a restricted
connectivity to split the qubits into two sets, especially if there is no particular symmetry between the two sets. We think it is preferable to improve the results 
from \cite{DBLP:journals/cjtcs/KutinMS07} for the LNN architecture and extend it to an arbitrary topology.

\section{Conclusion} \label{sec::conclusion}

We have proposed DaCSynth, a scalable algorithm for synthesizing shallow linear reversible quantum circuits and we have shown that synthesizing an operator with a divide and conquer algorithm is equivalent to zeroing binary matrices with three elementary operations. This gives a framework that generalizes other works and widens the perspective of finding new techniques.
We have derived an upper bound that improves existing bounds for registers of intermediate sizes and we have used a greedy algorithm to obtain the shallowest possible circuits. In our benchmarks, the circuits produced by the algorithm DaCSynth have a depth which is twice smaller than state-of-the-art algorithms. We have also presented purely greedy algorithms providing close to optimal results for small or simple operators. Applied to the synthesis of a class of reversible functions, we report some substantial savings in the total depth of the circuits while keeping the T-depth as low as possible. 
This work represents one step toward the compilation of quantum circuits that can be executed on a quantum hardware in the future. As future work, we will study how to adapt this method to take into account the connectivity constraints between the qubits in real hardware. Another future work will be to extend DaCSynth to Clifford circuits. Syntheses of Clifford circuits through normal forms are possible (see, e.g., \cite{maslov2018shorter,aaronson2004improved,duncan2020graph}) but it would also be interesting to see if DaCSynth has an equivalent in the symplectic group $Sp(2n, \mathbb{F}_2)$ used to represent Clifford operators with a different set of elementary operations available. An interesting result would be to see if a direct synthesis can produce better depths rather than using normal forms.

\begin{sidewaystable*}
\def\narrow#1{\scalebox{.8}[1.0]{#1}}
\centering
\caption{CNOT optimization of a library of reversible functions with several CNOT circuits synthesis methods. For each reversible function, "Original" reports some statistics (T-count, T-depth, CNOT count, Total depth) of the original circuit, "Tpar" reports the results of the circuit optimized solely by the Tpar algorithm, "Tpar + CNOT size opt." reports the results of the circuit optimized with the Tpar algorithm and post-processed by a size optimization procedure from \cite{de2021gaussian} and finally "Tpar + CNOT depth opt." reports the results of the circuit optimized with the Tpar algorithm and post-processed by our proposed depth optimization procedure. The T-count and T-depth for the last two algorithms are omitted because identical to the Tpar algorithm.}
\label{table_results}
\scalebox{0.7}{
\begin{tabular}{lcc@{~}c@{~}c@{~}cc@{~}c@{~}c@{~}cc@{~}c@{~}c@{~}c@{~}c@{~}c@{~}c@{~}c@{~}cc@{~}c@{~}c@{~}c@{~}cc@{~}c@{~}c@{~}c@{~}c@{~}c} \toprule
      \multirow{2}{*}{Function} & \multirow{2}{*}{\#$n$} &
                                                           \multicolumn{4}{c}{Original} & \multicolumn{4}{c}{Tpar \cite{DBLP:journals/tcad/AmyMM14}} & \multicolumn{9}{c}{Nam \emph{et al.} \cite{Continuous}} & \multicolumn{5}{c}{Tpar + CNOT size opt. \cite{de2021gaussian}} & \multicolumn{6}{c}{Tpar + CNOT depth opt.} \\
      \cmidrule(lr){3-6} \cmidrule(lr){7-10} \cmidrule(lr){11-19} \cmidrule(lr){20-24} \cmidrule(lr){25-30} 
               &     &  T & T & \narrow{CNOT} & \narrow{Depth} &  T & T & \narrow{CNOT} & \narrow{Depth} &
                                                                       T & T & \narrow{\%Diff} & \; & \narrow{CNOT} & \narrow{\%Diff} & \; & \narrow{Depth} & \narrow{\%Diff} & \narrow{CNOT} & \narrow{\%Diff} & \; & \narrow{Depth} & \% \narrow{Diff} & \narrow{CNOT} & \narrow{\%Diff} & \; & \narrow{Depth} & \% \narrow{Diff} & \narrow{\%Diff} \\
               & &\narrow{count} &\narrow{count} & \narrow{count} & ~ & \narrow{count} & \narrow{depth} & \narrow{count} & &
                                     \narrow{count} & \narrow{depth} &
                                                                       \narrow{(vs T-par)} &  & \narrow{count} & \narrow{(vs T-par)} &  & & \narrow{(vs T-par)} & \narrow{count} & \narrow{(vs T-par)} & & & \narrow{(vs T-par)} & \narrow{count} & \narrow{(vs T-par)} & & & \narrow{(vs T-par)} & \narrow{(vs \cite{Continuous})}  \\
      \ \\
Adder\_8& $ 24$ & $399$ & $69$ & $466$ & $223$ & $213$ & $30$ & $741$ & $302$ & $215$ & $73$ & $+143\%$ & & $291$ & $-61\%$ & & $193$ & $-36\%$ & $491$ & $ -34\% $ & & $200$ & $-34\%$ & $511$ & $ -31\% $ & & $190$ & $ -37\% $ & $-2\%$ \\
barenco\_tof\_10& $ 19$ & $224$ & $96$ & $224$ & $288$ & $100$ & $43$ & $332$ & $272$ & $100$ & $81$ & $+88\%$ & & $130$ & $-61\%$ & & $230$ & $-15\%$ & $188$ & $ -43\% $ & & $217$ & $-20\%$ & $188$ & $ -43\% $ & & $215$ & $ -21\% $ & $-7\%$ \\
barenco\_tof\_3& $ 5$ & $28$ & $12$ & $28$ & $36$ & $16$ & $8$ & $52$ & $60$ & $16$ & $16$ & $+100\%$ & & $18$ & $-65\%$ & & $38$ & $-37\%$ & $26$ & $ -50\% $ & & $34$ & $-43\%$ & $26$ & $ -50\% $ & & $35$ & $ -42\% $ & $-8\%$ \\
barenco\_tof\_4& $ 7$ & $56$ & $24$ & $56$ & $72$ & $28$ & $13$ & $96$ & $96$ & $28$ & $27$ & $+108\%$ & & $34$ & $-65\%$ & & $68$ & $-29\%$ & $50$ & $ -48\% $ & & $61$ & $-36\%$ & $50$ & $ -48\% $ & & $61$ & $ -36\% $ & $-10\%$ \\
barenco\_tof\_5& $ 9$ & $84$ & $36$ & $84$ & $108$ & $40$ & $18$ & $134$ & $123$ & $40$ & $36$ & $+100\%$ & & $50$ & $-63\%$ & & $95$ & $-23\%$ & $73$ & $ -46\% $ & & $89$ & $-28\%$ & $73$ & $ -46\% $ & & $87$ & $ -29\% $ & $-8\%$  \\
csla\_mux\_3& $ 15$ & $70$ & $21$ & $90$ & $67$ & $62$ & $8$ & $379$ & $210$ & $64$ & $26$ & $+225\%$ & & $70$ & $-82\%$ & & $63$ & $-70\%$ & $187$ & $ -51\% $ & & $85$ & $-60\%$ & $195$ & $ -49\% $ & & $69$ & $ -67\% $ & $+10\%$ \\
csum\_mux\_9& $ 30$ & $196$ & $18$ & $196$ & $59$ & $84$ & $6$ & $366$ & $153$ & $84$ & $16$ & $+167\%$ & & $140$ & $-62\%$ & & $47$ & $-69\%$ & $179$ & $ -51\% $ & & $76$ & $-50\%$ & $194$ & $ -47\% $ & & $59$ & $ -61\% $ & $+26\%$ \\
cycle\_17\_3& $ 35$ & $4739$ & $2001$ & $4742$ & $5974$ & $1944$ & $562$ & $6608$ & $5231$ & - & - & - & & - & - & & - & - & $4267$ & $ -35\% $ & & $4215$ & $-19\%$ & $4810$ & $ -27\% $ & & $3884$ & $ -26\% $ & - \\
GF($2^{10}$)\_mult& $ 30$ & $700$ & $108$ & $709$ & $290$ & $410$ & $16$ & $2206$ & $1026$ & $410$ & $117$ & $+631\%$ & & $609$ & $-72\%$ & & $291$ & $-72\%$ & $949$ & $ -57\% $ & & $299$ & $-71\%$ & $1124$ & $ -49\% $ & & $192$ & $ -81\% $ & $-34\%$ \\
GF($2^{16}$)\_mult& $ 48$ & $1792$ & $180$ & $1837$ & $489$ & $1040$ & $24$ & $6724$ & $2551$ & $1040$ & $196$ & $+717\%$ & & $1581$ & $-76\%$ & & $492$ & $-81\%$ & $2545$ & $ -62\% $ & & $687$ & $-73\%$ & $3314$ & $ -51\% $ & & $345$ & $ -86\% $ & $-30\%$ \\
GF($2^{32}$)\_mult& $ 96$ & $7168$ & $372$ & $7292$ & $1001$ & $4128$ & $47$ & $34244$ & $11520$ & $4128$ & $404$ & $+760\%$ & & $6299$ & $-82\%$ & & $1006$ & $-91\%$ &  $10972$ & $ -68\% $ & & $2130$ & $-82\%$ & $15680$ & $ -54\% $ & & $888$ & $ -92\% $ & $-12\%$ \\
GF($2^4$)\_mult& $ 12$ & $112$ & $36$ & $115$ & $99$ & $68$ & $6$ & $307$ & $173$ & $68$ & $40$ & $+567\%$ & & $99$ & $-68\%$ & & $102$ & $-41\%$ &  $135$ & $ -56\% $ & & $77$ & $-55\%$ & $154$ & $ -50\% $ & & $61$ & $ -65\% $ & $-40\%$ \\
GF($2^5$)\_mult& $ 15$ & $175$ & $48$ & $179$ & $130$ & $115$ & $9$ & $502$ & $259$ & $115$ & $53$ & $+489\%$ & & $154$ & $-69\%$ & & $131$ & $-49\%$ &  $210$ & $ -58\% $ & & $117$ & $-55\%$ & $236$ & $ -53\% $ & & $80$ & $ -69\% $ & $-39\%$ \\
GF($2^6$)\_mult& $ 18$ & $252$ & $60$ & $257$ & $163$ & $150$ & $9$ & $660$ & $350$ & $150$ & $66$ & $+633\%$ & & $221$ & $-67\%$ & & $166$ & $-53\%$ &  $308$ & $ -53\% $ & & $134$ & $-62\%$ & $372$ & $ -44\% $ & & $95$ & $ -73\% $ & $-43\%$ \\
GF($2^7$)\_mult& $ 21$ & $343$ & $72$ & $349$ & $195$ & $217$ & $12$ & $996$ & $490$ & $217$ & $80$ & $+567\%$ & & $300$ & $-70\%$ & & $198$ & $-60\%$ &  $442$ & $ -56\% $ & & $178$ & $-64\%$ & $524$ & $ -47\% $ & & $127$ & $ -74\% $ & $-36\%$ \\
GF($2^8$)\_mult& $ 24$ & $448$ & $84$ & $469$ & $233$ & $264$ & $13$ & $1254$ & $619$ & $264$ & $92$ & $+608\%$ & & $405$ & $-68\%$ & & $236$ & $-62\%$ &  $589$ & $ -53\% $ & & $223$ & $-64\%$ & $691$ & $ -45\% $ & & $146$ & $ -76\% $ & $-38\%$ \\
GF($2^9$)\_mult& $ 27$ & $567$ & $96$ & $575$ & $258$ & $351$ & $15$ & $1712$ & $810$ & $351$ & $105$ & $+600\%$ & & $494$ & $-71\%$ & & $259$ & $-68\%$ &  $753$ & $ -56\% $ & & $272$ & $-66\%$ & $894$ & $ -48\% $ & & $160$ & $ -80\% $ & $-38\%$ \\
grover\_5& $ 9$ & $336$ & $144$ & $336$ & $457$ & $154$ & $51$ & $499$ & $477$ & - & - & - & & - & - & & - & - &  $331$ & $ -34\% $ & & $377$ & $-21\%$ & $332$ & $ -33\% $ & & $370$ & $ -22\% $ & - \\
ham15-high& $ 20$ & $2457$ & $996$ & $2500$ & $3026$ & $1019$ & $380$ & $3427$ & $2956$ & - & - & - & & - & - & & - & - &  $2183$ & $ -36\% $ & & $2227$ & $-25\%$ & $2261$ & $ -34\% $ & & $2153$ & $ -27\% $ & - \\
ham15-low& $ 17$ & $161$ & $69$ & $259$ & $263$ & $97$ & $33$ & $471$ & $360$ & - & - & - & & - & - & & - & - &  $280$ & $ -41\% $ & & $221$ & $-39\%$ & $294$ & $ -38\% $ & & $206$ & $ -43\% $ & - \\
ham15-med& $ 17$ & $574$ & $240$ & $616$ & $750$ & $230$ & $84$ & $759$ & $682$ & - & - & - & & - & - & & - & - &  $481$ & $ -37\% $ & & $503$ & $-26\%$ & $492$ & $ -35\% $ & & $490$ & $ -28\% $ & - \\
hwb6& $ 7$ & $105$ & $45$ & $131$ & $152$ & $75$ & $24$ & $270$ & $248$ & - & - & - & & - & - & & - & - & $172$ & $ -36\% $ & & $157$ & $-37\%$ & $177$ & $ -34\% $ & & $153$ & $ -38\% $ & - \\
hwb8& $ 12$ & $5887$ & $2139$ & $7970$ & $7956$ & $3531$ & $860$ & $22670$ & $15838$ & - & - & - & & - & - & & - & - &  $13351$ & $ -41\% $ & & $9120$ & $-42\%$ & $14226$ & $ -37\% $ & & $8247$ & $ -48\% $ & - \\
mod5\_4& $ 5$ & $28$ & $12$ & $32$ & $41$ & $16$ & $6$ & $48$ & $57$ & $16$ & $14$ & $+133\%$ & & $28$ & $-42\%$ & & $46$ & $-19\%$ & $32$ & $ -33\% $ & & $39$ & $-32\%$ & $33$ & $ -31\% $ & & $40$ & $ -30\% $ & $-13\%$ \\
mod\_adder\_1024& $ 28$ & $1995$ & $831$ & $2005$ & $2503$ & $1011$ & $258$ & $3650$ & $2560$ & $1011$ & $743$ & $+188\%$ & & $1278$ & $-65\%$ & & $2086$ & $-19\%$ &  $2369$ & $ -35\% $ & & $1906$ & $-26\%$ & $2575$ & $ -29\% $ & & $1755$ & $ -31\% $ & $-16\%$ \\
mod\_adder\_1048576& $ 58$ & $17290$ & $7292$ & $17310$ & $21807$ & $7298$ & $1927$ & $29794$ & $19975$ & - & - & - & & - & - & & - & - & $19122$ & $ -36\% $ & & $15258$ & $-24\%$ & $21179$ & $ -29\% $ & & $14055$ & $ -30\% $ \\
mod\_mult\_55& $ 9$ & $49$ & $15$ & $55$ & $50$ & $35$ & $7$ & $106$ & $75$ & $35$ & $19$ & $+171\%$ & & $40$ & $-62\%$ & & $47$ & $-37\%$ & $73$ & $ -31\% $ & & $52$ & $-31\%$ & $75$ & $ -29\% $ & & $50$ & $ -33\% $ & $+6\%$ \\
mod\_red\_21& $ 11$ & $119$ & $48$ & $122$ & $158$ & $73$ & $25$ & $223$ & $207$ & $73$ & $50$ & $+100\%$ & & $77$ & $-65\%$ & & $129$ & $-38\%$ & $136$ & $ -39\% $ & & $145$ & $-30\%$ & $139$ & $ -38\% $ & & $137$ & $ -34\% $ & $+6\%$ \\
qcla\_adder\_10& $ 36$ & $238$ & $24$ & $267$ & $73$ & $162$ & $13$ & $648$ & $195$ & $162$ & $26$ & $+100\%$ & & $183$ & $-72\%$ & & $63$ & $-68\%$ & $363$ & $ -44\% $ & & $92$ & $-53\%$ & $375$ & $ -42\% $ & & $85$ & $ -56\% $ & $+35\%$ \\
qcla\_com\_7& $ 24$ & $203$ & $27$ & $215$ & $81$ & $94$ & $12$ & $371$ & $154$ & $95$ & $26$ & $+117\%$ & & $132$ & $-64\%$ & & $69$ & $-55\%$ & $202$ & $ -46\% $ & & $82$ & $-47\%$ & $208$ & $ -44\% $ & & $77$ & $ -50\% $ & $+12\%$ \\
qcla\_mod\_7& $ 26$ & $413$ & $66$ & $441$ & $197$ & $231$ & $28$ & $813$ & $296$ & $235$ & $72$ & $+157\%$ & & $292$ & $-64\%$ & & $183$ & $-38\%$ & $479$ & $ -41\% $ & & $176$ & -41\% & $494$ & $ -39\% $ & & $169$ & $ -43\% $ & $-8\%$ \\
qft\_4& $ 5$ & $69$ & $48$ & $48$ & $142$ & $67$ & $44$ & $96$ & $185$ & - & - & - & & - & - & & - & - & $56$ & $ -42\% $ & & $150$ & $-19\%$ & $56$ & $ -42\% $ & & $149$ & $ -19\% $ & - \\
rc\_adder\_6& $ 14$ & $77$ & $33$ & $104$ & $104$ & $47$ & $22$ & $165$ & $157$ & $47$ & $31$ & $+41\%$ & & $71$ & $-57\%$ & & $83$ & $-47\%$ & $100$ & $ -39\% $ & & $103$ & $-34\%$ & $100$ & $ -39\% $ & & $95$ & $ -39\% $ & $+14\%$ \\
tof\_10& $ 19$ & $119$ & $51$ & $119$ & $153$ & $71$ & $27$ & $236$ & $190$ & $71$ & $55$ & $+104\%$ & & $70$ & $-70\%$ & & $136$ & $-28\%$ & $130$ & $ -45\% $ & & $140$ & -26\% & $132$ & $ -44\% $ & & $136$ & $ -28\% $ & $0\%$ \\
tof\_3& $ 5$ & $21$ & $9$ & $21$ & $27$ & $15$ & $6$ & $35$ & $46$ & $15$ & $13$ & $+117\%$ & & $14$ & $-60\%$ & & $31$ & $-33\%$ & $21$ & $ -40\% $ & & $31$ & $-33\%$ & $21$ & $ -40\% $ & & $31$ & $ -33\% $ & $0\%$ \\
tof\_4& $ 7$ & $35$ & $15$ & $35$ & $45$ & $23$ & $9$ & $63$ & $71$ & $23$ & $19$ & $+111\%$ & & $22$ & $-65\%$ & & $46$ & $-35\%$ & $37$ & $ -41\% $ & & $49$ & $-31\%$ & $39$ & $ -38\% $ & & $43$ & $ -39\% $ & $-7\%$ \\
tof\_5& $ 9$ & $49$ & $21$ & $49$ & $63$ & $31$ & $12$ & $97$ & $104$ & $31$ & $25$ & $+108\%$ & & $30$ & $-69\%$ & & $61$ & $-41\%$ & $50$ & $ -48\% $ & & $67$ & $-36\%$ & $52$ & $ -46\% $ & & $63$ & $ -39\% $ & $+3\%$ \\
vbe\_adder\_3& $ 10$ & $70$ & $24$ & $80$ & $79$ & $24$ & $9$ & $120$ & $88$ & $24$ & $16$ & $+78\%$ & & $50$ & $-58\%$ & & $56$ & $-36\%$ & $61$ & $ -49\% $ & & $46$ & $-48\%$ & $61$ & $ -49\% $ & & $45$ & $ -49\% $ & $-20\%$ \\
\midrule
      Mean difference    &&&&&&&&&&&&$+276.8\%$&&&  $-66\%$  &&&$-46.55\%$&&$-45.03\%$&&&$-41.66\%$&&$-41.37\%$&&&$-46.68\%$&$-10.24\%$\\
      Worst savings &&&&&&&&&&&&$+760\%$ &&&  $-42\%$  &&&$-15\%$&&$-31\%$   &&&-19\%&&$-27\%$&&&$-19\%$&$+35\%$\\
      Best savings &&&&&&&&&&&&$+41\%$   &&&  $-82\%$  &&&$-91\%$&&$-68\%$   &&&-82\%&&$-54\%$&&&$-92\%$&$-43\%$\\
      \bottomrule
      \end{tabular}
}
\end{sidewaystable*}

\begin{sidewaystable*}
\def\narrow#1{\scalebox{.8}[1.0]{#1}}
  \centering
\caption{Frequency of best performance of each algorithm during the optimization of reversible circuits. For each algorithm, the first column gives the number of times it has returned the best result (possibly other algorithms returned circuits of same size). The second column reports the number of times it was the only one to provide the best possible circuit.}\label{frequency}
\scalebox{0.7}{
\begin{tabular}{l@{~}c@{~}c@{~}c@{~}c@{~}c@{~}c@{~}c@{~}c@{~}c@{~}c@{~}c@{~}c@{~}c@{~}c@{~}c@{~}c@{~}c@{~}c@{~}c@{~}c}
  \toprule
  \multirow{2}{*}{Function} & \multirow{2}{*}{\#$n$} & \multirow{2}{*}{\makecell{\#CNOT \\ sub-circuits}} & \multicolumn{2}{c}{Kutin et al} & \multicolumn{2}{c}{DaCSynth} & \multicolumn{2}{c}{Greedy ($H_{\text{sum}}$, size)} & \multicolumn{2}{c}{Greedy ($h_{\text{prod}}$, size)} & \multicolumn{2}{c}{Greedy ($H_{\text{sum}}$)} & \multicolumn{2}{c}{Greedy ($h_{\text{prod}}$)} & \multicolumn{2}{c}{Greedy ($H_{\text{prod}}$)} & \multicolumn{2}{c}{LU + Greedy ($H_{\text{sum}}$)} & \multicolumn{2}{c}{LU + Greedy ($H_{\text{prod}}$)} \\
  \cmidrule(lr){4-5} \cmidrule(lr){6-7} \cmidrule(lr){8-9} \cmidrule(lr){10-11} \cmidrule(lr){12-13} \cmidrule(lr){14-15} \cmidrule(lr){16-17} \cmidrule(lr){18-19} \cmidrule(lr){20-21}
  & & & Best  & Only  & Best  & Only & Best   & Only  & Best  & Only  & Best   & Only  & Best  & Only  & Best  & Only  & Best  & Only  & Best  & Only  \\
Adder\_8& $ 24$ & 60 & 29 (48\%) & 0 (0\%)& 15 (25\%) & 0 (0\%)& 39 (65\%) & 0 (0\%)& 38 (63\%) & 0 (0\%)& 60 (100\%) & 3 (5\%)& 53 (88\%) & 0 (0\%)& 53 (88\%) & 0 (0\%)& 37 (62\%) & 0 (0\%)& 38 (63\%) & 0 (0\%)\\
barenco\_tof\_10& $ 19$ & 99 & 64 (65\%) & 0 (0\%)& 44 (44\%) & 0 (0\%)& 99 (100\%) & 0 (0\%)& 99 (100\%) & 0 (0\%)& 99 (100\%) & 0 (0\%)& 99 (100\%) & 0 (0\%)& 99 (100\%) & 0 (0\%)& 83 (84\%) & 0 (0\%)& 83 (84\%) & 0 (0\%)\\
barenco\_tof\_3& $ 5$ & 14 & 11 (79\%) & 0 (0\%)& 7 (50\%) & 0 (0\%)& 14 (100\%) & 0 (0\%)& 14 (100\%) & 0 (0\%)& 14 (100\%) & 0 (0\%)& 14 (100\%) & 0 (0\%)& 14 (100\%) & 0 (0\%)& 12 (86\%) & 0 (0\%)& 12 (86\%) & 0 (0\%)\\
barenco\_tof\_4& $ 7$ & 27 & 20 (74\%) & 0 (0\%)& 15 (56\%) & 0 (0\%)& 27 (100\%) & 0 (0\%)& 27 (100\%) & 0 (0\%)& 27 (100\%) & 0 (0\%)& 27 (100\%) & 0 (0\%)& 27 (100\%) & 0 (0\%)& 24 (89\%) & 0 (0\%)& 24 (89\%) & 0 (0\%)\\
barenco\_tof\_5& $ 9$ & 39 & 27 (69\%) & 0 (0\%)& 20 (51\%) & 0 (0\%)& 39 (100\%) & 0 (0\%)& 39 (100\%) & 0 (0\%)& 39 (100\%) & 0 (0\%)& 39 (100\%) & 0 (0\%)& 39 (100\%) & 0 (0\%)& 32 (82\%) & 0 (0\%)& 32 (82\%) & 0 (0\%)\\
csla\_mux\_3& $ 15$ & 15 & 1 (7\%) & 0 (0\%)& 1 (7\%) & 0 (0\%)& 6 (40\%) & 0 (0\%)& 2 (13\%) & 0 (0\%)& 13 (87\%) & 1 (7\%)& 12 (80\%) & 2 (13\%)& 12 (80\%) & 0 (0\%)& 5 (33\%) & 0 (0\%)& 5 (33\%) & 0 (0\%)\\
csum\_mux\_9& $ 30$ & 14 & 5 (36\%) & 0 (0\%)& 2 (14\%) & 0 (0\%)& 5 (36\%) & 0 (0\%)& 5 (36\%) & 0 (0\%)& 12 (86\%) & 4 (29\%)& 9 (64\%) & 2 (14\%)& 7 (50\%) & 0 (0\%)& 5 (36\%) & 0 (0\%)& 5 (36\%) & 0 (0\%)\\
cycle\_17\_3& $ 35$ & 1436 & 725 (50\%) & 0 (0\%)& 433 (30\%) & 0 (0\%)& 871 (61\%) & 0 (0\%)& 840 (58\%) & 0 (0\%)& 1417 (99\%) & 4 (0\%)& 1211 (84\%) & 12 (1\%)& 1385 (96\%) & 1 (0\%)& 910 (63\%) & 0 (0\%)& 913 (64\%) & 0 (0\%)\\
GF($2^{10}$)\_mult& $ 30$ & 20 & 1 (5\%) & 0 (0\%)& 1 (5\%) & 0 (0\%)& 2 (10\%) & 0 (0\%)& 2 (10\%) & 0 (0\%)& 16 (80\%) & 6 (30\%)& 8 (40\%) & 1 (5\%)& 11 (55\%) & 2 (10\%)& 1 (5\%) & 0 (0\%)& 1 (5\%) & 0 (0\%)\\
GF($2^{16}$)\_mult& $ 48$ & 28 & 1 (4\%) & 0 (0\%)& 1 (4\%) & 0 (0\%)& 1 (4\%) & 0 (0\%)& 1 (4\%) & 0 (0\%)& 17 (61\%) & 6 (21\%)& 6 (21\%) & 3 (11\%)& 18 (64\%) & 7 (25\%)& 1 (4\%) & 0 (0\%)& 1 (4\%) & 0 (0\%)\\
GF($2^{32}$)\_mult& $ 96$ & 51 & 1 (2\%) & 0 (0\%)& 1 (2\%) & 0 (0\%)& 1 (2\%) & 0 (0\%)& 1 (2\%) & 0 (0\%)& 13 (25\%) & 5 (10\%)& 10 (20\%) & 6 (12\%)& 40 (78\%) & 30 (59\%)& 1 (2\%) & 0 (0\%)& 1 (2\%) & 0 (0\%)\\
GF($2^4$)\_mult& $ 12$ & 10 & 2 (20\%) & 0 (0\%)& 1 (10\%) & 0 (0\%)& 3 (30\%) & 0 (0\%)& 2 (20\%) & 0 (0\%)& 8 (80\%) & 1 (10\%)& 7 (70\%) & 1 (10\%)& 8 (80\%) & 0 (0\%)& 2 (20\%) & 0 (0\%)& 2 (20\%) & 0 (0\%)\\
GF($2^5$)\_mult& $ 15$ & 13 & 1 (8\%) & 0 (0\%)& 2 (15\%) & 0 (0\%)& 3 (23\%) & 0 (0\%)& 2 (15\%) & 0 (0\%)& 12 (92\%) & 4 (31\%)& 5 (38\%) & 0 (0\%)& 8 (62\%) & 1 (8\%)& 3 (23\%) & 0 (0\%)& 3 (23\%) & 0 (0\%)\\
GF($2^6$)\_mult& $ 18$ & 13 & 2 (15\%) & 0 (0\%)& 2 (15\%) & 0 (0\%)& 1 (8\%) & 0 (0\%)& 1 (8\%) & 0 (0\%)& 10 (77\%) & 5 (38\%)& 6 (46\%) & 0 (0\%)& 8 (62\%) & 1 (8\%)& 2 (15\%) & 0 (0\%)& 2 (15\%) & 0 (0\%)\\
GF($2^7$)\_mult& $ 21$ & 16 & 4 (25\%) & 0 (0\%)& 1 (6\%) & 0 (0\%)& 1 (6\%) & 0 (0\%)& 1 (6\%) & 0 (0\%)& 14 (88\%) & 5 (31\%)& 6 (38\%) & 0 (0\%)& 10 (62\%) & 1 (6\%)& 4 (25\%) & 0 (0\%)& 4 (25\%) & 0 (0\%)\\
GF($2^8$)\_mult& $ 24$ & 17 & 2 (12\%) & 0 (0\%)& 1 (6\%) & 0 (0\%)& 2 (12\%) & 0 (0\%)& 2 (12\%) & 0 (0\%)& 16 (94\%) & 9 (53\%)& 7 (41\%) & 0 (0\%)& 7 (41\%) & 0 (0\%)& 2 (12\%) & 0 (0\%)& 2 (12\%) & 0 (0\%)\\
GF($2^9$)\_mult& $ 27$ & 19 & 2 (11\%) & 0 (0\%)& 1 (5\%) & 0 (0\%)& 3 (16\%) & 0 (0\%)& 3 (16\%) & 0 (0\%)& 15 (79\%) & 6 (32\%)& 8 (42\%) & 2 (11\%)& 9 (47\%) & 2 (11\%)& 4 (21\%) & 0 (0\%)& 3 (16\%) & 0 (0\%)\\
grover\_5& $ 9$ & 123 & 85 (69\%) & 0 (0\%)& 55 (45\%) & 0 (0\%)& 113 (92\%) & 0 (0\%)& 111 (90\%) & 0 (0\%)& 123 (100\%) & 0 (0\%)& 122 (99\%) & 0 (0\%)& 122 (99\%) & 0 (0\%)& 102 (83\%) & 0 (0\%)& 101 (82\%) & 0 (0\%)\\
ham15-high& $ 20$ & 852 & 527 (62\%) & 0 (0\%)& 343 (40\%) & 0 (0\%)& 746 (88\%) & 0 (0\%)& 727 (85\%) & 0 (0\%)& 845 (99\%) & 5 (1\%)& 831 (98\%) & 4 (0\%)& 833 (98\%) & 1 (0\%)& 653 (77\%) & 0 (0\%)& 657 (77\%) & 0 (0\%)\\
ham15-low& $ 17$ & 69 & 43 (62\%) & 0 (0\%)& 30 (43\%) & 0 (0\%)& 50 (72\%) & 0 (0\%)& 49 (71\%) & 0 (0\%)& 64 (93\%) & 1 (1\%)& 67 (97\%) & 3 (4\%)& 61 (88\%) & 0 (0\%)& 47 (68\%) & 0 (0\%)& 47 (68\%) & 0 (0\%)\\
ham15-med& $ 17$ & 189 & 135 (71\%) & 0 (0\%)& 84 (44\%) & 0 (0\%)& 173 (92\%) & 0 (0\%)& 168 (89\%) & 0 (0\%)& 187 (99\%) & 1 (1\%)& 187 (99\%) & 1 (1\%)& 186 (98\%) & 0 (0\%)& 163 (86\%) & 0 (0\%)& 165 (87\%) & 0 (0\%)\\
hwb6& $ 7$ & 48 & 29 (60\%) & 0 (0\%)& 16 (33\%) & 0 (0\%)& 36 (75\%) & 0 (0\%)& 31 (65\%) & 0 (0\%)& 48 (100\%) & 1 (2\%)& 45 (94\%) & 0 (0\%)& 44 (92\%) & 0 (0\%)& 34 (71\%) & 0 (0\%)& 31 (65\%) & 0 (0\%)\\
hwb8& $ 12$ & 2130 & 523 (25\%) & 0 (0\%)& 368 (17\%) & 1 (0\%)& 1084 (51\%) & 2 (0\%)& 913 (43\%) & 1 (0\%)& 2048 (96\%) & 109 (5\%)& 1835 (86\%) & 44 (2\%)& 1792 (84\%) & 4 (0\%)& 1005 (47\%) & 2 (0\%)& 961 (45\%) & 0 (0\%)\\
mod5\_4& $ 5$ & 14 & 9 (64\%) & 0 (0\%)& 10 (71\%) & 0 (0\%)& 13 (93\%) & 0 (0\%)& 14 (100\%) & 0 (0\%)& 14 (100\%) & 0 (0\%)& 14 (100\%) & 0 (0\%)& 13 (93\%) & 0 (0\%)& 14 (100\%) & 0 (0\%)& 14 (100\%) & 0 (0\%)\\
mod\_adder\_1024& $ 28$ & 716 & 372 (52\%) & 0 (0\%)& 243 (34\%) & 0 (0\%)& 482 (67\%) & 0 (0\%)& 469 (66\%) & 0 (0\%)& 693 (97\%) & 0 (0\%)& 664 (93\%) & 4 (1\%)& 662 (92\%) & 1 (0\%)& 511 (71\%) & 0 (0\%)& 507 (71\%) & 0 (0\%)\\
mod\_adder\_1048576& $ 58$ & 5615 & 2695 (48\%) & 0 (0\%)& 1581 (28\%) & 0 (0\%)& 3527 (63\%) & 0 (0\%)& 3444 (61\%) & 0 (0\%)& 5450 (97\%) & 13 (0\%)& 4926 (88\%) & 74 (1\%)& 5137 (91\%) & 0 (0\%)& 3635 (65\%) & 0 (0\%)& 3610 (64\%) & 0 (0\%)\\
mod\_mult\_55& $ 9$ & 14 & 6 (43\%) & 0 (0\%)& 5 (36\%) & 0 (0\%)& 7 (50\%) & 0 (0\%)& 6 (43\%) & 0 (0\%)& 13 (93\%) & 0 (0\%)& 11 (79\%) & 0 (0\%)& 13 (93\%) & 1 (7\%)& 6 (43\%) & 0 (0\%)& 6 (43\%) & 0 (0\%)\\
mod\_red\_21& $ 11$ & 46 & 31 (67\%) & 0 (0\%)& 19 (41\%) & 0 (0\%)& 39 (85\%) & 0 (0\%)& 36 (78\%) & 0 (0\%)& 46 (100\%) & 0 (0\%)& 43 (93\%) & 0 (0\%)& 44 (96\%) & 0 (0\%)& 38 (83\%) & 0 (0\%)& 38 (83\%) & 0 (0\%)\\
qcla\_adder\_10& $ 36$ & 24 & 10 (42\%) & 0 (0\%)& 4 (17\%) & 0 (0\%)& 15 (62\%) & 0 (0\%)& 10 (42\%) & 0 (0\%)& 23 (96\%) & 1 (4\%)& 19 (79\%) & 0 (0\%)& 23 (96\%) & 0 (0\%)& 18 (75\%) & 0 (0\%)& 16 (67\%) & 0 (0\%)\\
qcla\_com\_7& $ 24$ & 26 & 10 (38\%) & 0 (0\%)& 5 (19\%) & 0 (0\%)& 18 (69\%) & 0 (0\%)& 19 (73\%) & 0 (0\%)& 25 (96\%) & 1 (4\%)& 24 (92\%) & 1 (4\%)& 23 (88\%) & 0 (0\%)& 14 (54\%) & 0 (0\%)& 13 (50\%) & 0 (0\%)\\
qcla\_mod\_7& $ 26$ & 52 & 19 (37\%) & 0 (0\%)& 8 (15\%) & 0 (0\%)& 38 (73\%) & 0 (0\%)& 37 (71\%) & 0 (0\%)& 51 (98\%) & 1 (2\%)& 45 (87\%) & 0 (0\%)& 49 (94\%) & 1 (2\%)& 33 (63\%) & 0 (0\%)& 30 (58\%) & 0 (0\%)\\
qft\_4& $ 5$ & 29 & 24 (83\%) & 0 (0\%)& 17 (59\%) & 0 (0\%)& 29 (100\%) & 0 (0\%)& 29 (100\%) & 0 (0\%)& 29 (100\%) & 0 (0\%)& 29 (100\%) & 0 (0\%)& 29 (100\%) & 0 (0\%)& 28 (97\%) & 0 (0\%)& 28 (97\%) & 0 (0\%)\\
rc\_adder\_6& $ 14$ & 49 & 41 (84\%) & 0 (0\%)& 29 (59\%) & 0 (0\%)& 49 (100\%) & 0 (0\%)& 49 (100\%) & 0 (0\%)& 49 (100\%) & 0 (0\%)& 49 (100\%) & 0 (0\%)& 49 (100\%) & 0 (0\%)& 43 (88\%) & 0 (0\%)& 43 (88\%) & 0 (0\%)\\
tof\_10& $ 19$ & 52 & 41 (79\%) & 0 (0\%)& 22 (42\%) & 0 (0\%)& 42 (81\%) & 0 (0\%)& 41 (79\%) & 0 (0\%)& 52 (100\%) & 0 (0\%)& 51 (98\%) & 0 (0\%)& 51 (98\%) & 0 (0\%)& 45 (87\%) & 0 (0\%)& 47 (90\%) & 0 (0\%)\\
tof\_3& $ 5$ & 12 & 11 (92\%) & 0 (0\%)& 8 (67\%) & 0 (0\%)& 12 (100\%) & 0 (0\%)& 12 (100\%) & 0 (0\%)& 12 (100\%) & 0 (0\%)& 12 (100\%) & 0 (0\%)& 12 (100\%) & 0 (0\%)& 11 (92\%) & 0 (0\%)& 11 (92\%) & 0 (0\%)\\
tof\_4& $ 7$ & 17 & 15 (88\%) & 0 (0\%)& 10 (59\%) & 0 (0\%)& 15 (88\%) & 0 (0\%)& 15 (88\%) & 0 (0\%)& 17 (100\%) & 0 (0\%)& 16 (94\%) & 0 (0\%)& 16 (94\%) & 0 (0\%)& 16 (94\%) & 0 (0\%)& 16 (94\%) & 0 (0\%)\\
tof\_5& $ 9$ & 22 & 17 (77\%) & 0 (0\%)& 11 (50\%) & 0 (0\%)& 19 (86\%) & 0 (0\%)& 19 (86\%) & 0 (0\%)& 22 (100\%) & 0 (0\%)& 21 (95\%) & 0 (0\%)& 22 (100\%) & 0 (0\%)& 18 (82\%) & 0 (0\%)& 18 (82\%) & 0 (0\%)\\
vbe\_adder\_3& $ 10$ & 19 & 12 (63\%) & 0 (0\%)& 9 (47\%) & 0 (0\%)& 18 (95\%) & 0 (0\%)& 16 (84\%) & 0 (0\%)& 19 (100\%) & 0 (0\%)& 19 (100\%) & 0 (0\%)& 19 (100\%) & 0 (0\%)& 18 (95\%) & 0 (0\%)& 15 (79\%) & 0 (0\%)\\
\bottomrule
\end{tabular}
}
\end{sidewaystable*}

\begin{sidewaystable*}
\centering
\caption{CNOT optimization of a library of reversible functions with several CNOT circuits synthesis methods with the use of ancillary qubits. For each reversible function, "Original" reports some statistics (T-count, T-depth, CNOT count, Total depth) of the original circuit, "Tpar" reports the results of the circuit optimized solely by the Tpar algorithm, "Tpar + CNOT depth opt." reports the results of the circuit optimized with the Tpar algorithm and post-processed by our proposed depth optimization procedure.}
\label{table_results_ancilla}
\scalebox{.9}{
\begin{tabular}{lccc@{~}c@{~}c@{~}cc@{~}c@{~}c@{~}cc@{~}c@{~}c@{~}c@{~}c} \toprule
      \multirow{2}{*}{Function} & \multirow{2}{*}{\#$n$} & \multirow{2}{*}{\#Ancillae} & \multicolumn{4}{c}{Original} & \multicolumn{4}{c}{Tpar ($ \infty $ ancillae)}  & \multicolumn{5}{c}{Tpar + CNOT depth opt.} \\
      \cmidrule(lr){4-7} \cmidrule(lr){8-11} \cmidrule(lr){12-16}
               &   &  &  T-count & T-depth & CNOT-count & Depth &  T-count & T-depth & CNOT-count & Depth &  CNOT-count & \%Diff. & & Depth & \%Diff. \\
      \ \\
Adder\_8& $ 24$ & 19 & $399$ & $69$ & $466$ & $223$ & $215$ & $15$ & $1040$ & $380$ & $654$ & $ -37\% $ & & $171$ & $ -55\% $ \\
barenco\_tof\_10& $ 19$ & 9 & $224$ & $96$ & $224$ & $288$ & $100$ & $32$ & $396$ & $314$ & $216$ & $ -45\% $ & & $196$ & $ -38\% $ \\
barenco\_tof\_3& $ 5$ & 3 & $28$ & $12$ & $28$ & $36$ & $16$ & $4$ & $77$ & $58$ & $40$ & $ -48\% $ & & $29$ & $ -50\% $ \\
barenco\_tof\_4& $ 7$ & 4 & $56$ & $24$ & $56$ & $72$ & $28$ & $8$ & $124$ & $99$ & $66$ & $ -47\% $ & & $52$ & $ -47\% $ \\
barenco\_tof\_5& $ 9$ & 4 & $84$ & $36$ & $84$ & $108$ & $40$ & $12$ & $174$ & $144$ & $90$ & $ -48\% $ & & $76$ & $ -47\% $ \\
csla\_mux\_3& $ 15$ & 7 & $70$ & $21$ & $90$ & $67$ & $62$ & $4$ & $404$ & $211$ & $229$ & $ -43\% $ & & $57$ & $ -73\% $ \\
csum\_mux\_9& $ 30$ & 25 & $196$ & $18$ & $196$ & $59$ & $84$ & $3$ & $532$ & $166$ & $293$ & $ -45\% $ & & $55$ & $ -67\% $ \\
cycle\_17\_3& $ 35$ & 3 & $4739$ & $2001$ & $4742$ & $5974$ & $1944$ & $522$ & $6913$ & $5303$ & $4952$ & $ -28\% $ & & $3897$ & $ -27\% $ \\
GF($2^{10}$)\_mult& $ 30$ & 211 & $700$ & $108$ & $709$ & $290$ & $410$ & $2$ & $4597$ & $1095$ & $2643$ & $ -43\% $ & & $72$ & $ -93\% $ \\
GF($2^{16}$)\_mult& $ 48$ & 623 & $1792$ & $180$ & $1837$ & $489$ & $1040$ & $2$ & $12090$ & $2538$ & $8357$ & $ -31\% $ & & $101$ & $ -96\% $ \\
GF($2^{32}$)\_mult& $ 96$ & 2527 & $7168$ & $372$ & $7292$ & $1001$ & $4128$ & $2$ & $98111$ & $17660$ & $42134$ & $ -57\% $ & & $157$ & $ -99\% $ \\
GF($2^4$)\_mult& $ 12$ & 24 & $112$ & $36$ & $115$ & $99$ & $68$ & $2$ & $446$ & $147$ & $303$ & $ -32\% $ & & $46$ & $ -69\% $ \\
GF($2^5$)\_mult& $ 15$ & 52 & $175$ & $48$ & $179$ & $130$ & $115$ & $2$ & $863$ & $238$ & $487$ & $ -44\% $ & & $45$ & $ -81\% $ \\
GF($2^6$)\_mult& $ 18$ & 66 & $252$ & $60$ & $257$ & $163$ & $150$ & $2$ & $1009$ & $272$ & $792$ & $ -22\% $ & & $57$ & $ -79\% $ \\
GF($2^7$)\_mult& $ 21$ & 128 & $343$ & $72$ & $349$ & $195$ & $217$ & $2$ & $1855$ & $489$ & $1109$ & $ -40\% $ & & $58$ & $ -88\% $ \\
GF($2^8$)\_mult& $ 24$ & 128 & $448$ & $84$ & $469$ & $233$ & $264$ & $2$ & $2123$ & $548$ & $1590$ & $ -25\% $ & & $72$ & $ -87\% $ \\
GF($2^9$)\_mult& $ 27$ & 210 & $567$ & $96$ & $575$ & $258$ & $351$ & $2$ & $2709$ & $544$ & $2181$ & $ -19\% $ & & $69$ & $ -87\% $ \\
grover\_5& $ 9$ & 3 & $336$ & $144$ & $336$ & $457$ & $154$ & $44$ & $575$ & $535$ & $363$ & $ -37\% $ & & $360$ & $ -33\% $ \\
ham15-high& $ 20$ & 10 & $2457$ & $996$ & $2500$ & $3026$ & $1019$ & $262$ & $4285$ & $3295$ & $2656$ & $ -38\% $ & & $1864$ & $ -43\% $ \\
ham15-low& $ 17$ & 3 & $161$ & $69$ & $259$ & $263$ & $97$ & $20$ & $608$ & $414$ & $342$ & $ -44\% $ & & $185$ & $ -55\% $ \\
ham15-med& $ 17$ & 4 & $574$ & $240$ & $616$ & $750$ & $230$ & $53$ & $1087$ & $788$ & $636$ & $ -41\% $ & & $423$ & $ -46\% $ \\
hwb6& $ 7$ & 4 & $105$ & $45$ & $131$ & $152$ & $75$ & $13$ & $375$ & $285$ & $225$ & $ -40\% $ & & $139$ & $ -51\% $ \\
mod5\_4& $ 5$ & 4 & $28$ & $12$ & $32$ & $41$ & $16$ & $3$ & $77$ & $68$ & $45$ & $ -42\% $ & & $32$ & $ -53\% $ \\
mod\_adder\_1024& $ 28$ & 6 & $1995$ & $831$ & $2005$ & $2503$ & $1011$ & $230$ & $3871$ & $2614$ & $2683$ & $ -31\% $ & & $1675$ & $ -36\% $ \\
mod\_adder\_1048576& $ 58$ & 10 & $17290$ & $7292$ & $17310$ & $21807$ & $7298$ & $1879$ & $29315$ & $20154$ & $20808$ & $ -29\% $ & & $13875$ & $ -31\% $ \\
mod\_mult\_55& $ 9$ & 4 & $49$ & $15$ & $55$ & $50$ & $35$ & $4$ & $147$ & $87$ & $84$ & $ -43\% $ & & $47$ & $ -46\% $ \\
mod\_red\_21& $ 11$ & 4 & $119$ & $48$ & $122$ & $158$ & $73$ & $15$ & $308$ & $252$ & $163$ & $ -47\% $ & & $117$ & $ -54\% $ \\
qcla\_adder\_10& $ 36$ & 39 & $238$ & $24$ & $267$ & $73$ & $162$ & $6$ & $924$ & $219$ & $457$ & $ -51\% $ & & $65$ & $ -70\% $ \\
qcla\_com\_7& $ 24$ & 17 & $203$ & $27$ & $215$ & $81$ & $94$ & $7$ & $489$ & $172$ & $280$ & $ -43\% $ & & $79$ & $ -54\% $ \\
qcla\_mod\_7& $ 26$ & 22 & $413$ & $66$ & $441$ & $197$ & $237$ & $14$ & $1153$ & $352$ & $669$ & $ -42\% $ & & $150$ & $ -57\% $ \\
qft\_4& $ 5$ & 3 & $69$ & $48$ & $48$ & $142$ & $67$ & $38$ & $134$ & $203$ & $71$ & $ -47\% $ & & $133$ & $ -34\% $ \\
rc\_adder\_6& $ 14$ & 4 & $77$ & $33$ & $104$ & $104$ & $47$ & $11$ & $212$ & $133$ & $138$ & $ -35\% $ & & $81$ & $ -39\% $ \\
tof\_10& $ 19$ & 3 & $119$ & $51$ & $119$ & $153$ & $71$ & $17$ & $287$ & $221$ & $158$ & $ -45\% $ & & $111$ & $ -50\% $ \\
tof\_3& $ 5$ & 3 & $21$ & $9$ & $21$ & $27$ & $15$ & $3$ & $63$ & $56$ & $36$ & $ -43\% $ & & $28$ & $ -50\% $ \\
tof\_4& $ 7$ & 3 & $35$ & $15$ & $35$ & $45$ & $23$ & $5$ & $101$ & $91$ & $56$ & $ -45\% $ & & $39$ & $ -57\% $ \\
tof\_5& $ 9$ & 3 & $49$ & $21$ & $49$ & $63$ & $31$ & $7$ & $147$ & $126$ & $70$ & $ -52\% $ & & $51$ & $ -60\% $ \\
vbe\_adder\_3& $ 10$ & 6 & $70$ & $24$ & $80$ & $79$ & $24$ & $5$ & $116$ & $71$ & $75$ & $ -35\% $ & & $43$ & $ -39\% $ \\
\midrule
      Mean difference &&&&&&&&&&&&$-40.11\%$&&&$-57.86\%$\\
      Best savings &&&&&&&&&&&&$-57\%$&&&$-99\%$\\
      Worst savings &&&&&&&&&&&&$-19\%$&&&$-27\%$\\

      \bottomrule
      \end{tabular}
}
\end{sidewaystable*}

\begin{sidewaystable*}
\caption{Frequency of best performance of each algorithm during the
  optimization of reversible circuits with the use of ancillary qubits. For each algorithm, the first
  column gives the number of times it has returned the best result
  (possibly other algorithms returned circuits of same size). When
  available, the second column reports the number of times it was the
  only one to provide the best possible circuit. When there is only
  one column, it was never the only one to provide the best possible circuit.}
\centering
\scalebox{0.68}{
\begin{tabular}{lc@{~}c@{~}c@{~}cccccc@{~}cc@{~}cc@{~}cc@{~}cc@{~}cc@{~}c}
  \toprule
  \multirow{2}{*}{Function} & \multirow{2}{*}{\#$n$} & \multirow{2}{*}{\#Block} & \multirow{2}{*}{\#Direct} & \multirow{2}{*}{\makecell{\#CNOT \\ sub-circuits}} & \multicolumn{1}{c}{Kutin et al} & \multicolumn{1}{c}{DaCSynth} & \multicolumn{1}{c}{Greedy ($H_{\text{sum}}$, size)} & \multicolumn{1}{c}{Greedy ($h_{\text{prod}}$, size)} & \multicolumn{2}{c}{Greedy ($H_{\text{sum}}$)} & \multicolumn{2}{c}{Greedy ($h_{\text{prod}}$)} & \multicolumn{2}{c}{Greedy ($H_{\text{prod}}$)} & \multicolumn{1}{c}{LU + Greedy ($H_{\text{sum}}$)} & \multicolumn{1}{c}{LU + Greedy ($H_{\text{prod}}$)} \\
  \cmidrule(lr){6-6} \cmidrule(lr){7-7} \cmidrule(lr){8-8}
  \cmidrule(lr){9-9} \cmidrule(lr){9-9} \cmidrule(lr){10-11}
  \cmidrule(lr){12-13} \cmidrule(lr){14-15} \cmidrule(lr){16-16} \cmidrule(lr){17-17}
  & & & & & Best  & Best  & Best  & Best  & Best  & Only  & Best   & Only  & Best  & Only  & Best  & Best   \\
Adder\_8&43&0\%&49\% & 59 & 18 (31\%) &  5 (8\%) & 30 (51\%) & 28 (47\%) & 44 (75\%) & 3 (5\%)& 38 (64\%) & 1 (2\%)& 34 (58\%) & 0 (0\%)& 26 (44\%) & 26 (44\%) \\
barenco\_tof\_10&28&1\%&23\% & 101 & 57 (56\%) &  39 (39\%) & 85 (84\%) & 85 (84\%) & 87 (86\%) & 0 (0\%)& 87 (86\%) & 0 (0\%)& 86 (85\%) & 0 (0\%)& 70 (69\%) & 70 (69\%) \\
barenco\_tof\_3&8&0\%&70\% & 24 & 8 (33\%) &  4 (17\%) & 10 (42\%) & 9 (38\%) & 10 (42\%) & 0 (0\%)& 10 (42\%) & 0 (0\%)& 10 (42\%) & 0 (0\%)& 9 (38\%) & 8 (33\%) \\
barenco\_tof\_4&11&5\%&38\% & 35 & 16 (46\%) &  9 (26\%) & 19 (54\%) & 19 (54\%) & 21 (60\%) & 0 (0\%)& 21 (60\%) & 0 (0\%)& 19 (54\%) & 0 (0\%)& 16 (46\%) & 16 (46\%) \\
barenco\_tof\_5&13&0\%&31\% & 46 & 25 (54\%) &  14 (30\%) & 30 (65\%) & 30 (65\%) & 32 (70\%) & 0 (0\%)& 32 (70\%) & 0 (0\%)& 30 (65\%) & 0 (0\%)& 25 (54\%) & 25 (54\%) \\
csla\_mux\_3&22&0\%&27\% & 11 & 1 (9\%) &  1 (9\%) & 3 (27\%) & 3 (27\%) & 6 (55\%) & 0 (0\%)& 10 (91\%) & 5 (45\%)& 5 (45\%) & 0 (0\%)& 1 (9\%) & 2 (18\%) \\
csum\_mux\_9&55&0\%&0\% & 23 & 1 (4\%) &  3 (13\%) & 3 (13\%) & 3 (13\%) & 9 (39\%) & 2 (9\%)& 6 (26\%) & 0 (0\%)& 5 (22\%) & 0 (0\%)& 2 (9\%) & 2 (9\%) \\
cycle\_17\_3&38&0\%&5\% & 1410 & 582 (41\%) &  450 (32\%) & 892 (63\%) & 853 (60\%) & 1374 (97\%) & 1 (0\%)& 1169 (83\%) & 15 (1\%)& 1273 (90\%) & 1 (0\%)& 826 (59\%) & 817 (58\%) \\
GF($2^{10}$)\_mult&241&67\%&0\% & 42 & 1 (2\%) &  1 (2\%) & 2 (5\%) & 2 (5\%) & 14 (33\%) & 4 (10\%)& 16 (38\%) & 8 (19\%)& 15 (36\%) & 3 (7\%)& 2 (5\%) & 2 (5\%) \\
GF($2^{16}$)\_mult&671&83\%&0\% & 58 & 1 (2\%) &  1 (2\%) & 1 (2\%) & 1 (2\%) & 19 (33\%) & 10 (17\%)& 22 (38\%) & 16 (28\%)& 18 (31\%) & 5 (9\%)& 1 (2\%) & 1 (2\%) \\
GF($2^{32}$)\_mult&2623&100\%&0\% & 100 & 1 (1\%) &  1 (1\%) & 1 (1\%) & 1 (1\%) & 15 (15\%) & 3 (3\%)& 56 (56\%) & 48 (48\%)& 35 (35\%) & 18 (18\%)& 1 (1\%) & 1 (1\%) \\
GF($2^4$)\_mult&36&17\%&0\% & 12 & 2 (17\%) &  2 (17\%) & 3 (25\%) & 2 (17\%) & 10 (83\%) & 1 (8\%)& 10 (83\%) & 2 (17\%)& 6 (50\%) & 0 (0\%)& 2 (17\%) & 2 (17\%) \\
GF($2^5$)\_mult&67&33\%&33\% & 10 & 2 (20\%) &  2 (20\%) & 4 (40\%) & 3 (30\%) & 7 (70\%) & 0 (0\%)& 9 (90\%) & 3 (30\%)& 5 (50\%) & 0 (0\%)& 2 (20\%) & 2 (20\%) \\
GF($2^6$)\_mult&84&50\%&17\% & 28 & 2 (7\%) &  2 (7\%) & 1 (4\%) & 1 (4\%) & 9 (32\%) & 2 (7\%)& 8 (29\%) & 4 (14\%)& 7 (25\%) & 1 (4\%)& 3 (11\%) & 3 (11\%) \\
GF($2^7$)\_mult&149&33\%&33\% & 12 & 1 (8\%) &  1 (8\%) & 1 (8\%) & 1 (8\%) & 4 (33\%) & 0 (0\%)& 11 (92\%) & 5 (42\%)& 7 (58\%) & 0 (0\%)& 1 (8\%) & 1 (8\%) \\
GF($2^8$)\_mult&152&33\%&17\% & 33 & 1 (3\%) &  1 (3\%) & 2 (6\%) & 1 (3\%) & 12 (36\%) & 2 (6\%)& 13 (39\%) & 5 (15\%)& 11 (33\%) & 0 (0\%)& 1 (3\%) & 1 (3\%) \\
GF($2^9$)\_mult&237&83\%&0\% & 40 & 1 (2\%) &  1 (2\%) & 4 (10\%) & 4 (10\%) & 17 (42\%) & 4 (10\%)& 14 (35\%) & 4 (10\%)& 17 (42\%) & 2 (5\%)& 2 (5\%) & 2 (5\%) \\
grover\_5&12&0\%&9\% & 130 & 66 (51\%) &  57 (44\%) & 115 (88\%) & 106 (82\%) & 116 (89\%) & 1 (1\%)& 115 (88\%) & 0 (0\%)& 114 (88\%) & 0 (0\%)& 88 (68\%) & 87 (67\%) \\
ham15-high&30&1\%&22\% & 745 & 395 (53\%) &  263 (35\%) & 628 (84\%) & 602 (81\%) & 737 (99\%) & 21 (3\%)& 684 (92\%) & 6 (1\%)& 641 (86\%) & 1 (0\%)& 524 (70\%) & 519 (70\%) \\
ham15-low&20&2\%&34\% & 56 & 18 (32\%) &  7 (12\%) & 47 (84\%) & 44 (79\%) & 51 (91\%) & 2 (4\%)& 54 (96\%) & 5 (9\%)& 38 (68\%) & 0 (0\%)& 25 (45\%) & 24 (43\%) \\
ham15-med&21&0\%&35\% & 151 & 73 (48\%) &  41 (27\%) & 134 (89\%) & 124 (82\%) & 150 (99\%) & 3 (2\%)& 141 (93\%) & 1 (1\%)& 121 (80\%) & 0 (0\%)& 95 (63\%) & 94 (62\%) \\
hwb6&11&0\%&30\% & 51 & 15 (29\%) &  10 (20\%) & 34 (67\%) & 28 (55\%) & 35 (69\%) & 1 (2\%)& 33 (65\%) & 2 (4\%)& 31 (61\%) & 0 (0\%)& 19 (37\%) & 17 (33\%) \\
mod5\_4&9&0\%&40\% & 10 & 8 (80\%) &  7 (70\%) & 9 (90\%) & 8 (80\%) & 10 (100\%) & 1 (10\%)& 9 (90\%) & 0 (0\%)& 8 (80\%) & 0 (0\%)& 8 (80\%) & 8 (80\%) \\
mod\_adder\_1024&34&0\%&6\% & 687 & 360 (52\%) &  242 (35\%) & 487 (71\%) & 474 (69\%) & 663 (97\%) & 0 (0\%)& 633 (92\%) & 3 (0\%)& 611 (89\%) & 1 (0\%)& 449 (65\%) & 436 (63\%) \\
mod\_adder\_1048576&68&0\%&1\% & 5459 & 2644 (48\%) &  1888 (35\%) & 3607 (66\%) & 3537 (65\%) & 5268 (97\%) & 6 (0\%)& 4727 (87\%) & 71 (1\%)& 4903 (90\%) & 4 (0\%)& 3367 (62\%) & 3350 (61\%) \\
mod\_mult\_55&109&9\%&18\% & 11 & 5 (45\%) &  2 (18\%) & 7 (64\%) & 5 (45\%) & 10 (91\%) & 1 (9\%)& 8 (73\%) & 0 (0\%)& 10 (91\%) & 0 (0\%)& 5 (45\%) & 4 (36\%) \\
mod\_red\_21&15&0\%&44\% & 50 & 21 (42\%) &  7 (14\%) & 34 (68\%) & 32 (64\%) & 36 (72\%) & 0 (0\%)& 33 (66\%) & 0 (0\%)& 30 (60\%) & 0 (0\%)& 27 (54\%) & 28 (56\%) \\
qcla\_adder\_10&75&6\%&24\% & 31 & 5 (16\%) &  4 (13\%) & 9 (29\%) & 7 (23\%) & 16 (52\%) & 1 (3\%)& 12 (39\%) & 0 (0\%)& 15 (48\%) & 0 (0\%)& 7 (23\%) & 7 (23\%) \\
qcla\_com\_7&41&0\%&35\% & 34 & 8 (24\%) &  4 (12\%) & 15 (44\%) & 14 (41\%) & 17 (50\%) & 1 (3\%)& 16 (47\%) & 0 (0\%)& 16 (47\%) & 2 (6\%)& 9 (26\%) & 9 (26\%) \\
qcla\_mod\_7&48&8\%&16\% & 51 & 9 (18\%) &  5 (10\%) & 26 (51\%) & 17 (33\%) & 32 (63\%) & 3 (6\%)& 24 (47\%) & 0 (0\%)& 26 (51\%) & 3 (6\%)& 13 (25\%) & 13 (25\%) \\
qft\_4&105&0\%&43\% & 23 & 15 (65\%) &  10 (43\%) & 11 (48\%) & 11 (48\%) & 23 (100\%) & 0 (0\%)& 23 (100\%) & 0 (0\%)& 20 (87\%) & 0 (0\%)& 9 (39\%) & 9 (39\%) \\
rc\_adder\_6&18&0\%&49\% & 51 & 25 (49\%) &  21 (41\%) & 33 (65\%) & 32 (63\%) & 37 (73\%) & 0 (0\%)& 35 (69\%) & 0 (0\%)& 34 (67\%) & 0 (0\%)& 29 (57\%) & 29 (57\%) \\
tof\_10&22&0\%&48\% & 56 & 24 (43\%) &  17 (30\%) & 40 (71\%) & 38 (68\%) & 42 (75\%) & 0 (0\%)& 41 (73\%) & 0 (0\%)& 39 (70\%) & 0 (0\%)& 34 (61\%) & 34 (61\%) \\
tof\_3&105&0\%&50\% & 8 & 4 (50\%) &  2 (25\%) & 2 (25\%) & 2 (25\%) & 8 (100\%) & 0 (0\%)& 7 (88\%) & 0 (0\%)& 5 (62\%) & 0 (0\%)& 2 (25\%) & 2 (25\%) \\
tof\_4&10&0\%&58\% & 26 & 2 (8\%) &  2 (8\%) & 10 (38\%) & 9 (35\%) & 12 (46\%) & 0 (0\%)& 11 (42\%) & 0 (0\%)& 8 (31\%) & 0 (0\%)& 7 (27\%) & 7 (27\%) \\
tof\_5&12&0\%&59\% & 31 & 9 (29\%) &  5 (16\%) & 15 (48\%) & 13 (42\%) & 17 (55\%) & 0 (0\%)& 16 (52\%) & 0 (0\%)& 15 (48\%) & 0 (0\%)& 14 (45\%) & 15 (48\%) \\
vbe\_adder\_3&16&7\%&7\% & 29 & 6 (21\%) &  4 (14\%) & 13 (45\%) & 12 (41\%) & 14 (48\%) & 1 (3\%)& 14 (48\%) & 1 (3\%)& 12 (41\%) & 0 (0\%)& 9 (31\%) & 8 (28\%) \\

\bottomrule
\end{tabular}
}
\label{frequency_ancilla}
\end{sidewaystable*}

\section*{Acknowledgment}
This work was supported in part by the French National Research Agency
(ANR) under the research project SoftQPRO ANR-17-CE25-0009-02,
and by the DGE of the French Ministry of Industry under the research
project PIA-GDN/QuantEx P163746-484124.


\appendix

\section{Kutin et al's algorithm for LNN connectivity: presentation and extension to full qubit connectivity} \label{appendix::kutin}

Kutin \textit{et al.} gave several constructions of specific linear reversible operations for the LNN architecture: addition, swap, permutation, generic linear reversible operator \cite{DBLP:journals/cjtcs/KutinMS07}. They focused on the shallowest way to do it. For a generic linear reversible operator, they relied on their construction for reversing the qubits, i.e., the image of an $n$-qubit state $\ket{x_1x_2...x_n}$ is $\ket{x_nx_{n-1}...x_1}$. This construction is a \emph{sorting network} and contains only SWAP gates. The network, as a SWAP circuit, is of depth $n$. An example with $7$ qubits is given in Fig~\ref{preamble::sorting}. Then they considered the same sorting network but with \emph{boxes} replacing the SWAP gates. Each box, acting on $2$ qubits, can perform one of the following operations: 
\begin{itemize}
  \item $(u,v) \to (u,v)$, requiring $0$ CNOT,
  \item $(u,v) \to (u,u\oplus v)$, requiring $1$ CNOT,
  \item $(u,v) \to (u \oplus v,v)$, requiring $1$ CNOT,
  \item $(u,v) \to (v,u\oplus v)$, requiring $2$ CNOT,
  \item $(u,v) \to (u\oplus v,u)$, requiring $2$ CNOT,
  \item $(u,v) \to (v,u)$, requiring $3$ CNOT.
\end{itemize}

\begin{figure}
\centering
\includegraphics[scale=0.6]{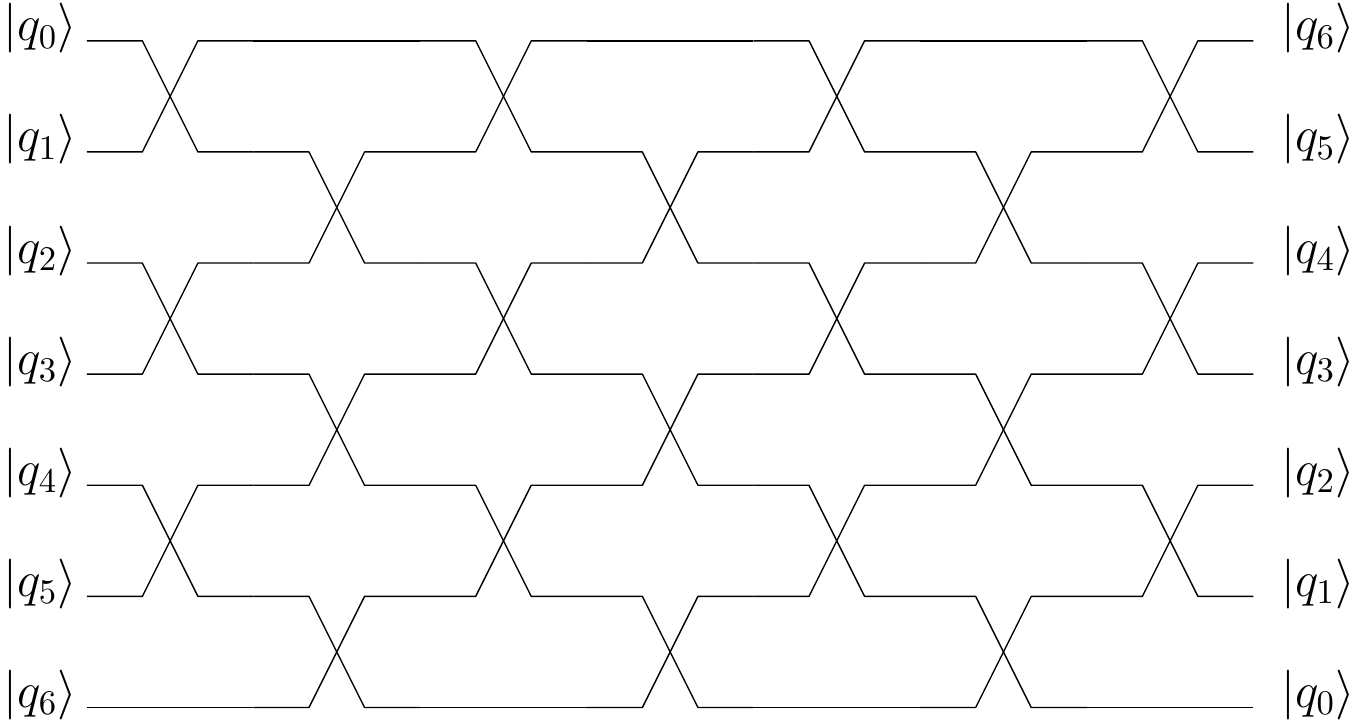}
\caption[Sorting network for $7$ qubits.]{Sorting network for $7$ qubits. As a SWAP circuit, the depth of the circuit is $n$. Replacing each SWAP by a box gives a skeleton circuit for the synthesis of triangular linear reversible operators.}
\label{preamble::sorting}
\end{figure}

Kutin \textit{et al.} \cite{DBLP:journals/cjtcs/KutinMS07} proved that a sorting network made of boxes can transform any operator into a northwest triangular one. Moreover, for each box, only the 
state of one of the two output qubits needs to be fixed after applying the box. This means that we can always choose a box that needs at most $2$ CNOTs to be implemented. Consequently, the total depth of the sorting network is $2n$. Finally, they showed how to synthesize a northwest triangular operator with a similar sorting network 
except that in this case for each box the states of the two output qubits need to be fixed. Therefore we may need at most $3$ CNOTs for some boxes (if we only need to swap the qubits) and the depth of this second part is upper bounded by $3n$. Overall this gives a generic method for synthesizing any linear reversible operator for the LNN architecture in depth at most $5n$. To our knowledge, this is the best result in the literature for the case of restricted connectivity. This result can only be improved by a constant factor as Kutin \textit{et al.} also showed that some operators need at least circuits with depth $2n$ to be implemented. So the best possible synthesis method for the LNN architecture should provide circuits of depth comprised between $2n$ and $5n$. For other architectures, the bounds are not clear. Obviously, if an architecture contains a Hamiltonian path in it then one can apply the algorithm for the LNN case, giving an upper bound of $5n$ for the depth. To our knowledge, lower bounds are not known but Maslov computed lower bounds for $2$ simplified models in the case where each qubit has $k$ neighbors \cite{maslov2007linear}. The first model is the case where we have to execute every gate given by the Gaussian elimination algorithm in a given order; the second model is less restrictive as we have to execute every gate but we assume that they all commute. In both cases, the depth is lower bounded linearly in $n$. 

\subsection*{Extension to Full Qubit Connectivity} \label{cnot::soa_depth}

Although it was not done in their paper, the algorithm proposed by Kutin \textit{et al.} \cite{DBLP:journals/cjtcs/KutinMS07} can be extended to the full connectivity case: this is what we show in this paragraph. To our knowledge, such an extension has never been proposed in the literature.

In the original Kutin \textit{et al.}'s algorithm
\cite{DBLP:journals/cjtcs/KutinMS07}, each box corresponds to an
interaction between a pair of qubits and it can be decomposed into two
parts: first, we execute the interaction strictly speaking between the
two qubits with a CNOT gate, secondly, we move the qubits in the
hardware by swapping them. If we consider that the connectivity is
full then we do not need to move the qubits anymore in the
hardware. This means that we can replace each box by a CNOT gate and
we get rid of the SWAP gates. We end with a new skeleton circuit that
is functionally equivalent to the one given by Kutin \textit{et al.}
except that each box is now a single CNOT. The skeleton circuit from
Kutin \textit{et al.}, as a box-based circuit, is of depth
$2n$. Therefore our new skeleton, as a CNOT-based circuit, is also of
depth $2n$.

\smallskip
To our knowledge, this was the best result until the
asymptotically optimal algorithm proposed recently in
\cite{DBLP:conf/soda/JiangSTW0Z20}. The pseudo-code of this new
algorithm is given Algorithm~\ref{PseudoCode_Kutin}. For simplicity we
only show the case for a lower triangular operator, the generalization
to any operator is done via an LU decomposition \cite{GVL96} stating that
\( A = PLU\)
where $A$ is the operator to synthesize, $P$ is a permutation matrix, and $L$, resp. $U$, are lower, resp. upper, triangular operators. With full qubit connectivity, a permutation can be implemented with a circuit of constant depth $6$ \cite{DBLP:journals/siamcomp/MooreN01}. Each triangular operator can be synthesized with a circuit of depth $n$, leading to a total depth of $2n+6$ for the synthesis of an arbitrary operator. Given that we do not move the qubits anymore, most of the algorithm consists in tracking what would be the positions of the qubits in the hardware to determine which interactions need to be done at a given time step. Then it is easy to decide if, for a given pair of qubits $(i,j)$, a CNOT gate needs to be added. If the operator is lower triangular, we only have to decide if we add the CNOT $(i \to j), i<j$ as we must not ruin the triangular structure by adding a CNOT $(j \to i)$. Then if the $i$-th component of the $j$-th row is $1$ then add a CNOT $(i \to j)$. The reason why it works is not straightforward: we have to note that when we decide to apply or not a CNOT $(i \to j)$, either the components $k<i$ have already been treated for qubit $i$ so it cannot modify the components of qubit $j$, or such components have not been treated on both qubits, so modifying them on qubit $j$ is not a problem as they will be zeroed later in the algorithm.

\begin{algorithm}
\caption{Adaptation of Kutin \textit{et al.}'s algorithm \cite{DBLP:journals/cjtcs/KutinMS07} on a triangular operator $L$ for a full qubit connectivity.}
\label{PseudoCode_Kutin}
\scalebox{.9}{\begin{minipage}{.52\textwidth}
\begin{algorithmic} 
\REQUIRE $n \geq 0, \; \; L \in \mathbb{F}_2^{n \times n}$ triangular
\ENSURE $C$ is a CNOT-circuit implementing $L$\,with\,depth at most $n$
\STATE
\STATE $C \leftarrow [\;]$
\STATE perm $\leftarrow \llbracket 1,n \rrbracket$
\FOR{$j$ = $1$ to $n$}
\IF{$j \equiv 1 [2]$}
\STATE start $\leftarrow 1$
\ELSE
\STATE start $\leftarrow 2$
\ENDIF
\WHILE{start $< n$}
\IF{$L[\text{perm[start+1], perm[start]}] = 1$}
\STATE C.append(CNOT(perm[start], perm[start+1]))
\ENDIF
\STATE perm[start], perm[start+1] $\leftarrow$ perm[start+1], perm[start]
\STATE start $\leftarrow$ start + 2
\ENDWHILE
\ENDFOR
\RETURN reverse(C)
\end{algorithmic}
\end{minipage}}
\end{algorithm}

\section{Jiang et al.'s algorithm} \label{appendix::jiang}

\begin{figure*}[!tb]
\centering
\includegraphics[scale=.9]{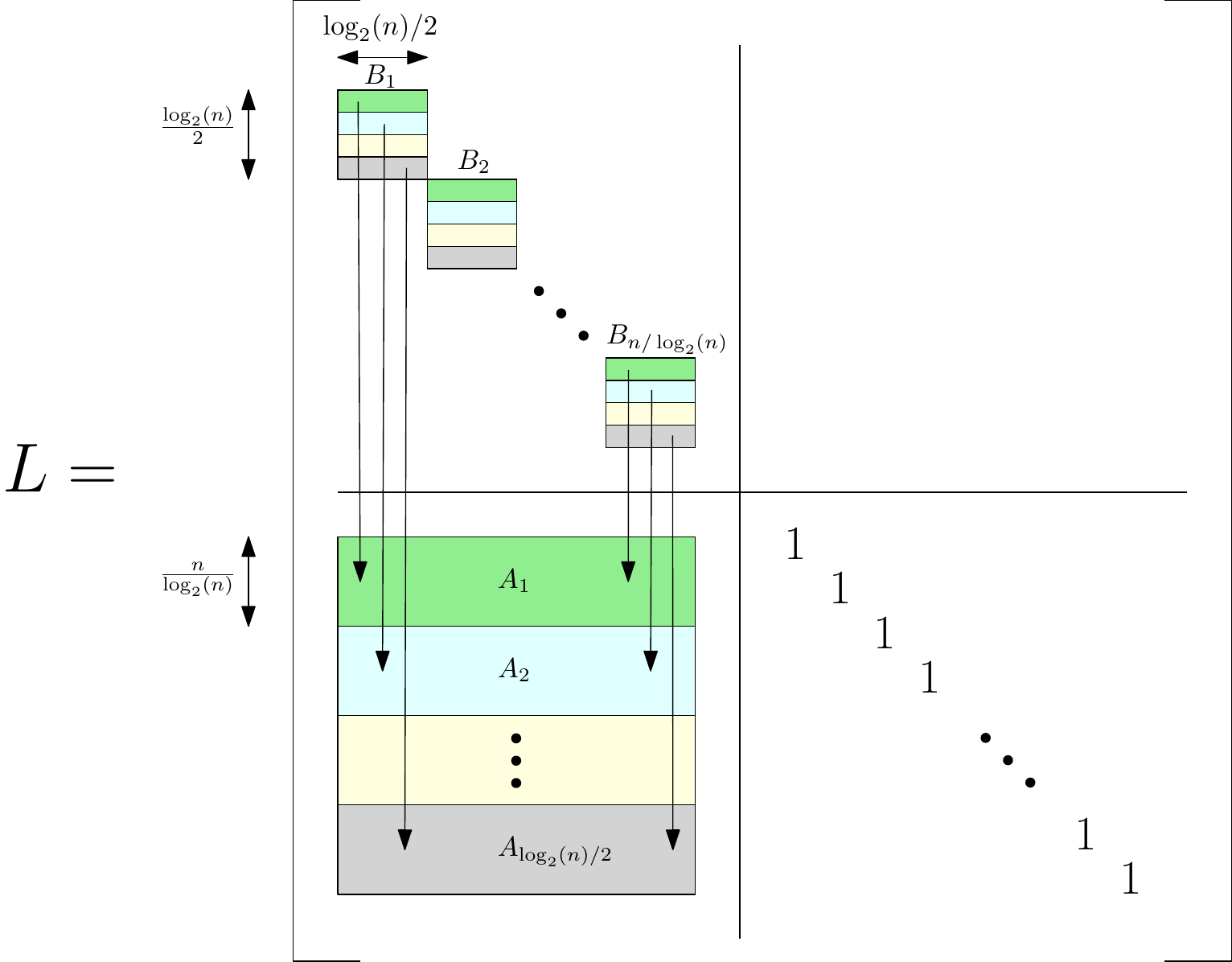}
\caption{Block structure of the triangular operator $L$ for Jiang \emph{et al.}'s algorithm.}
\label{jiang}
\end{figure*}

The authors \cite{DBLP:conf/soda/JiangSTW0Z20} propose an algorithm based on the LU decomposition and a divide-and-conquer approach.
The proof of the optimal depth complexity is quite hard to summarize but the principle of the algorithm is simpler so we give a brief description of it.
First, the algorithm starts with an LU decomposition $A = PLU$. Again $P$ can be synthesized in constant depth $6$, therefore we only need to treat the triangular case. We illustrate with the lower triangular case. 
The synthesis of $L$ consists in a divide-and-conquer algorithm, the operator $L$ is decomposed as 
\[ L = \begin{pmatrix} L_{\floor*{n/2}} &  \\ A & L_{\ceil*{n/2}} \end{pmatrix}\]
where $L_{\floor*{n/2}}$ and $L_{\ceil*{n/2}}$ are triangular
operators of  respective sizes $\floor*{n/2}$ and $\ceil*{n/2}$ and $A$ is any Boolean matrix of size 
$\floor*{n/2} \times \ceil*{n/2}$. The algorithm initially synthesizes in parallel both triangular suboperators by applying recursively the algorithm. Then we are left with the operator
\[ L' = \begin{pmatrix} I &  \\ A' & I \end{pmatrix} \]
to synthesize.

\smallskip
This is done by considering the following blocks in $L$: 
\begin{itemize}
  \item the northwest identity operator is seen as a block diagonal operator with $n/\log_2(n)$ blocks of size $\log_2(n)/2$, noted $B_1, ..., B_{n/\log_2(n)}$,
  \item $A$ is divided into $\log_2(n)/2$ blocks of $n/\log_2(n)$ rows, noted $A_1, ..., A_{\log_2(n)/2}$. For simplicity we consider each $A_i$ as a matrix $C_i \in (F_2^{\log_2(n)/2})^{\frac{n}{\log_2(n)} \times \frac{n}{\log_2(n)}}$, i.e., we see $A_i$ as a $\frac{n}{\log_2(n)} \times \frac{n}{\log_2(n)}$ matrix with elements from $F_2^{\log_2(n)/2}$.
\end{itemize}

The specific structure of $L$ is summarized in Fig~\ref{jiang}. The synthesis of $L$ consists in successive applications of two stages of row operations: 
\begin{enumerate}
  \item row operations on the $B_i$'s such that specific words of $\log_2(n)/2$ bits appear on each row,
  \item row operations between the $B_i$'s and the $A_i$'s to zero words of $\log_2(n)/2$ bits of $A$.
\end{enumerate}

More precisely, the $k$-th row of a $B_i$ has to zero the $i$-th
column of $C_k$. For that a sequence of operators on $\log_2(n)/2$
qubits is computed such that the property "Every word on $\log_2(n)/2$
bits appears on each row of the $B_i$'s" is verified. Such sequence of
operators is called a row traversal sequence. The operators of the row
traversal sequence are computed in parallel on each $B_i$ via the row
operations during Stage 1. Once we have the desired operator on each
$B_i$, we need to compute the appropriate row operations of Stage
2. Each row of a $B_i$ act on a specific $C_k$ so the corresponding
row operations can be done in parallel. So all we need is to see how
to coordinate the row operations acting on the same $C_k$. For
simplicity consider the case $k=1$, i.e., all the first rows of each
$B_i$ are used to zero $C_1$. Given that all the $B_i$'s are identical,
the choice of applying a row operation from block $B_i$ to the $k$-th
row of $C_1$ is: is $C[k,i]$ equal to $B_*[1,:]$, where the index on
$B$ has been omitted to emphasize that all the $B_i$'s are equal.

Therefore the matrix $C_1$ can be seen as the adjacency matrix $P$ of a bipartite graph $G$ where $P[k,i]=1$ if $C_1[k,i] = B_*[1,:]$. A sequence of parallel row operations between the $B_i$'s and $C_1$ corresponds to a matching in $G$ and a "good" sequence of parallel row operations is given by a matching decomposition of $G$. A central theorem that we will also use in our own work is the following: if the maximum number of $1$ in a row or a column of $P$ is $p$ then there exists a decomposition of $G$ into $p$ matchings, i.e., a sequence of $p$ parallel row operations is necessary to zero all the entries of $C$ equal to $B_*[1,:]$. Given that each word appears on each row of the $B_i$'s we are ensured that $A$ will be zero at the end of the algorithm. 

\medskip
To conclude, we need to assume that $A$ is sufficiently random, if it is not the case one can decompose $A = A' \oplus {A''}$ with $A', {A''}$ sufficiently random and do the process two times, the first time for adding $A'$ and the second time for adding ${A''}$. 
The depth $d(n)$ for the synthesis of one triangular operator is
therefore given by the equation shown in Figure~\ref{fig:equation}.
\begin{figure*}[!bt]
\[ d(n) = d(n/2) + 2 \times \text{ length row traversal sequence }
  \times \left( \underbrace{d(\log_2(n))}_{\text{synthesize the
        operator $B_i$}} + \text{ size matching decomposition}
  \right)\]
\caption{Depth $d(n)$ for the synthesis of one triangular operator}
\label{fig:equation}
\end{figure*}

\paragraph{}
Finally, the authors \cite{DBLP:conf/soda/JiangSTW0Z20} have shown
that the length of the row traversal sequence is
$\mathcal{O}(\sqrt{n})$ and if $A$ is sufficiently random at each
iteration the matching decomposition is of size
$\mathcal{O}(\sqrt{n}/\log_2(n))$. Therefore
\begin{align*} d(n) & = d(n/2) + \mathcal{O}(\sqrt{n}) \times (
                      d(\log_2(n)) + \mathcal{O}(\sqrt{n}/\log_2(n)))
  \\ 
                    & = \alpha(n/\log_2(n)) + \beta(\sqrt{n}/\log_2(n))) \end{align*}
hence the result.

\medskip
Let us now compute an estimation for $\alpha, \beta$. As a
divide-and-conquer framework, the authors have derived a recursive
formula in the case of triangular operators. Noting $d(n)$ for the
depth we have
\[ 
d(n) \leq d(n/2) + 2 \times \mathcal{O}(\sqrt{n}) \times (
  \mathcal{O}(\log_2(n)) + \mathcal{O}(\sqrt{n}/\log_2(n))).
\] 

Note that we think there is a typo in their formula, the term
$\mathcal{O}(\log_2(n))$ being $\mathcal{O}({\log^3}_2(n))$ in their
paper. This term corresponds to the synthesis of an operator of size
$\log_2(n)/2$. Assuming that we use the best algorithm, i.e., the
adaptation of Kutin \textit{et al.'}s algorithm
\cite{DBLP:journals/cjtcs/KutinMS07} we proposed in
Appendix~\ref{cnot::soa_depth}, each of these operators can be
synthesized with a circuit of depth at most
$2 \times \log_2(n)/2 = \log_2(n).$ We may have missed something but
this improves the real complexity so we keep our proposed
modification. The second term $\mathcal{O}(\sqrt{n}/\log_2(n)))$
corresponds to the matching decomposition of a graph and the authors
showed that the leading coefficient is $\sqrt{e}$. The third term,
$\mathcal{O}(\sqrt{n})$, corresponds to the length of the
row-traversal sequence on $k$ qubits that gives a sequence of
$k$-qubit operators such that for any bitstring of size $k$ and any
integer $j \in \llbracket 1,k \rrbracket$, there is an operator in the
sequence whose $j$-th row equals the bitstring. The authors proved
that there exists a row-traversal sequence on $k$ qubits of length
$3 \times 2^{k-1} - k + 1$. Here we have $k = \log_2(n)/2$ and
$\mathcal{O}(\sqrt{n}) = 3/2 \times \sqrt{n}$.

We therefore finally get
\begin{align*} d(n) &\leq d(n/2) + 3 \times \sqrt{n} \times \left( \log_2(n) + \sqrt{ne}/\log_2(n) \right) \\ &\leq d(n/2) + 3\sqrt{n}\log_2(n) + 3\frac{n\sqrt{e}}{\log_2(n)}\end{align*}
and
\[d(n) \leq 3 \times \left( \sum_{j=0}^{\log_2(n)-1} \sqrt{\frac{n}{2^j}} \log_2\left(\frac{n}{2^j}\right) + \frac{\sqrt{e}\frac{n}{2^j}}{\log_2\left(\frac{n}{2^j}\right)}  \right). \]
After simplification we have 
\[ d(n) \leq 3 \times \left(2\sqrt{e} \frac{n}{\log_2(n)} + 3.3 \sqrt{n}\log_2(n) \right)\]
and we have to do it for the two triangular operators given by the LU decomposition. Overall 
\[ \text{depth \cite{DBLP:conf/soda/JiangSTW0Z20}} \leq 20 \left(\frac{n}{\log_2(n)} + \sqrt{n}\log_2(n) \right).\]

Although it is only an upper bound, in practice there is little simplification one can make when synthesizing a specific operator: the row-traversal sequence still needs to be synthesized entirely and the matching decomposition is done on random graphs so we cannot expect the exact complexity to be that lower compared to the upper bound.

\end{document}